\tikzstyle{filled}=[fill=white]
\tikzstyle{bwSpider}=[
 \tikzstyle{wbSpider}=[
\tikzstyle{qWire}=[line width = 1pt, black]
\tikzstyle{cWire}=[color=gray,thin]
\tikzstyle{env}=[copoint,regular polygon rotate=0,minimum width=0.2cm, fill=black]
\tikzstyle{probs}=[shape=semicircle,fill=white,draw=black,shape border rotate=180,minimum width=1.2cm]
\tikzstyle{disc}=[circuit ee IEC,thick,ground,rotate=90,scale=1.5]
\tikzstyle{every picture}=[baseline=-0.25em,scale=0.5]
\tikzstyle{dotpic}=[] 
\tikzstyle{diredges}=[every to/.style={diredge}]
\tikzstyle{math matrix}=[matrix of math nodes,left delimiter=(,right delimiter=),inner sep=2pt,column sep=1em,row sep=0.5em,nodes={inner sep=0pt},text height=1.5ex, text depth=0.25ex]
\tikzstyle{inline text}=[text height=1.5ex, text depth=0.25ex,yshift=0.5mm]
\tikzstyle{label}=[font=\footnotesize,text height=1.5ex, text depth=0.25ex,yshift=0.5mm]
\tikzstyle{left label}=[label,anchor=east,xshift=1.5mm]
\tikzstyle{right label}=[label,anchor=west,xshift=-0.5mm]
\tikzstyle{braceedge}=[decorate,decoration={brace,amplitude=2mm,raise=-1mm}]
\tikzstyle{small braceedge}=[decorate,decoration={brace,amplitude=1mm,raise=-1mm}]
\tikzstyle{doubled}=[line width=1.6pt] 
\tikzstyle{boldedge}=[doubled,shorten <=-0.17mm,shorten >=-0.17mm]
\tikzstyle{boldedgegray}=[doubled,gray,shorten <=-0.17mm,shorten >=-0.17mm]
\tikzstyle{singleedgegray}=[gray]
\tikzstyle{semidoubled}=[line width=1.4pt] 
\tikzstyle{semiboldedgegray}=[semidoubled,gray,shorten <=-0.17mm,shorten >=-0.17mm]
\tikzstyle{boxedge}=[semiboldedgegray]
\tikzstyle{boldedgedashed}=[very thick,dashed,shorten <=-0.17mm,shorten >=-0.17mm]
\tikzstyle{vboldedgedashed}=[doubled,dashed,shorten <=-0.17mm,shorten >=-0.17mm]
\tikzstyle{left hook arrow}=[left hook-latex]
\tikzstyle{right hook arrow}=[right hook-latex]
\tikzstyle{sembracket}=[line width=0.5pt,shorten <=-0.07mm,shorten >=-0.07mm]
\tikzstyle{causal edge}=[->,thick,gray]
\tikzstyle{causal nondir}=[thick,gray]
\tikzstyle{timeline}=[thick,gray, dashed]
\tikzstyle{cedge}=[<->,thick,gray!70!white]
\tikzstyle{empty diagram}=[draw=gray!40!white,dashed,shape=rectangle,minimum width=1cm,minimum height=1cm]
\tikzstyle{empty diagram small}=[draw=gray!50!white,dashed,shape=rectangle,minimum width=0.6cm,minimum height=0.5cm]
\tikzstyle{dot}=[inner sep=0mm,minimum width=2mm,minimum height=2mm,draw,shape=circle]
\tikzstyle{leak}=[white dot, shape=regular polygon, minimum size=3.3 mm, regular polygon sides=3, outer sep=-0.2mm, regular polygon rotate=270]
\tikzstyle{proj}=[regular polygon,regular polygon sides=4,draw,scale=0.75,inner sep=-0.5pt,minimum width=6mm,fill=white]
\tikzstyle{projOut}=[regular polygon,regular polygon sides=3,draw,scale=0.75,inner sep=-0.5pt,minimum width=7.5mm,fill=white,regular polygon rotate=180]
\tikzstyle{projIn}=[regular polygon,regular polygon sides=3,draw,scale=0.75,inner sep=-0.5pt,minimum width=7.5mm,fill=white]
\tikzstyle{Vleak}=[white dot, shape=regular polygon, minimum size=3.3 mm, regular polygon sides=3, outer sep=-0.2mm, regular polygon rotate=90]
\tikzstyle{dleak}=[white dot, line width=1.6pt, shape=regular polygon, minimum size=3.3 mm, regular polygon sides=3, outer sep=-0.2mm, regular polygon rotate=270]
\tikzstyle{Wsquare}=[white dot, shape=regular polygon, rounded corners=0.8 mm, minimum size=3.3 mm, regular polygon sides=3, outer sep=-0.2mm]
\tikzstyle{Wsquareadj}=[white dot, shape=regular polygon, rounded corners=0.8 mm, minimum size=3.3 mm, regular polygon sides=3, outer sep=-0.2mm, regular polygon rotate=180]
\tikzstyle{ddot}=[inner sep=0mm, doubled, minimum width=2.5mm,minimum height=2.5mm,draw,shape=circle]
\tikzstyle{black dot}=[dot,fill=black]
\tikzstyle{white dot}=[dot,fill=white,,text depth=-0.2mm]
\tikzstyle{white Wsquare}=[Wsquare,fill=gray,,text depth=-0.2mm]
\tikzstyle{white Wsquareadj}=[Wsquareadj,fill=white,,text depth=-0.2mm]
\tikzstyle{green dot}=[white dot] 
\tikzstyle{gray dot}=[dot,fill=gray!40!white,,text depth=-0.2mm]
\tikzstyle{red dot}=[gray dot] 
\tikzstyle{black ddot}=[ddot,fill=black]
\tikzstyle{white ddot}=[ddot,fill=white]
\tikzstyle{gray ddot}=[ddot,fill=gray!40!white]
\tikzstyle{gray edge}=[gray!60!white]
\tikzstyle{small dot}=[inner sep=0.5mm,minimum width=0pt,minimum height=0pt,draw,shape=circle]
\tikzstyle{small black dot}=[small dot,fill=black]
\tikzstyle{small white dot}=[small dot,fill=white]
\tikzstyle{small gray dot}=[small dot,fill=gray!40!white]
\tikzstyle{causal dot}=[inner sep=0.4mm,minimum width=0pt,minimum height=0pt,draw=white,shape=circle,fill=gray!40!white]
\tikzstyle{phase dimensions}=[minimum size=5mm,font=\footnotesize,rectangle,rounded corners=2.5mm,inner sep=0.2mm,outer sep=-2mm]
\tikzstyle{dphase dimensions}=[minimum size=5mm,font=\footnotesize,rectangle,rounded corners=2.5mm,inner sep=0.2mm,outer sep=-2mm]
\tikzstyle{white phase dot}=[dot,fill=white,phase dimensions]
\tikzstyle{white phase ddot}=[ddot,fill=white,dphase dimensions]
\tikzstyle{white rect ddot}=[draw=black,fill=white,doubled,minimum size=5mm,font=\footnotesize,rectangle,rounded corners=2.5mm,inner sep=0.2mm]
\tikzstyle{gray rect ddot}=[draw=black,fill=gray!40!white,doubled,minimum size=6mm,font=\footnotesize,rectangle,rounded corners=3mm]
\tikzstyle{gray phase dot}=[dot,fill=gray!40!white,phase dimensions]
\tikzstyle{gray phase ddot}=[ddot,fill=gray!40!white,dphase dimensions]
\tikzstyle{grey phase dot}=[gray phase dot]
\tikzstyle{grey phase ddot}=[gray phase ddot]
\tikzstyle{small phase dimensions}=[minimum size=4mm,font=\tiny,rectangle,rounded corners=2mm,inner sep=0.2mm,outer sep=-2mm]
\tikzstyle{small dphase dimensions}=[minimum size=4mm,font=\tiny,rectangle,rounded corners=2mm,inner sep=0.2mm,outer sep=-2mm]
\tikzstyle{small gray phase dot}=[dot,fill=gray!40!white,small phase dimensions]
\tikzstyle{small gray phase ddot}=[ddot,fill=gray!40!white,small dphase dimensions]
\tikzstyle{small map}=[draw,shape=rectangle,minimum height=4mm,minimum width=4mm,fill=white]
\tikzstyle{cnot}=[fill=white,shape=circle,inner sep=-1.4pt]
\tikzstyle{asym hadamard}=[fill=white,draw,shape=NEbox,inner sep=0.6mm,font=\footnotesize,minimum height=4mm]
\tikzstyle{asym hadamard conj}=[fill=white,draw,shape=NWbox,inner sep=0.6mm,font=\footnotesize,minimum height=4mm]
\tikzstyle{asym hadamard dag}=[fill=white,draw,shape=SEbox,inner sep=0.6mm,font=\footnotesize,minimum height=4mm]
\tikzstyle{hadamard}=[fill=white,draw,inner sep=0.6mm,font=\footnotesize,minimum height=4mm,minimum width=4mm]
\tikzstyle{small hadamard}=[fill=white,draw,inner sep=0.6mm,minimum height=1.5mm,minimum width=1.5mm]
\tikzstyle{small hadamard rotate}=[small hadamard,rotate=45]
\tikzstyle{dhadamard}=[hadamard,doubled]
\tikzstyle{small dhadamard}=[small hadamard,doubled]
\tikzstyle{small dhadamard rotate}=[small hadamard rotate,doubled]
\tikzstyle{antipode}=[white dot,inner sep=0.3mm,font=\footnotesize]
\tikzstyle{scalar}=[diamond,draw,inner sep=0.5pt,font=\small]
\tikzstyle{dscalar}=[diamond,doubled, draw,inner sep=0.5pt,font=\small]
\tikzstyle{small box}=[rectangle,inline text,fill=white,draw,minimum height=5mm,yshift=-0.5mm,minimum width=5mm,font=\small]
\tikzstyle{small gray box}=[small box,fill=gray!30]
\tikzstyle{medium box}=[rectangle,inline text,fill=white,draw,minimum height=5mm,yshift=-0.5mm,minimum width=10mm,font=\small]
\tikzstyle{square box}=[small box] 
\tikzstyle{medium gray box}=[small box,fill=gray!30]
\tikzstyle{semilarge box}=[rectangle,inline text,fill=white,draw,minimum height=5mm,yshift=-0.5mm,minimum width=12.5mm,font=\small]
\tikzstyle{large box}=[rectangle,inline text,fill=white,draw,minimum height=5mm,yshift=-0.5mm,minimum width=15mm,font=\small]
\tikzstyle{large gray box}=[small box,fill=gray!30]
\tikzstyle{Bayes box}=[rectangle,fill=black,draw, minimum height=3mm, minimum width=3mm]
\tikzstyle{gray square point}=[small box,fill=gray!50]
\tikzstyle{dphase box white}=[dhadamard]
\tikzstyle{dphase box gray}=[dhadamard,fill=gray!50!white]
\tikzstyle{phase box white}=[hadamard]
\tikzstyle{phase box gray}=[hadamard,fill=gray!50!white]
\tikzstyle{point}=[regular polygon,regular polygon sides=3,draw,scale=0.75,inner sep=-0.5pt,minimum width=9mm,fill=white,regular polygon rotate=180]
\tikzstyle{point nosep}=[regular polygon,regular polygon sides=3,draw,scale=0.75,inner sep=-2pt,minimum width=9mm,fill=white,regular polygon rotate=180]
\tikzstyle{copoint}=[regular polygon,regular polygon sides=3,draw,scale=0.75,inner sep=-0.5pt,minimum width=9mm,fill=white]
\tikzstyle{dpoint}=[point,doubled]
\tikzstyle{dcopoint}=[copoint,doubled]
\tikzstyle{pointgrow}=[shape=cornerpoint,kpoint common,scale=0.75,inner sep=3pt]
\tikzstyle{pointgrow dag}=[shape=cornercopoint,kpoint common,scale=0.75,inner sep=3pt]
\tikzstyle{wide copoint}=[fill=white,draw,shape=isosceles triangle,shape border rotate=90,isosceles triangle stretches=true,inner sep=0pt,minimum width=1.5cm,minimum height=6.12mm]
\tikzstyle{wide point}=[fill=white,draw,shape=isosceles triangle,shape border rotate=-90,isosceles triangle stretches=true,inner sep=0pt,minimum width=1.5cm,minimum height=6.12mm,yshift=-0.0mm]
\tikzstyle{wide point plus}=[fill=white,draw,shape=isosceles triangle,shape border rotate=-90,isosceles triangle stretches=true,inner sep=0pt,minimum width=1.74cm,minimum height=7mm,yshift=-0.0mm]
\tikzstyle{wide dpoint}=[fill=white,doubled,draw,shape=isosceles triangle,shape border rotate=-90,isosceles triangle stretches=true,inner sep=0pt,minimum width=1.5cm,minimum height=6.12mm,yshift=-0.0mm]
\tikzstyle{tinypoint}=[regular polygon,regular polygon sides=3,draw,scale=0.55,inner sep=-0.15pt,minimum width=6mm,fill=white,regular polygon rotate=180]
\tikzstyle{white point}=[point]
\tikzstyle{white dpoint}=[dpoint]
\tikzstyle{green point}=[white point] 
\tikzstyle{white copoint}=[copoint]
\tikzstyle{gray point}=[point,fill=gray!40!white]
\tikzstyle{gray dpoint}=[gray point,doubled]
\tikzstyle{red point}=[gray point] 
\tikzstyle{gray copoint}=[copoint,fill=gray!40!white]
\tikzstyle{gray dcopoint}=[gray copoint,doubled]
\tikzstyle{white point guide}=[regular polygon,regular polygon sides=3,font=\scriptsize,draw,scale=0.65,inner sep=-0.5pt,minimum width=9mm,fill=white,regular polygon rotate=180]
\tikzstyle{black point}=[point,fill=black,font=\color{white}]
\tikzstyle{black copoint}=[copoint,fill=black,font=\color{white}]
\tikzstyle{tiny gray point}=[tinypoint,fill=gray!40!white]
\tikzstyle{diredge}=[->]
\tikzstyle{ddiredge}=[<->]
\tikzstyle{rdiredge}=[<-]
\tikzstyle{thickdiredge}=[->, very thick]
\tikzstyle{pointer edge}=[->,very thick,gray]
\tikzstyle{pointer edge part}=[very thick,gray]
\tikzstyle{dashed edge}=[dashed]
\tikzstyle{thick dashed edge}=[very thick,dashed]
\tikzstyle{thick gray dashed edge}=[thick dashed edge,gray!40]
\tikzstyle{thick map edge}=[very thick,|->]
\newcommand{\boxshape}[3]{%
\pgfdeclareshape{#1}{
\inheritsavedanchors[from=rectangle] 
\inheritanchorborder[from=rectangle]
\inheritanchor[from=rectangle]{center}
\inheritanchor[from=rectangle]{north}
\inheritanchor[from=rectangle]{south}
\inheritanchor[from=rectangle]{west}
\inheritanchor[from=rectangle]{east}
\backgroundpath{
\southwest \pgf@xa=\pgf@x \pgf@ya=\pgf@y
\northeast \pgf@xb=\pgf@x \pgf@yb=\pgf@y

\@tempdima=#2
\@tempdimb=#3

\pgfpathmoveto{\pgfpoint{\pgf@xa - 5pt + \@tempdima}{\pgf@ya}}
\pgfpathlineto{\pgfpoint{\pgf@xa - 5pt - \@tempdima}{\pgf@yb}}
\pgfpathlineto{\pgfpoint{\pgf@xb + 5pt + \@tempdimb}{\pgf@yb}}
\pgfpathlineto{\pgfpoint{\pgf@xb + 5pt - \@tempdimb}{\pgf@ya}}
\pgfpathlineto{\pgfpoint{\pgf@xa - 5pt + \@tempdima}{\pgf@ya}}
\pgfpathclose
}
}}
\tikzstyle{cloud}=[shape=cloud,draw,minimum width=1.5cm,minimum height=1.5cm]
\tikzstyle{map}=[draw,shape=NEbox,inner sep=1pt,minimum height=4mm,fill=white]
\tikzstyle{dashedmap}=[draw,dashed,shape=NEbox,inner sep=2pt,minimum height=6mm,fill=white]
\tikzstyle{mapdag}=[draw,shape=SEbox,inner sep=1pt,minimum height=4mm,fill=white]
\tikzstyle{mapadj}=[draw,shape=SEbox,inner sep=2pt,minimum height=6mm,fill=white]
\tikzstyle{maptrans}=[draw,shape=SWbox,inner sep=2pt,minimum height=6mm,fill=white]
\tikzstyle{mapconj}=[draw,shape=NWbox,inner sep=2pt,minimum height=6mm,fill=white]
\tikzstyle{medium map}=[draw,shape=NEbox,inner sep=2pt,minimum height=6mm,fill=white,minimum width=7mm]
\tikzstyle{medium map dag}=[draw,shape=SEbox,inner sep=2pt,minimum height=6mm,fill=white,minimum width=7mm]
\tikzstyle{medium map adj}=[draw,shape=SEbox,inner sep=2pt,minimum height=6mm,fill=white,minimum width=7mm]
\tikzstyle{medium map trans}=[draw,shape=SWbox,inner sep=2pt,minimum height=6mm,fill=white,minimum width=7mm]
\tikzstyle{medium map conj}=[draw,shape=NWbox,inner sep=2pt,minimum height=6mm,fill=white,minimum width=7mm]
\tikzstyle{semilarge map}=[draw,shape=NEbox,inner sep=2pt,minimum height=6mm,fill=white,minimum width=9.5mm]
\tikzstyle{semilarge map trans}=[draw,shape=SWbox,inner sep=2pt,minimum height=6mm,fill=white,minimum width=9.5mm]
\tikzstyle{semilarge map adj}=[draw,shape=SEbox,inner sep=2pt,minimum height=6mm,fill=white,minimum width=9.5mm]
\tikzstyle{semilarge map dag}=[draw,shape=SEbox,inner sep=2pt,minimum height=6mm,fill=white,minimum width=9.5mm]
\tikzstyle{semilarge map conj}=[draw,shape=NWbox,inner sep=2pt,minimum height=6mm,fill=white,minimum width=9.5mm]
\tikzstyle{large map}=[draw,shape=NEbox,inner sep=2pt,minimum height=6mm,fill=white,minimum width=12mm]
\tikzstyle{large map conj}=[draw,shape=NWbox,inner sep=2pt,minimum height=6mm,fill=white,minimum width=12mm]
\tikzstyle{very large map}=[draw,shape=NEbox,inner sep=2pt,minimum height=6mm,fill=white,minimum width=17mm]
\tikzstyle{very very large map}=[draw,shape=NEbox,inner sep=2pt,minimum height=6mm,fill=white,minimum width=30mm]
\tikzstyle{medium dmap}=[draw,doubled,shape=NEbox,inner sep=2pt,minimum height=6mm,fill=white,minimum width=7mm]
\tikzstyle{medium dmap dag}=[draw,doubled,shape=SEbox,inner sep=2pt,minimum height=6mm,fill=white,minimum width=7mm]
\tikzstyle{medium dmap adj}=[draw,doubled,shape=SEbox,inner sep=2pt,minimum height=6mm,fill=white,minimum width=7mm]
\tikzstyle{medium dmap trans}=[draw,doubled,shape=SWbox,inner sep=2pt,minimum height=6mm,fill=white,minimum width=7mm]
\tikzstyle{medium dmap conj}=[draw,doubled,shape=NWbox,inner sep=2pt,minimum height=6mm,fill=white,minimum width=7mm]
\tikzstyle{semilarge dmap}=[draw,doubled,shape=NEbox,inner sep=2pt,minimum height=6mm,fill=white,minimum width=9.5mm]
\tikzstyle{semilarge dmap trans}=[draw,doubled,shape=SWbox,inner sep=2pt,minimum height=6mm,fill=white,minimum width=9.5mm]
\tikzstyle{semilarge dmap adj}=[draw,doubled,shape=SEbox,inner sep=2pt,minimum height=6mm,fill=white,minimum width=9.5mm]
\tikzstyle{semilarge dmap dag}=[draw,doubled,shape=SEbox,inner sep=2pt,minimum height=6mm,fill=white,minimum width=9.5mm]
\tikzstyle{semilarge dmap conj}=[draw,doubled,shape=NWbox,inner sep=2pt,minimum height=6mm,fill=white,minimum width=9.5mm]
\tikzstyle{large dmap}=[draw,doubled,shape=NEbox,inner sep=2pt,minimum height=6mm,fill=white,minimum width=12mm]
\tikzstyle{large dmap conj}=[draw,doubled,shape=NWbox,inner sep=2pt,minimum height=6mm,fill=white,minimum width=12mm]
\tikzstyle{large dmap trans}=[draw,doubled,shape=SWbox,inner sep=2pt,minimum height=6mm,fill=white,minimum width=12mm]
\tikzstyle{large dmap adj}=[draw,doubled,shape=SEbox,inner sep=2pt,minimum height=6mm,fill=white,minimum width=12mm]
\tikzstyle{large dmap dag}=[draw,doubled,shape=SEbox,inner sep=2pt,minimum height=6mm,fill=white,minimum width=12mm]
\tikzstyle{very large dmap}=[draw,doubled,shape=NEbox,inner sep=2pt,minimum height=6mm,fill=white,minimum width=19.5mm]
\tikzstyle{muxbox}=[draw,shape=rectangle,minimum height=3mm,minimum width=3mm,fill=white]
\tikzstyle{dmuxbox}=[muxbox,doubled]
\tikzstyle{box}=[draw,shape=rectangle,inner sep=2pt,minimum height=6mm,minimum width=6mm,fill=white]
\tikzstyle{dbox}=[draw,doubled,shape=rectangle,inner sep=2pt,minimum height=6mm,minimum width=6mm,fill=white]
\tikzstyle{dmap}=[draw,doubled,shape=NEbox,inner sep=2pt,minimum height=6mm,fill=white]
\tikzstyle{dmapdag}=[draw,doubled,shape=SEbox,inner sep=2pt,minimum height=6mm,fill=white]
\tikzstyle{dmapadj}=[draw,doubled,shape=SEbox,inner sep=2pt,minimum height=6mm,fill=white]
\tikzstyle{dmaptrans}=[draw,doubled,shape=SWbox,inner sep=2pt,minimum height=6mm,fill=white]
\tikzstyle{dmapconj}=[draw,doubled,shape=NWbox,inner sep=2pt,minimum height=6mm,fill=white]
\tikzstyle{ddmap}=[draw,doubled,dashed,shape=NEbox,inner sep=2pt,minimum height=6mm,fill=white]
\tikzstyle{ddmapdag}=[draw,doubled,dashed,shape=SEbox,inner sep=2pt,minimum height=6mm,fill=white]
\tikzstyle{ddmapadj}=[draw,doubled,dashed,shape=SEbox,inner sep=2pt,minimum height=6mm,fill=white]
\tikzstyle{ddmaptrans}=[draw,doubled,dashed,shape=SWbox,inner sep=2pt,minimum height=6mm,fill=white]
\tikzstyle{ddmapconj}=[draw,doubled,dashed,shape=NWbox,inner sep=2pt,minimum height=6mm,fill=white]
\tikzstyle{smap}=[draw,shape=sNEbox,fill=white]
\tikzstyle{smapdag}=[draw,shape=sSEbox,fill=white]
\tikzstyle{smapadj}=[draw,shape=sSEbox,fill=white]
\tikzstyle{smaptrans}=[draw,shape=sSWbox,fill=white]
\tikzstyle{smapconj}=[draw,shape=sNWbox,fill=white]
\tikzstyle{dsmap}=[draw,dashed,shape=sNEbox,fill=white]
\tikzstyle{dsmapdag}=[draw,dashed,shape=sSEbox,fill=white]
\tikzstyle{dsmaptrans}=[draw,dashed,shape=sSWbox,fill=white]
\tikzstyle{dsmapconj}=[draw,dashed,shape=sNWbox,fill=white]
\tikzstyle{mmap}=[draw,shape=mNEbox]
\tikzstyle{mmapdag}=[draw,shape=mSEbox]
\tikzstyle{mmaptrans}=[draw,shape=mSWbox]
\tikzstyle{mmapconj}=[draw,shape=mNWbox]
\tikzstyle{mmapgray}=[draw,fill=gray!40!white,shape=mNEbox]
\tikzstyle{smapgray}=[draw,fill=gray!40!white,shape=sNEbox]
\pgfmathsetmacro{\pgf@shorten@left}{\pgfkeysvalueof{/tikz/shorten left}}
\pgfmathsetmacro{\pgf@shorten@right}{\pgfkeysvalueof{/tikz/shorten right}}
\pgfmathsetmacro{\pgf@shorten@left}{\pgfkeysvalueof{/tikz/shorten left}}
\pgfmathsetmacro{\pgf@shorten@right}{\pgfkeysvalueof{/tikz/shorten right}}
\tikzstyle{kpoint common}=[draw,fill=white,inner sep=1pt,minimum height=4mm]
\tikzstyle{kpoint sc}=[shape=cornerpoint,kpoint common]
\tikzstyle{kpoint adjoint sc}=[shape=cornercopoint,kpoint common]
\tikzstyle{kpoint}=[shape=cornerpoint,shorten left=5pt,kpoint common]
\tikzstyle{kpoint adjoint}=[shape=cornercopoint,shorten left=5pt,kpoint common]
\tikzstyle{kpoint conjugate}=[shape=cornerpoint,shorten right=5pt,kpoint common]
\tikzstyle{kpoint transpose}=[shape=cornercopoint,shorten right=5pt,kpoint common]
\tikzstyle{kpoint symm}=[shape=cornerpoint,shorten left=5pt,shorten right=5pt,kpoint common]
\tikzstyle{wide kpoint sc}=[shape=cornerpoint,kpoint common, minimum width=1 cm]
\tikzstyle{wide kpointdag sc}=[shape=cornercopoint,kpoint common, minimum width=1 cm]
\tikzstyle{black kpoint}=[shape=cornerpoint,shorten left=5pt,kpoint common,fill=black,font=\color{white}]
\tikzstyle{black kpoint sm}=[shape=cornerpoint,shorten left=5pt,kpoint common,fill=black,font=\color{white},scale=0.75]
\tikzstyle{black kpoint adjoint}=[shape=cornercopoint,shorten left=5pt,kpoint common,fill=black,font=\color{white}]
\tikzstyle{black kpointadj}=[shape=cornercopoint,shorten left=5pt,kpoint common,fill=black,font=\color{white}]
\tikzstyle{black kpointadj sm}=[shape=cornercopoint,shorten left=5pt,kpoint common,fill=black,font=\color{white},scale=0.75]
\tikzstyle{black dkpoint}=[shape=cornerpoint,shorten left=5pt,kpoint common,fill=black, doubled,font=\color{white}]
\tikzstyle{black dkpoint adjoint}=[shape=cornercopoint,shorten left=5pt,kpoint common,fill=black, doubled,font=\color{white}]
\tikzstyle{black dkpointadj}=[shape=cornercopoint,shorten left=5pt,kpoint common,fill=black, doubled,font=\color{white}]
\tikzstyle{black dkpoint sm}=[shape=cornerpoint,shorten left=5pt,kpoint common,fill=black, doubled,font=\color{white},scale=0.75]
\tikzstyle{black dkpointadj sm}=[shape=cornercopoint,shorten left=5pt,kpoint common,fill=black, doubled,font=\color{white},scale=0.75]
\tikzstyle{kpointdag}=[kpoint adjoint]
\tikzstyle{kpointadj}=[kpoint adjoint]
\tikzstyle{kpointconj}=[kpoint conjugate]
\tikzstyle{kpointtrans}=[kpoint transpose]
\tikzstyle{big kpoint}=[kpoint, minimum width=1.2 cm, minimum height=8mm, inner sep=4pt, text depth=3mm]
\tikzstyle{wide kpoint}=[kpoint, minimum width=1 cm, inner sep=2pt]
\tikzstyle{wide kpointdag}=[kpointdag, minimum width=1 cm, inner sep=2pt]
\tikzstyle{wide kpointconj}=[kpointconj, minimum width=1 cm, inner sep=2pt]
\tikzstyle{wide kpointtrans}=[kpointtrans, minimum width=1 cm, inner sep=2pt]
\tikzstyle{wider kpoint}=[kpoint, minimum width=1.25 cm, inner sep=2pt]
\tikzstyle{wider kpointdag}=[kpointdag, minimum width=1.25 cm, inner sep=2pt]
\tikzstyle{wider kpointconj}=[kpointconj, minimum width=1.25 cm, inner sep=2pt]
\tikzstyle{wider kpointtrans}=[kpointtrans, minimum width=1.25 cm, inner sep=2pt]
\tikzstyle{gray kpoint}=[kpoint,fill=gray!50!white]
\tikzstyle{gray kpointdag}=[kpointdag,fill=gray!50!white]
\tikzstyle{gray kpointadj}=[kpointadj,fill=gray!50!white]
\tikzstyle{gray kpointconj}=[kpointconj,fill=gray!50!white]
\tikzstyle{gray kpointtrans}=[kpointtrans,fill=gray!50!white]
\tikzstyle{gray dkpoint}=[kpoint,fill=gray!50!white,doubled]
\tikzstyle{gray dkpointdag}=[kpointdag,fill=gray!50!white,doubled]
\tikzstyle{gray dkpointadj}=[kpointadj,fill=gray!50!white,doubled]
\tikzstyle{gray dkpointconj}=[kpointconj,fill=gray!50!white,doubled]
\tikzstyle{gray dkpointtrans}=[kpointtrans,fill=gray!50!white,doubled]
\tikzstyle{white label}=[draw,fill=white,rectangle,inner sep=0.7 mm]
\tikzstyle{gray label}=[draw,fill=gray!50!white,rectangle,inner sep=0.7 mm]
\tikzstyle{black label}=[draw,fill=black,rectangle,inner sep=0.7 mm]
\tikzstyle{dkpoint}=[kpoint,doubled]
\tikzstyle{wide dkpoint}=[wide kpoint,doubled]
\tikzstyle{dkpointdag}=[kpoint adjoint,doubled]
\tikzstyle{wide dkpointdag}=[wide kpointdag,doubled]
\tikzstyle{dkcopoint}=[kpoint adjoint,doubled]
\tikzstyle{dkpointadj}=[kpoint adjoint,doubled]
\tikzstyle{dkpointconj}=[kpoint conjugate,doubled]
\tikzstyle{dkpointtrans}=[kpoint transpose,doubled]
\tikzstyle{kscalar}=[kpoint common, shape=EBox, inner xsep=-1pt, inner ysep=3pt,font=\small]
\tikzstyle{kscalarconj}=[kpoint common, shape=WBox, inner xsep=-1pt, inner ysep=3pt,font=\small]
\tikzstyle{spekpoint}=[kpoint sc,minimum height=5mm,inner sep=3pt]
\tikzstyle{spekcopoint}=[kpoint adjoint sc,minimum height=5mm,inner sep=3pt]
\tikzstyle{dspekpoint}=[spekpoint,doubled]
\tikzstyle{dspekcopoint}=[spekcopoint,doubled]
 \tikzstyle{upground}=[circuit ee IEC,thick,ground,rotate=90,scale=2.5]
 \tikzstyle{downground}=[circuit ee IEC,thick,ground,rotate=-90,scale=2.5]
 \tikzstyle{bigground}=[regular polygon,regular polygon sides=3,draw=gray,scale=0.50,inner sep=-0.5pt,minimum width=10mm,fill=gray]
 \tikzstyle{maxmix}=[regular polygon,regular polygon sides=3,draw=black,xscale=0.4,yscale=0.3,inner sep=-0.5pt,minimum width=10mm,fill=gray,regular polygon rotate=180]
\tikzstyle{arrs}=[-latex,font=\small,auto]
\tikzstyle{arrow plain}=[arrs]
\tikzstyle{arrow dashed}=[dashed,arrs]
\tikzstyle{arrow bold}=[very thick,arrs]
\tikzstyle{arrow hide}=[draw=white!0,-]
\tikzstyle{arrow reverse}=[latex-]
\tikzstyle{cdnode}=[]
\tikzstyle{supermap}=[fill=white]
\tikzstyle{tikzfig}=[baseline=-0.25em,scale=4]
\tikzstyle{none}=[inner sep=0mm]
\tikzstyle{every loop}=[]
\newcommand\abs[1]{\left|#1\right|}
\newcommand{\mr}{\mathrm}
\def\be{\begin{equation}}
\def\ee{\end{equation}}
\def\ba{\begin{align}}
\def\ea{\end{align}}
\newcommand{\C}{\ensuremath{\mathbb{C}}}
\DeclareMathOperator{\Tr}{Tr}
\newcommand{\ca}{\mathcal A}
\newcommand{\cb}{\mathcal B}
\newcommand{\cc}{\mathcal C}
\newcommand{\cd}{\mathcal D}
\newcommand{\ch}{\mathcal H}
\newcommand{\ci}{\mathcal I}
\newcommand{\ck}{\mathcal K}
\newcommand{\cl}{\mathcal L}
\newcommand{\cs}{\mathcal S}
\newcommand{\cu}{\mathcal U}
\newcommand{\cv}{\mathcal V}
\newcommand{\cw}{\mathcal W}
\newcommand{\cz}{\mathcal Z}
\def\C{{\mathbbm C}}
\newcommand{\spc}[1]{\mathcal{#1}}
\newcommand{\ketbra}[2]{\ket{#1} \!  \bra{#2}}
\newcommand{\Proof}{{\bf Proof. \,}}
\newcommand{\map}[1]{\mathcal{#1}}
\newtheorem{definition}{Definition}
\newtheorem{corollary}{Corollary}
\newtheorem{theorem}{Theorem}
\newtheorem{lemma}{Lemma}
\newcommand{\correction}[1]{{ #1}}
\mathchardef\mhyphen="2D
\begin{document}
\setlength{\textheight}{8.0truein}    

\runninghead{Universal control of quantum processes using sector-preserving channels}
            {Augustin Vanrietvelde and Giulio Chiribella}

\normalsize\textlineskip
\thispagestyle{empty}
\setcounter{page}{1320}

\copyrightheading{21}{15\&16}{2021}{1320--1352}

\vspace*{0.88truein}

\alphfootnote

\fpage{1320}

\centerline{\bf
UNIVERSAL CONTROL OF QUANTUM PROCESSES}
\vspace*{0.035truein}
\centerline{\bf 
USING SECTOR-PRESERVING CHANNELS}
\vspace*{0.37truein}
\centerline{\footnotesize
AUGUSTIN VANRIETVELDE
}
\vspace*{0.015truein}
\centerline{\footnotesize\it Quantum Group, Department of Computer Science, University of Oxford}
\baselineskip=10pt
\centerline{\footnotesize\it Department of Physics, Imperial College London}
\baselineskip=10pt
\centerline{\footnotesize\it HKU-Oxford Joint Laboratory for Quantum Information and Computation}\baselineskip=10pt
\centerline{\footnotesize\it a.vanrietvelde18@imperial.ac.uk}
\vspace*{10pt}
\centerline{\footnotesize 
GIULIO CHIRIBELLA 
}
\vspace*{0.015truein}
\centerline{\footnotesize\it QICI  Quantum  Information  and  Computation  Initiative,  Department  of  Computer  Science}
\baselineskip=10pt
\centerline{\footnotesize\it  The  University  of  Hong  Kong}
\baselineskip=10pt
\centerline{\footnotesize\it Quantum Group, Department of Computer Science, University of Oxford}
\baselineskip=10pt
\centerline{\footnotesize\it Perimeter  Institute  for  Theoretical  Physics, Waterloo}\baselineskip=10pt
\centerline{\footnotesize\it giulio@cs.hku.hk}
\vspace*{0.225truein}
\publisher{July 5, 2021}{October 6, 2021}

\vspace*{0.21truein}

\abstracts{
No quantum circuit can turn a completely unknown unitary gate into its coherently controlled version.  Yet,  coherent control of unknown gates has been realised in experiments, making use of a different type of initial resources. Here, we formalise the task achieved by these experiments,  extending it to the control of arbitrary  noisy channels, and to more general types of control involving higher dimensional control systems.   For the standard notion of coherent control, we identify the information-theoretic resource for controlling an arbitrary quantum channel on a $d$-dimensional system: specifically, the resource is  an extended quantum  channel acting as the original channel on a $d$-dimensional sector of a $(d+1)$-dimensional system.  Using this resource, arbitrary controlled channels can be built with a universal circuit architecture. We then extend the standard notion of control to more general notions, including  control of multiple channels with possibly different input and output systems.  Finally, we develop a theoretical framework,  called supermaps on routed channels, which provides a compact representation of coherent control as an operation performed on the extended channels, and highlights  the way the operation acts on different sectors.
}{}{}

\vspace*{10pt}

\keywords{coherent control, superposition of quantum channels, routed quantum circuits, sector-preserving channels}
\vspace*{3pt}
\communicate{R~Jozsa~\&~J~Eisert }

\vspace*{1pt}\textlineskip    

\section{Introduction} \label{sec:intro}

A number of quantum algorithms, such as Kitaev's phase estimation algorithm \cite{1995quant.ph.11026K} and the DQC1 trace estimation algorithm \cite{Knill1998},  are based on the use of controlled unitary gates.  Controlled gates represent a quantum version of the \textit{if-then} clause, in which a subroutine is executed depending on the value of a control variable. In the controlled gate ${\tt ctrl} \mhyphen U$, the quantum state of a control system determines whether or not a target system is subject to a given unitary gate $U$.  When the control system is in a superposition state, the target system experiences a coherent superposition of quantum evolutions \cite{aharonov1990superpositions}. Quantum programming languages that exploit  coherent control of quantum gates have been proposed in Refs.~\cite{altenkirch2005functional,Ying2016, sabry2018symmetric}.

\begin{figure}    \centering

\begin{subfigure}[b]{.85\linewidth}
\centering
\scalebox{.85}{$\mathlarger{\mathlarger{\mathlarger{\nexists}}} \quad %
\begin{tikzpicture}
	\begin{pgfonlayer}{nodelayer}
		\node [style=none] (0) at (-2.75, 3) {};
		\node [style=none] (1) at (-2.75, -3) {};
		\node [style=none] (2) at (2, -3) {};
		\node [style=none] (3) at (2, 3) {};
		\node [style=none] (4) at (-0.25, 1.75) {};
		\node [style=none] (5) at (-0.25, -1.75) {};
		\node [style=none] (6) at (1, 1.75) {};
		\node [style=none] (7) at (1, -0.75) {};
		\node [style=none] (8) at (1, -1.75) {};
		\node [style=none] (9) at (2, 1.75) {};
		\node [style=none] (10) at (2, -1.75) {};
		\node [style=none] (11) at (1, 0.75) {};
		\node [style=none] (12) at (-1.5, -3) {};
		\node [style=none] (13) at (-1.5, -4) {};
		\node [style=right label] (14) at (-1.5, -3.625) {$C$};
		\node [style=none] (15) at (1, 3) {};
		\node [style=none] (16) at (1, 4) {};
		\node [style=right label] (17) at (1, 3.625) {$T$};
		\node [style=right label] (18) at (1, 1) {$T$};
		\node [style=right label] (19) at (1, -1.25) {$T$};
		\node [style=none] (21) at (-1.5, 3) {};
		\node [style=none] (22) at (-1.5, 4) {};
		\node [style=right label] (23) at (-1.5, 3.625) {$C$};
		\node [style=none] (24) at (1, -3) {};
		\node [style=none] (25) at (1, -4) {};
		\node [style=right label] (26) at (1, -3.625) {$T$};
		\node [style=none] (27) at (-1.5, 0) {$\tt{CTRL}$};
	\end{pgfonlayer}
	\begin{pgfonlayer}{edgelayer}
		\draw [style=supermap] (2.center)
			 to (1.center)
			 to (0.center)
			 to (3.center)
			 to (9.center)
			 to (4.center)
			 to (5.center)
			 to (10.center)
			 to cycle;
		\draw (6.center) to (11.center);
		\draw (7.center) to (8.center);
		\draw (12.center) to (13.center);
		\draw (15.center) to (16.center);
		\draw (21.center) to (22.center);
		\draw (24.center) to (25.center);
	\end{pgfonlayer}
\end{tikzpicture}} \textrm{ such that} \quad \forall \,\,\, %
\begin{tikzpicture}
	\begin{pgfonlayer}{nodelayer}
		\node [style=map] (0) at (0, 0) {$\cu$};
		\node [style=none] (1) at (0, 1.75) {};
		\node [style=none] (2) at (0, -1.75) {};
		\node [style=right label] (3) at (0, 1.5) {$T$};
		\node [style=right label] (4) at (0, -1.5) {$T$};
	\end{pgfonlayer}
	\begin{pgfonlayer}{edgelayer}
		\draw (1.center) to (2.center);
	\end{pgfonlayer}
\end{tikzpicture}} , \quad %
\begin{tikzpicture}
	\begin{pgfonlayer}{nodelayer}
		\node [style=none] (0) at (-2.5, 3) {};
		\node [style=none] (1) at (-2.5, -3) {};
		\node [style=none] (2) at (2.25, -3) {};
		\node [style=none] (3) at (2.25, 3) {};
		\node [style=none] (4) at (0, 1.75) {};
		\node [style=none] (5) at (0, -1.75) {};
		\node [style=none] (7) at (1.25, 1.75) {};
		\node [style=none] (8) at (1.25, -1.75) {};
		\node [style=none] (9) at (2.25, 1.75) {};
		\node [style=none] (10) at (2.25, -1.75) {};
		\node [style=none] (12) at (-1.25, -3) {};
		\node [style=none] (13) at (-1.25, -4) {};
		\node [style=right label] (14) at (-1.25, -3.625) {$C$};
		\node [style=none] (15) at (1.25, 3) {};
		\node [style=none] (16) at (1.25, 4) {};
		\node [style=right label] (17) at (1.25, 3.625) {$T$};
		\node [style=right label] (18) at (1.25, 1) {$T$};
		\node [style=right label] (19) at (1.25, -1.25) {$T$};
		\node [style=none] (20) at (-1.25, 3) {};
		\node [style=none] (21) at (-1.25, 4) {};
		\node [style=right label] (22) at (-1.25, 3.625) {$C$};
		\node [style=none] (23) at (1.25, -3) {};
		\node [style=none] (24) at (1.25, -4) {};
		\node [style=right label] (25) at (1.25, -3.625) {$T$};
		\node [style=none] (26) at (-1.25, 0) {$\tt{CTRL}$};
		\node [style=map] (27) at (1.25, 0) {$\cu$};
	\end{pgfonlayer}
	\begin{pgfonlayer}{edgelayer}
		\draw [style=supermap] (0.center)
			 to (3.center)
			 to (9.center)
			 to (4.center)
			 to (5.center)
			 to (10.center)
			 to (2.center)
			 to (1.center)
			 to cycle;
		\draw (7.center) to (8.center);
		\draw (12.center) to (13.center);
		\draw (15.center) to (16.center);
		\draw (20.center) to (21.center);
		\draw (23.center) to (24.center);
	\end{pgfonlayer}
\end{tikzpicture}} \quad = \quad %
\begin{tikzpicture}
	\begin{pgfonlayer}{nodelayer}
		\node [style=none] (2) at (2.5, 1.75) {};
		\node [style=none] (3) at (0.5, -1.75) {};
		\node [style=none] (4) at (2.5, -1.75) {};
		\node [style=none] (5) at (0.5, 1.75) {};
		\node [style=right label] (6) at (2.5, 1.5) {$T$};
		\node [style=right label] (7) at (2.5, -1.5) {$T$};
		\node [style=right label] (8) at (0.5, 1.5) {$C$};
		\node [style=right label] (9) at (0.5, -1.5) {$C$};
		\node [style=large map] (10) at (1.5, 0) {$\mr{ctrl}\mhyphen\cu $};
	\end{pgfonlayer}
	\begin{pgfonlayer}{edgelayer}
		\draw (5.center) to (3.center);
		\draw (2.center) to (4.center);
	\end{pgfonlayer}
\end{tikzpicture}}$}
\subcaption{ \footnotesize {\correction{{\bf No-go theorem on coherent control with black box channels} \cite{Soeda2013,Araujo2014,Chiribella_2016,Bisio_2016,Thompson2018,gavorova2020topological}.    No supermap can convert  an arbitrary  unitary channel $\map U$, acting on a target system $T$,  into its controlled version ${\tt ctrl}\mhyphen \map U$, acting on a control system $C$ and on the target $T$.   
}}}
\label{fig:NoGoTheorem}
\end{subfigure}%

\begin{subfigure}[b]{.85\linewidth}
\centering
\scalebox{.85}{$\mathlarger{\mathlarger{\mathlarger{\exists}}} \quad %
\begin{tikzpicture}
	\begin{pgfonlayer}{nodelayer}
		\node [style=none] (0) at (-2, 3) {};
		\node [style=none] (1) at (-2, -3) {};
		\node [style=none] (2) at (2.75, -3) {};
		\node [style=none] (3) at (2.75, 3) {};
		\node [style=none] (4) at (0.5, 1.75) {};
		\node [style=none] (5) at (0.5, -1.75) {};
		\node [style=none] (6) at (1.75, 1.75) {};
		\node [style=none] (7) at (1.75, -0.75) {};
		\node [style=none] (8) at (1.75, -1.75) {};
		\node [style=none] (9) at (2.75, 1.75) {};
		\node [style=none] (10) at (2.75, -1.75) {};
		\node [style=none] (11) at (1.75, 0.75) {};
		\node [style=none] (12) at (-0.75, -3) {};
		\node [style=none] (13) at (-0.75, -4) {};
		\node [style=right label] (14) at (-0.75, -3.625) {$C$};
		\node [style=none] (15) at (1.75, 3) {};
		\node [style=none] (16) at (1.75, 4) {};
		\node [style=right label] (17) at (1.75, 3.625) {$T$};
		\node [style=right label] (18) at (1.75, 1) {$S$};
		\node [style=right label] (19) at (1.75, -1.25) {$S$};
		\node [style=none] (20) at (-0.75, 3) {};
		\node [style=none] (21) at (-0.75, 4) {};
		\node [style=right label] (22) at (-0.75, 3.625) {$C$};
		\node [style=none] (23) at (1.75, -3) {};
		\node [style=none] (24) at (1.75, -4) {};
		\node [style=right label] (25) at (1.75, -3.625) {$T$};
		\node [style=none] (26) at (-0.75, 0) {$\tt{CTRL}$};
	\end{pgfonlayer}
	\begin{pgfonlayer}{edgelayer}
		\draw (6.center) to (11.center);
		\draw (7.center) to (8.center);
		\draw (12.center) to (13.center);
		\draw (15.center) to (16.center);
		\draw (20.center) to (21.center);
		\draw (23.center) to (24.center);
		\draw [style=supermap] (0.center)
			 to (1.center)
			 to (2.center)
			 to (10.center)
			 to (5.center)
			 to (4.center)
			 to (9.center)
			 to (3.center)
			 to cycle;
	\end{pgfonlayer}
\end{tikzpicture}} \textrm{ such that} \quad \forall \,\,\, %
\begin{tikzpicture}
	\begin{pgfonlayer}{nodelayer}
		\node [style=none] (0) at (0, 1.75) {};
		\node [style=none] (1) at (0, -1.75) {};
		\node [style=right label] (2) at (0, 1) {$S$};
		\node [style=right label] (3) at (0, -1.25) {$S$};
		\node [style=medium map] (4) at (0, 0) {$\widetilde{\cc}$};
	\end{pgfonlayer}
	\begin{pgfonlayer}{edgelayer}
		\draw (0.center) to (1.center);
	\end{pgfonlayer}
\end{tikzpicture}} \, , \quad %
\begin{tikzpicture}
	\begin{pgfonlayer}{nodelayer}
		\node [style=none] (0) at (-2.5, 3) {};
		\node [style=none] (1) at (-2.5, -3) {};
		\node [style=none] (2) at (2.5, -3) {};
		\node [style=none] (3) at (2.5, 3) {};
		\node [style=none] (4) at (0, 1.75) {};
		\node [style=none] (5) at (0, -1.75) {};
		\node [style=none] (7) at (1.5, 1.75) {};
		\node [style=none] (8) at (1.5, -1.75) {};
		\node [style=none] (9) at (2.5, 1.75) {};
		\node [style=none] (10) at (2.5, -1.75) {};
		\node [style=none] (12) at (-1.25, -3) {};
		\node [style=none] (13) at (-1.25, -4) {};
		\node [style=right label] (14) at (-1.25, -3.625) {$C$};
		\node [style=none] (15) at (1.5, 3) {};
		\node [style=none] (16) at (1.5, 4) {};
		\node [style=right label] (17) at (1.5, 3.625) {$T$};
		\node [style=right label] (18) at (1.5, 1) {$S$};
		\node [style=right label] (19) at (1.5, -1.25) {$S$};
		\node [style=none] (20) at (-1.25, 3) {};
		\node [style=none] (21) at (-1.25, 4) {};
		\node [style=right label] (22) at (-1.25, 3.625) {$C$};
		\node [style=none] (23) at (1.5, -3) {};
		\node [style=none] (24) at (1.5, -4) {};
		\node [style=right label] (25) at (1.5, -3.625) {$T$};
		\node [style=none] (26) at (-1.25, 0) {$\tt{CTRL}$};
		\node [style=medium map] (27) at (1.5, 0) {$\widetilde{\cc}$};
	\end{pgfonlayer}
	\begin{pgfonlayer}{edgelayer}
		\draw [style=supermap] (3.center)
			 to (9.center)
			 to (4.center)
			 to (5.center)
			 to (10.center)
			 to (2.center)
			 to (1.center)
			 to (0.center)
			 to cycle;
		\draw (7.center) to (8.center);
		\draw (12.center) to (13.center);
		\draw (15.center) to (16.center);
		\draw (20.center) to (21.center);
		\draw (23.center) to (24.center);
	\end{pgfonlayer}
\end{tikzpicture}} \quad = \quad %
\begin{tikzpicture}
	\begin{pgfonlayer}{nodelayer}
		\node [style=none] (2) at (1, 1.75) {};
		\node [style=none] (3) at (-1, -1.75) {};
		\node [style=none] (4) at (1, -1.75) {};
		\node [style=none] (5) at (-1, 1.75) {};
		\node [style=right label] (6) at (1, 1.5) {$T$};
		\node [style=right label] (7) at (1, -1.5) {$T$};
		\node [style=right label] (8) at (-1, 1.5) {$C$};
		\node [style=right label] (9) at (-1, -1.5) {$C$};
		\node [style=large map] (10) at (0, 0) {$\mr{ctrl}\mhyphen\cc $};
	\end{pgfonlayer}
	\begin{pgfonlayer}{edgelayer}
		\draw (5.center) to (3.center);
		\draw (2.center) to (4.center);
	\end{pgfonlayer}
\end{tikzpicture}}$}
\subcaption{ \footnotesize \correction{
     {\bf  Universal coherent control with sector-preserving channels.}  
     There exists a supermap $\tt CTRL $ that transforms arbitrary sector-preserving channels $\widetilde {\cal C} $ acting on an extended input system $S$ (with Hilbert space $\spc H_S   =  \C   \oplus \spc H_T $) into arbitrary  controlled channels ${\tt ctrl}\mhyphen {\cal C}$. In particular, the supermap $\tt CTRL$ maps arbitrary sector-preserving unitary channels $\widetilde {\cal U}$ into   the corresponding controlled unitary channels ${\tt ctrl}\mhyphen {\cal U}$. }}
     \label{fig:YesGoTheorem}
\end{subfigure}%

\setcounter{figure}{0}

\fcaption{\correction{Comparison between the standard no-go theorem and our universal controllisation circuit.}}

   \label{fig:2Theorems}
\end{figure}

The standard way to construct quantum controlled gates is via universal gate sets. To build the controlled gate ${\tt ctrl} \mhyphen U$, one first decomposes the gate $U$ into elementary gates, and then adds control to each of these gates \cite{Barenco_1995}.  This construction, however,  requires a decomposition of the gate $U$ into elementary gates.    In many applications, such as quantum factoring \cite{shor}, the decomposition  is known, because  the gate $U$ is the quantum realisation of a classical function, for which a classical program is given.  In other applications, however, the gate $U$ may be completely unknown:  in a cloud computing scenario,  for example, the gate $U$ may be implemented remotely by a server, and the program that generated $U$ may be unknown to the client.  In  these situations, it would be  convenient to have a way to generate the controlled gate  ${\tt ctrl} \mhyphen U$ from the access to an unknown, uncontrolled gate $U$.  The ability to generate controlled gates would also benefit  the implementation of standard quantum algorithms, providing them with an appealing modularity feature~\cite{Thompson2018}.  Besides quantum computation, the ability to   control an unknown quantum process would be beneficial to other information-processing tasks, such as quantum communication \cite{Chiribella_2019, kristjansson2020singleparticle, Kristjansson_2020}, quantum metrology \cite{zhao2020quantum, Frey19}, and quantum machine learning \cite{Lloyd_2014, briegel2012projective}.

The problem of the coherent control of an unknown channel  can be phrased   in the following way: `Is there a universal protocol which, from the use of a black-box channel $\cc$, implements its coherently controlled version?'.  It has been proven several times, in ever stronger ways \cite{Soeda2013,Araujo2014,Chiribella_2016,Bisio_2016,Thompson2018,gavorova2020topological}, that the answer to this question is a resounding `No': no quantum circuit can `controllise' arbitrary operations.   For general non-unitary channels, such a controllisation is not even unambiguously defined in the first place, as observed in Ref.  \cite{dong2020controlled}.

Yet, as has been noted at the same time, coherent control is actually easily implementable in various contexts, such as optical systems \cite{zhou2011AddingControl, zhou2013CalculatingEigenvalues, Friis2014, Dunjko_2015},  trapped ions \cite{Friis2014, Dunjko_2015}, and superconducting qubits \cite{Friis_2015}. 
These realisations are not in contradiction with the no-go theorems because the resources they use are not black boxes: in the computer science terminology, they are grey boxes, whose action is partially  known~\cite{Araujo2014, Thompson2018} (see also  Section \ref{sec:experimental} of this paper for a further elaboration of this point). 

This mismatch between  theory and  experiments suggests that it may be necessary to revisit the terms of the problem.  A suitable formulation of the problem would help understanding in which situations, from which resources, and with which protocols, one can implement a coherently controlled quantum channel. This understanding would allow to go beyond the existing examples of implementations of coherent control, and to compare their respective advantages. Another upshot of a better theoretical understanding is that  it allows to neatly distinguish the informational, implementation-independent aspects of coherent control from the specific, system-dependent  features of  experimental implementations.  In particular, it would help shift the focus away from optical implementations and towards a more implementation-neutral perspective. Finally, identifying the operational ingredients of coherent control helps  elucidate some aspects of the existing no-go theorems, as studying  protocols that can perform a certain  task usually helps understanding why other protocols cannot.

In this paper  we analyse the key features of  the experimental implementations,  and put forward a new formulation  of the problem of coherent control based on these features.  Our starting point is the observation that the crucial feature of the existing implementations is that  they use  \textit{sector-preserving channels}; i.e.,  channels whose   input systems can be partitioned into sectors (orthogonal subspaces), with the property that a state in a given sector always remains in this sector after  the channel has acted.   In this work, we  focus on the case where some sectors are one-dimensional and others are $d$-dimensional. \correction{A sector-preserving channel acting on a system with a 1-dimensional sector and a $d$-dimensional sector will be called a {\em sector-preserving channel of type $(1,d)$}.  More generally, a sector-preserving channel acting on a system with $m$ 1-dimensional sectors and $n$ $d$-dimensional sectors will be called a {\em sector-preserving channel of type  $(\underbrace{1,...,1}_{m~{\rm times}}, \underbrace{d,...,d}_{n~{\rm times}})$ }}.

The idea of regarding  sector-preserving channels as resources originates from Ref.\,\cite{Chiribella_2019}, and was further explored in Refs.\,\cite{kristjansson2020singleparticle, Kristjansson_2020}\footnote{In the past, a similar approach had independently been explored in Refs.\,\cite{aaberg2004subspace, aaberg2004operations}. A different approach, based on the unitary extension of quantum channels, was developed in  Refs.\,\cite{Abbott_2020, clement2020, branciard2021}.}. In these works, the focus was put on the use of sector-preserving channels  for communication\footnote{This was part of a wider discussion about the communication advantages of coherent control of causal order \cite{Kristjansson_2020, Chiribella_2021, Abbott_2020, Guerin_2019, Rubino_2021}.}. In contrast,  the relevance of sector-preserving channels to the task of coherent control has not been explored before, and will be the focus of this paper.

Our main results are summarised in the following.  For the standard notion of coherent control, we establish a perfect, one-to-one correspondence between sector-preserving channels of type $(1,d)$ and coherently controlled channels with target systems of dimension $d$.  We then show that this one-to-one correspondence can be implemented physically, by inserting  sector-preserving channels into a fixed, universal quantum circuit that generates  the corresponding controlled channels.   \correction{  Mathematically, this universal circuit can be represented as a {\em quantum supermap} \cite{chiribella2008supermaps, chiribella2009combs, chiribella2009switch}, that is, a transformation of quantum channels.  }  We call this particular supermap  the ${\tt CTRL}$ supermap, and show that it is invertible. Its inverse ${\tt CTRL}^{-1}$ is also a  supermap, corresponding to  a universal circuit that transforms controlled channels on $d$-dimensional systems into sector-preserving channels of type $(1,d)$.

Summarising,  coherently controlled channels on $d$-dimensional systems and sector-preserving channels of type $(1,d)$ are fully equivalent resources, and the interconversion of these resources is implemented by the ${\tt CTRL}$ supermap and by  its inverse.   It is worth contrasting this result  with the existing no-go theorems on coherent control: while control cannot be achieved from general channels on $d$-dimensional systems, it can be achieved from sector-preserving channels of type $(1,d)$. 
 \correction{The comparison is illustrated in Figure\,\ref{fig:2Theorems}.}

After establishing the above results, we extend them to more general versions of coherent control. For example,  we show a one-to-one correspondence between sector-preserving channels of type $(1,1,d)$ and compositely-controlled channels with two branches leading to application of the identity, and we build  universal circuits that implement this correspondence in both ways. \correction{  We then extend this result to compositely-controlled channels with any number of branches leading to application of the identity.  }  

We also extend our results to the coherent control of $N$ isometric channels, whose input and output spaces can be of different dimensions.  \correction{ As the initial resource, we take  $N$ sector-preserving isometric channels of type $(1 \to 1 ,d_\mr{in} \to d_\mr{out})$,  meaning that  {\em (i)} the input  (output) is partitioned into a 1-dimensional sector and a $d_{\rm in}$-dimensional  ($d_{\rm out}$-dimensional) sector, and {\em (ii)}   states in the $1$-dimensional input sector are mapped into states of in the 1-dimensional output sector, while states in the $d_{\rm in}$-dimensional input sector are mapped into states of in the $d_{\rm out}$-dimensional output sector.  We then show that this resource can be used to construct a  channel with coherent control between corresponding isometries.   }
 We study explicitly the $N=2$ case, which readily generalises to arbitrary $N$.   Mathematically, we show that there exists an invertible  supermap ${\tt 2 \mhyphen CTRL}$  that transforms \correction{every pair of} sector-preserving isometric channels into the corresponding controlled channel.   
 
\correction{  In the non-isometric case, however, we find that sector-preserving channels of type $(1 \to 1 ,d_\mr{in} \to d_\mr{out})$ are generally not sufficient to achieve all possible controlled channels.  Such channels can instead be realised using sector-preserving channels of type $(1 \to 1, d_\mr{in} \to d_\mr{out} d')$, where $d'$ is the dimension of an auxiliary system, used to extend the original channels (from a $d_{\rm in}$-dimensional system to a $d_{\rm out}$-dimensional system) to isometries. Using this extra resource, we provide a  universal protocol for the implementation of coherent control from $N$ sector-preserving channels. }

\correction{  We conclude the paper by building a general framework for the manipulation of sector-preserving channels, and, more generally, of channels that maps input sectors into output sectors according to a prescribed rule, called the {\em route} \cite{vanrietvelde2020routed}.    The key ingredient of our framework is  the notion of  `supermaps on routed channels,'  a new kind of supermaps whose input is restricted to channels with a prescribed route.  Examples of supermaps on routed channels are the  ${\tt CTRL}$ and ${\tt 2 \mhyphen CTRL}$ supermaps constructed earlier in the paper (or, more precisely, the restrictions of such maps to sector-preserving channels).} 

\correction{
Our results open the way to several  applications. First, by identifying the resources for the task of coherent control, we lay the basis for a resource-theoretic analysis of existing protocols and experiments.   
  Second,  the supermaps defined in this work can be easily extended to multiple channels, and to more elaborate architectures involving multiple instances of coherent control at different moments of time.    This flexibility can help the design  of complex protocols and algorithms, offering a built-in modularity feature.   Finally, the new notions of  composite control introduced in this paper have the potential to stimulate new theoretical protocols and experimental implementations with higher dimensional control systems.
}

The structure of the paper is as follows. In Section \ref{sec:CoherentControlCharacterisation}, we  review the existing definitions of controlled unitaries and channels, and we address their  extension to multiple channels, defining a new notion of compositely-controlled channels. In Section \ref{sec:sectorpreserv}, we analyze the structure of the existing implementations of coherent control, and use it to motivate a study of sector-preserving channels of type $(1,d)$.  We then show that these channels are in one-to one correspondence with controlled channels on a $d$-dimensional system.  In Section \ref{sec:ControlSupermap+Equivalence}, we show that the correspondence between sector-preserving channels of type $(1,d)$ and controlled channels  can be physically implemented by a universal protocol, formalised by the ${\tt CTRL}$ supermap.  In Section \ref{sec:CoherentControl2}, we generalise this correspondence to the coherent control between $N$ isometries, showing that it can also be realised via a universal protocol, and we discuss the case of the coherent control between $N$ general channels, showing that it requires more involved resources. In Section \ref{sec:compositelycontrolled}, we extend the results of the previous sections to \textit{compositely-controlled} channels. Finally, in Section \ref{sec:SupermapsRouted}  we  define supermaps on routed channels, providing a general framework for the manipulation of sector-preserving channels and more general channels that transform sectors in a prescribed way. 

\section{Coherently controlled quantum channels}  \label{sec:CoherentControlCharacterisation}

In this section, we review the existing definitions of coherently controlled unitaries and channels. Then, we provide  a one-to-one parametrisation of the possible controlled versions of a channel in terms of a   {\em `pinned Kraus operator'}.  Finally, we discuss more general types of controlled quantum channels, and we provide one-to-one parametrisations for these in terms of pinned Kraus operators.

\subsection{Controlled channels and pinned Kraus operators} \label{sec:ControlledChannelsAndPinnedKrausOps}
Let us start with the most basic definition  of controlled  operation:   controlled unitary gates. Given a unitary operator $U$ acting on a $d$-dimensional Hilbert space
$\ch_T$, there is a standard notion of a `controlled-$U$' channel: it is the channel corresponding to the  unitary operator 
\begin{align}\label{eq:ctrlU}
{\tt ctrl} \mhyphen   U   :=   |0\rangle \langle 0|  \otimes  I  +   |1\rangle \langle 1|    \otimes U\, ,
\end{align}
acting on a composite system, made of a two-dimensional {\em control} system $C$ and  of a $d$-dimensional {\em target} system $T$. 

More generally, one may want to control the evolution of an open system.  The general evolution of an open system $T$  is described by  a quantum channel $\cc$, that is, a completely positive, trace-preserving map mapping density matrices on $\ch_T$ into density matrices on $\ch_T$. The action of the channel $\cc$  on a generic density matrix $\rho$ can be conveniently described in the Kraus representation, as $\map C (\rho)  =  \sum_{i=1}^n   C_i  \rho  C_i^\dag$, where the operators $(C_i)_{i=1}^n$, called Kraus operators, satisfy the normalisation  condition \begin{align}\label{krausnorm} 
\sum_{i=1}^n   \,C_i^\dag C_i  = I \, ,
\end{align}
$I$ being the identity operator on $\spc H_T$.

Crucially, the Kraus representation of a channel is not unique:  if $V$ is a $l\times n$ isometry with matrix elements $V_{ji}$, the operators $(C'_j)_{j=1}^l$ defined by $C'_j  :  =  \sum_{i}  \, V_{ji}  \,  C'_i$ also form a Kraus representation of channel  $\map C$.  The non-uniqueness of the Kraus representation will play an important role in this paper.

For a general quantum channel $\map C$, the definition of coherent control is not straightforward. The naive generalisation of Eq. (\ref{eq:ctrlU}) would be to pick a Kraus representation $(C_i)$ and define the controlled operators ${\tt ctrl} \mhyphen   C_i   =  |0\rangle \langle 0| \otimes C_i +  |1\rangle \langle 1| \otimes   I $.  This definition, however, would fail to give a quantum channel, because the   above operators   fail to satisfy the normalisation condition~(\ref{krausnorm}). A suitable generalisation of Eq. (\ref{eq:ctrlU}) was put forward in Ref.\,\cite{dong2020controlled}: a controlled version   of channel  $\cc$ is the channel with Kraus operators
\begin{align}\label{eq:ctrlKraus}
{\tt ctrl}_{\alpha_i} \mhyphen   C_i   :=   \alpha_i\,    |0\rangle \langle 0|  \otimes I +  |1\rangle \langle 1|  \otimes C_i \, ,
\end{align}
where $(\alpha_i)_{i=1}^n$ are complex amplitudes satisfying the normalisation condition $\sum_{i=1}^n  \, |\alpha_i|^2  =1$. 

This definition is a special case of the definition of coherent control of two general channels considered in  Refs. \cite{oi2003interference,aaberg2004subspace,Chiribella_2019,Abbott_2020}, in the special case where one of the two channels is the identity channel. 

 It is important to observe that the definition of the controlled channel does not depend only on the channel $\cc$.  In general, it can depend both on the set  of Kraus operators $  {\bf C}  :  =(C_i)_{i=1}^n$ and on the set  of amplitudes   $\boldsymbol{\alpha} := (\alpha_i)_{i=1}^n$ used in Eq.~(\ref{eq:ctrlKraus}).  To emphasise the dependence on the Kraus operators $\bf C$ and on the  amplitudes   $\boldsymbol\alpha$, we will denote the controlled channel by     ${\tt ctrl}_{  \boldsymbol \alpha }^{\bf C} \mhyphen \map C$.

Different choices of Kraus operators and amplitudes generally give rise to different versions of  controlled channels, with none of these versions being straightforwardly more natural than the other (although some may be more or less coherent\,\cite{dong2020controlled}).  Given that the definition of controlled channels is non-unique, an important question  is how  to parametrise the possible controlled channels in a compact way.  As it turns out, the parametrisation  ${\tt ctrl}_{  \boldsymbol \alpha }^{\bf C} \mhyphen \map C$ is quite redundant:  in fact, many choices of $\bf C$ and of $\boldsymbol \alpha$ give rise to the same controlled channel   ${\tt ctrl}_{  \boldsymbol \alpha }^{\bf C} \mhyphen \map C$.

In the following, we provide  a simple one-to-one parametrisation of the possible controlled channels corresponding to a given uncontrolled channel $\cc$: the controlled channels are  in one-to-one correspondence with pairs of the form $(\cc,  C_1)$, where  $C_1$ is a fixed  Kraus operator of $\cc$.  We call the pair $(\cc ,  C_1)$ a {\em channel with a pinned Kraus operator}. 

First, we prove that any controlled version of $\cc$ has  a Kraus representation in which  only the first Kraus operator is coherent with the identity:
\medskip 

\begin{lemma} \label{lem:characterisation1}
For every controlled channel   $ {\tt ctrl}_{  \boldsymbol \alpha }^{\bf C} \mhyphen \map C  $,  one can find a Kraus representation    in which one Kraus operator is of the form  $  |0\rangle \langle 0| \otimes  I +   |1\rangle \langle 1|  \otimes C_1'$ and all the others are of the form $ |1\rangle \langle 1| \otimes  C_j'$, where ${\bf C}' : = (C'_j)_{j=1}^{n}$ is a suitable Kraus representation of channel $\cc$.  
In other words, one has 
\be \label{eq:characterisation1}
  {\tt ctrl}_{  \boldsymbol \alpha }^{\bf C} \mhyphen \map C   =   {\tt ctrl}_{  {\bf  u_n}  }^{{\bf C}'} \mhyphen \map C \, ,
\ee
where ${\bf u}_{n}$ is the $n$-dimensional column vector with a $1$ in the first entry, and 0 in the remaining $n-1$ entries. 
\end{lemma}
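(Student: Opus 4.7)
The plan is to exploit the standard freedom in choosing a Kraus representation. Recall that two Kraus decompositions $(K_i)_{i=1}^n$ and $(K_j')_{j=1}^n$ represent the same channel if and only if they are related by a unitary $V$ via $K_j' = \sum_i V_{ji} K_i$ (it is enough here to take $V$ square; enlarging to a rectangular isometry is not needed). Starting from the defining Kraus operators of ${\tt ctrl}_{\boldsymbol\alpha}^{\bf C}\mhyphen \map C$, namely $K_i = \alpha_i |0\rangle\langle 0| \otimes I + |1\rangle\langle 1| \otimes C_i$, applying any $n \times n$ unitary $V$ yields
\begin{equation*}
K_j' = \Bigl(\sum_i V_{ji}\alpha_i\Bigr)\, |0\rangle\langle 0| \otimes I + |1\rangle\langle 1| \otimes \Bigl(\sum_i V_{ji} C_i\Bigr).
\end{equation*}
Since the two terms are supported on orthogonal subspaces of the control system, this transformation acts separately on the amplitude vector $\boldsymbol\alpha = (\alpha_i)$ and on the Kraus tuple ${\bf C} = (C_i)$: the former is mapped to $V\boldsymbol\alpha$, and the latter to another Kraus representation ${\bf C}' = (C'_j)$ of the same channel $\map C$ (since the second rule is nothing but the Kraus-freedom rule applied to $\map C$ alone).

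The task then reduces to choosing $V$ so that $V\boldsymbol\alpha = {\bf u}_n$. The normalisation $\sum_i |\alpha_i|^2 = 1$ states precisely that $\boldsymbol\alpha$ is a unit vector in $\C^n$, and ${\bf u}_n$ is likewise a unit vector, so such a unitary exists. A concrete choice is to take the first row of $V$ to be $\bar{\boldsymbol\alpha}^{\, T}$ and to complete the remaining rows to an orthonormal basis of $\C^n$; then $(V\boldsymbol\alpha)_1 = \sum_i |\alpha_i|^2 = 1$ while $(V\boldsymbol\alpha)_j = 0$ for $j \geq 2$. Substituting back, the new Kraus operators become $K_1' = |0\rangle\langle 0| \otimes I + |1\rangle\langle 1| \otimes C_1'$ and $K_j' = |1\rangle\langle 1| \otimes C_j'$ for $j \geq 2$, which is exactly the Kraus decomposition of ${\tt ctrl}_{{\bf u}_n}^{{\bf C}'}\mhyphen \map C$. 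This establishes equation~(\ref{eq:characterisation1}).

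I do not anticipate any substantive obstacle: the proof is essentially the observation that the coefficients $\alpha_i$ transform \emph{covariantly} with the Kraus operators under the Kraus-representation gauge, and that the normalisation $\|\boldsymbol\alpha\|=1$ is precisely what allows this vector to be rotated onto ${\bf u}_n$. The only subtlety worth flagging explicitly in the write-up is the decoupling of the two orthogonal blocks under the unitary mixing, which follows immediately from the fact that the \emph{same} $V$ is applied to both $(\alpha_i)$ and $(C_i)$; this is the point that would fail if one instead tried to rotate the two components independently.
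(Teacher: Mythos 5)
Your proof is correct and follows essentially the same route as the paper's: both pick a unitary $V$ with $V\boldsymbol\alpha = \mathbf{u}_n$ and observe that applying $V$ to the Kraus operators of the controlled channel simultaneously rotates the amplitude vector onto $\mathbf{u}_n$ and produces a new Kraus representation $\mathbf{C}'$ of $\map C$. Your explicit construction of $V$ (first row $\bar{\boldsymbol\alpha}^{\,T}$, completed to an orthonormal basis) and the remark on the covariant transformation of the two blocks are a slightly more detailed write-up of the same argument.
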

\begin{proof}
As $\boldsymbol \alpha$ is a normalised vector in ${\mathbb C}^n$, one can find a unitary matrix $V$ sending it to the basis vector ${\bf u}_n$, i.e.   $  V \boldsymbol \alpha   =   {\bf u}_n$. Then, the Kraus operators $(C_j')_{j=1}^n$ defined by $C_j':= \sum_j V_{ji} C_i$  form an alternative Kraus representation of $\cc$, and the Kraus operators $(K_j)_{j=1}^n$ defined by $K_j:= \sum_j V_{ji} ( {\tt ctrl}_{\alpha_i} \mhyphen   C_i)$ form an alternative Kraus representation of ${\tt ctrl}_{  \boldsymbol \alpha }^{\bf C} \mhyphen \map C$. It is straightforward to see that $ K_1 =  {\tt ctrl}_{1} \mhyphen   C_1'$ and $K_j  ={\tt ctrl}_{0} \mhyphen C-j'$ for every $j>1$. Hence,  ${\tt ctrl}_{  \boldsymbol \alpha }^{\bf C} \mhyphen \map C$ can be characterised as in (\ref{eq:characterisation1}).
\end{proof}

This result removes the freedom in the choice of the amplitudes $(\alpha_i)_{i=1}^n$:  one can simply  set the first amplitude to 1, and all the other amplitudes to zero. All the variability of the controlled channels is then included in the choice of Kraus representation for channel $\cc$.

We now show a further simplification:  the definition of the controlled channel depends only on the choice of the {\em first} Kraus operator in a Kraus representation of $\cc$.  In other words, the choice of the other Kraus operators does not affect the type of control one obtains.

\medskip 

\begin{lemma}\label{lem:characterisation2} Let ${\bf C} := (C_i)_{i=1}^m$ and ${\bf C}': = (C'_j)_{j=1}^{n}$  be two Kraus representations for channel $\cc$. Then, the controlled channels     ${\tt ctrl}_{ {\bf u}_m }^{\bf C} \mhyphen \map C$ and ${\tt ctrl}_{ \bf{u}_n }^{{\bf C}'} \mhyphen \map C$ coincide if and only if the operators $C_1$ and $C_1'$ coincide.  In formula,   

\begin{align} 
\label{eq:characterisation2}
      {\tt ctrl}_{  {\bf u}_m  }^{\bf C} \mhyphen \map C = {\tt ctrl}_{  {\bf  u}_n}^{{\bf C}'} \mhyphen \map C \qquad  \iff  \qquad  \, C_1 = C'_1 \,. 
\end{align}
\end{lemma}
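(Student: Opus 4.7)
The plan is to compute the action of both candidate controlled channels on an arbitrary input, decomposed block-wise according to the control basis $\{|0\rangle,|1\rangle\}$, and show that the resulting output only depends on the channel $\cc$ (for diagonal blocks) and on the first Kraus operator (for off-diagonal blocks). Thanks to Lemma~\ref{lem:characterisation1}, both controlled channels are already in the canonical form guaranteed by that lemma, so their Kraus operators are
\[
K_1 \;=\; |0\rangle\langle 0|\otimes I \,+\, |1\rangle\langle 1|\otimes C_1, \qquad K_i \;=\; |1\rangle\langle 1|\otimes C_i \quad (i\geq 2),
\]
and similarly primed versions for $\bf C'$.

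The first step is to write an arbitrary input as $\sigma = \sum_{k,l\in\{0,1\}} |k\rangle\langle l|\otimes \rho_{kl}$ and examine each of the four blocks separately. For the $|0\rangle\langle 0|$ and $|1\rangle\langle 1|$ blocks, only the sandwiches $(|0\rangle\langle 0|\otimes I)(\cdot)(|0\rangle\langle 0|\otimes I)$ and $(|1\rangle\langle 1|\otimes C_i)(\cdot)(|1\rangle\langle 1|\otimes C_i^\dag)$ survive, giving $|0\rangle\langle 0|\otimes \rho_{00}$ and $|1\rangle\langle 1|\otimes \sum_i C_i\rho_{11}C_i^\dag = |1\rangle\langle 1|\otimes \cc(\rho_{11})$ respectively; the latter coincides for $\bf C$ and $\bf C'$ precisely because both are Kraus representations of the \emph{same} channel $\cc$. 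Hence the diagonal blocks of the output are always equal, regardless of the Kraus representation chosen.

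The second step is to analyse the off-diagonal blocks. For $|0\rangle\langle 1|\otimes \rho_{01}$, the only nonzero contribution comes from sandwiching with $K_1\,(\cdot)\,K_1^\dag$, since all other Kraus operators have $|1\rangle\langle 1|$ on both sides and thus kill any $|0\rangle\langle\,\cdot\,|$ component on the left. A direct calculation yields $|0\rangle\langle 1|\otimes \rho_{01}C_1^\dag$, and by Hermitian conjugation the $|1\rangle\langle 0|$ block maps to $|1\rangle\langle 0|\otimes C_1\rho_{10}$. The analogous computation for $\bf C'$ produces the same blocks with $C_1$ replaced by $C_1'$. The equivalence in~(\ref{eq:characterisation2}) then follows: the $(\Leftarrow)$ direction is immediate, and for $(\Rightarrow)$ it suffices to choose $\rho_{01}=I$ (or any full-rank operator) to deduce $C_1 = C_1'$ from the equality of the off-diagonal blocks.

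I do not anticipate a significant obstacle; the computation is essentially just tracking which Kraus operators have $|0\rangle\langle 0|$ versus $|1\rangle\langle 1|$ support on each side. The only point that deserves care is the invocation of the fact that different Kraus representations of $\cc$ yield the same channel, which is what makes the $|1\rangle\langle 1|\otimes\cc(\rho_{11})$ block representation-independent and is really the content behind the lemma: the freedom in choosing $C_2,\ldots,C_n$ is exactly the freedom already present in the Kraus representation of $\cc$, and so is absorbed into $\cc$ itself in the controlled channel.
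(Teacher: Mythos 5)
Your proof is correct, and it takes a genuinely different route from the paper's for the forward implication. The paper proves $(\Rightarrow)$ structurally: it pads the two Kraus representations to equal length, invokes the theorem that Kraus representations of the same channel are related by a unitary matrix $W$, and then extracts $W_{11}=1$ by taking the expectation value on $\ket{0}$ in the first row of the relation, forcing $W_{1j}=0$ for $j>1$ and hence $C_1=C_1'$. You instead avoid the unitary-equivalence theorem entirely: your block computation shows that the full output of ${\tt ctrl}_{{\bf u}_m}^{\bf C}\mhyphen\cc$ on $\sum_{k,l}\ketbra{k}{l}\otimes\rho_{kl}$ is $\ketbra{0}{0}\otimes\rho_{00}+\ketbra{1}{1}\otimes\cc(\rho_{11})+\ketbra{0}{1}\otimes\rho_{01}C_1^\dag+\ketbra{1}{0}\otimes C_1\rho_{10}$, which manifestly depends only on the pair $(\cc,C_1)$ and from which $C_1$ can be read off by evaluating on $\ketbra{0}{1}\otimes I$. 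This yields both directions from a single computation (the paper's $(\Leftarrow)$ direction is essentially your diagonal/off-diagonal bookkeeping restricted to product states, so there the two proofs agree in substance), and it is arguably more elementary and more informative, since it exhibits the explicit closed form of the controlled channel. The only cosmetic points: invoking Lemma~\ref{lem:characterisation1} is unnecessary here, since the channels ${\tt ctrl}_{{\bf u}_m}^{\bf C}\mhyphen\cc$ are by definition already in the canonical form; and for full rigour one should note that equality of the channels on density matrices extends by linearity to equality on all operators, including the non-Hermitian block $\ketbra{0}{1}\otimes I$ — a standard point that the paper glosses over in the same way.
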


\begin{proof}
We start with the direct implication. Without loss of generality, we take $m=n$, as one can always include zero Kraus operators and match the cardinality of the  Kraus representations of   ${\tt ctrl}_{  {\bf u}_m  }^{\bf C} \mhyphen \map C$ and ${\tt ctrl}_{  {\bf  u}_n}^{{\bf C}'} \mhyphen \map C $.  If the two controlled channels coincide,  then there exists a unitary matrix  $W$ that connects their Kraus representations. In particular, one must have  
\begin{align}\label{aaaaa}
C_1 \otimes |1\rangle\langle 1|  +  I\otimes |0\rangle\langle0|    =  W_{11}  \,\left( \, C_1' \otimes |1\rangle\langle 1|  +  I\otimes |0\rangle\langle0| \,\right)  +   \sum_{j>1}   \, W_{1j}\,  C_j'  \otimes |1\rangle \langle 1| \, . 
\end{align}
Taking the expectation value on the vector $|0\rangle$ on both sides of the equation, we then obtain the relation $I  =  W_{11} \,  I$, which implies $W_{11}=1$, and, since $W$ is a unitary matrix, $W_{1j} = 0$ for every $j>1$.  Inserting this condition in Eq. (\ref{aaaaa}) we obtain $C_1 =  C_1'$.

For the converse implication, suppose that $ C_1 = C'_1$.  Then, for an arbitrary product state $\rho_{C}\otimes \rho_{T}$ of the control and the target,  we have   \begin{align}
 \nonumber  {\tt ctrl}_{  {\bf u}_m  }^{\bf C} \mhyphen \map C  (\rho_C\otimes \rho_T) &   =  {\tt ctrl}_1\mhyphen C_1   ~ (\rho_C\otimes \rho_T) ~  \left(  {\tt ctrl}_1\mhyphen C_1\right)^\dag    +  \sum_{i>1} {\tt ctrl}_0\mhyphen C_i   ~ (\rho_C\otimes \rho_T)~  \left( {\tt ctrl}_0\mhyphen C_i\right)^\dag \\  
 \nonumber &   =  {\tt ctrl}_1\mhyphen C_1    ~(\rho_C\otimes \rho_T) ~ \left({\tt ctrl}_1\mhyphen C_1\right)^\dag    +    |0\rangle \langle 0| \rho_C |0\rangle\langle 0|  \otimes  \left( {  \map C }   (\rho_T)   -   C_1 \rho_T  C_1^\dag\right)   \\ 
 \nonumber &   =  {\tt ctrl}_1\mhyphen C'_1    ~(\rho_C\otimes \rho_T) ~ \left({\tt ctrl}_1\mhyphen C_1^{\prime}\right)^\dag    +     |0\rangle \langle 0| \otimes \rho_C |0\rangle\langle 0|   \left( {  \map C }   (\rho_T)   -   C'_1 \rho_T  C_1^{\prime\, \dag}\right)  \\
 \nonumber   &   =  {\tt ctrl}_1\mhyphen C'_1    ~(\rho_C\otimes \rho_T)~  \left({\tt ctrl}_1\mhyphen C_1^{\prime}  \right)^\dag    +  \sum_{j>1} {\tt ctrl}_0\mhyphen C'_j   ~ (\rho_C\otimes \rho_T)~  
\left(  {\tt ctrl}_0\mhyphen C_j^{\prime}  \right)^\dag \\
 &={\tt ctrl}_{  {\bf u}_n  }^{{\bf C}'} \mhyphen \map C  (\rho_C\otimes \rho_T) \, .
    \end{align}
    Since $\rho_C$ and $\rho_T$ are arbitrary, we conclude $    {\tt ctrl}_{  {\bf u}_m  }^{\bf C} \mhyphen \map C = {\tt ctrl}_{  {\bf  u}_n}^{{\bf C}'} \mhyphen \map C $.  
\end{proof}

Combining Lemmas~\ref{lem:characterisation1} and~\ref{lem:characterisation2}, we obtain a non-redundant  parametrisation of the possible controlled versions of a given channel: 
\medskip 

\begin{theorem}\label{th:characterisation}
The  controlled versions of  channel $\cc$, as defined by  Eq.~(\ref{eq:ctrlKraus}), are in one-to-one correspondence with the possible choices of a single Kraus operator for channel $\cc$. 
\end{theorem}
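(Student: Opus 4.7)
The plan is to view Theorem 1 as an almost immediate consequence of Lemmas 1 and 2, which together describe how both directions of the correspondence are set up. Specifically, I would argue that Lemma 1 provides the surjectivity of the map, while Lemma 2 provides its injectivity, so very little work remains beyond assembling these pieces cleanly.

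First, I would define explicitly the map $\Phi$ going from the set of Kraus operators of $\mathcal{C}$ (i.e., the set of operators $C_1$ that can appear as the first element of some Kraus representation $\mathbf{C} = (C_i)_{i=1}^n$ of $\mathcal{C}$) to the set of controlled versions of $\mathcal{C}$, by setting
\[
\Phi(C_1) := {\tt ctrl}_{\mathbf{u}_n}^{\mathbf{C}} \mhyphen \mathcal{C},
\]
where $\mathbf{C}$ is any Kraus representation of $\mathcal{C}$ whose first operator is $C_1$. The first thing to check is that $\Phi$ is well-defined, i.e., that the output does not depend on the choice of completion of $C_1$ into a full Kraus representation. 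This is exactly the content of the converse direction in Lemma 2, which shows that equality of first Kraus operators implies equality of the corresponding controlled channels.

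Next, I would establish surjectivity. Given any controlled channel $ {\tt ctrl}_{\boldsymbol{\alpha}}^{\mathbf{C}} \mhyphen \mathcal{C} $ in the sense of Eq.~(\ref{eq:ctrlKraus}), Lemma 1 produces an alternative Kraus representation $\mathbf{C}'$ of $\mathcal{C}$ such that ${\tt ctrl}_{\boldsymbol{\alpha}}^{\mathbf{C}} \mhyphen \mathcal{C} = {\tt ctrl}_{\mathbf{u}_n}^{\mathbf{C}'} \mhyphen \mathcal{C} = \Phi(C_1')$, so every controlled channel lies in the image of $\Phi$. Finally, for injectivity, suppose $\Phi(C_1) = \Phi(C_1')$ for two admissible first Kraus operators. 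By the direct implication of Lemma 2, this forces $C_1 = C_1'$, giving injectivity. Combining these three steps yields the bijection.

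The only subtle point, and the one I would be most careful with, is the matching of Kraus-representation sizes in the injectivity argument: Lemma 2 as stated assumes two Kraus representations, possibly of different sizes, and is handled by padding the shorter one with zero Kraus operators. I would briefly note that this padding does not change the underlying channel and that its effect on the controlled Kraus operators in Eq.~(\ref{eq:ctrlKraus}) is merely to add zero operators, so the hypotheses of Lemma 2 apply directly. With this observation in place, the proof of Theorem 1 reduces to one or two sentences invoking Lemmas 1 and 2.
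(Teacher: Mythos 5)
Your proposal is correct and follows exactly the route the paper takes: the paper gives no separate argument for Theorem~\ref{th:characterisation} beyond the remark that it follows by ``combining Lemmas~\ref{lem:characterisation1} and~\ref{lem:characterisation2},'' and your write-up simply makes explicit the well-definedness, surjectivity, and injectivity steps implicit in that remark. The care you take with padding Kraus representations to equal length is consistent with the paper's own handling of that point inside the proof of Lemma~\ref{lem:characterisation2}.
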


By `a Kraus operator for channel $\cc$', we mean a Kraus operator appearing in at least one Kraus representation for  $\cc$. Equivalently, the possible Kraus operators for a given channel can be characterised as follows: 
\medskip
\begin{lemma} An operator $C_1$ is a Kraus operator for channel $\cc$ if and only if  the map $\cc_-  :   \rho  \mapsto \cc  (\rho) -   C_1 \rho C_1^\dag$ is completely positive.  
\end{lemma}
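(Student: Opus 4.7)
The plan is to prove the two implications separately, in each case by directly constructing (or invoking the existence of) an appropriate Kraus decomposition.

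For the forward direction, suppose $C_1$ is a Kraus operator for $\cc$, meaning there exists a Kraus representation $(C_1, C_2, \ldots, C_n)$ of $\cc$. Then one can simply write
\be
\cc_-(\rho) \;=\; \cc(\rho) - C_1 \rho C_1^\dag \;=\; \sum_{i=2}^{n} C_i \rho\, C_i^\dag,
\ee
which exhibits $\cc_-$ as a sum of completely positive maps of rank-one Kraus form, hence completely positive.

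For the converse, assume $\cc_-$ is completely positive. By the Kraus/Stinespring theorem, $\cc_-$ admits a Kraus representation $\cc_-(\rho) = \sum_{j=2}^{m} K_j \rho K_j^\dag$ for some finite family of operators $(K_j)_{j=2}^m$. Then, by the definition of $\cc_-$,
\be
\cc(\rho) \;=\; C_1 \rho C_1^\dag \,+\, \sum_{j=2}^{m} K_j \rho K_j^\dag,
\ee
so the family $(C_1, K_2, \ldots, K_m)$ is a Kraus family implementing $\cc$. The normalisation condition $C_1^\dag C_1 + \sum_{j\geq 2} K_j^\dag K_j = I$ is automatic: it follows from trace-preservation of $\cc$, obtained by taking the trace of the above identity against an arbitrary $\rho$ (or, equivalently, by dualising and evaluating on $I$). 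Hence $C_1$ appears in a valid Kraus representation of $\cc$, which is exactly the meaning of being a Kraus operator for $\cc$.

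I expect no serious obstacle here: both directions reduce to standard manipulations of Kraus representations, and the main point to check carefully is simply that trace-preservation of $\cc$ propagates to give a properly normalised Kraus representation $(C_1, K_2, \ldots, K_m)$ in the backward direction, so that one need only assume complete positivity of $\cc_-$ (and not any additional trace condition).
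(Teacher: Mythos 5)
Your proof is correct and follows essentially the same route as the paper's: the forward direction reads off a Kraus form for $\cc_-$ from a representation of $\cc$ containing $C_1$, and the converse adjoins $C_1$ to a Kraus representation of $\cc_-$, with normalisation inherited from trace-preservation of $\cc$. The only difference is that you spell out the normalisation check explicitly, which the paper leaves implicit.
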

\begin{proof} The `only if' part is immediate. For the `if' part,  a Kraus representation for $\cc$ containing the operator $C_1$ can be built by picking an arbitrary Kraus representation for the map $\cc_-$, say $(C_i)_{i=2}^n$. For any such choice, the operators $(C_i)_{i=1}^n$ form a Kraus representation for channel $\cc$. 
\end{proof}

\medskip

Hereafter, we will call the single  Kraus operator picked in Theorem \ref{th:characterisation}   a {\em pinned Kraus operator}. A channel with a pinned Kraus operator will be represented by the pair $({\cal C},  C_1)$.   Given a pinned Kraus operator $C_1$, and an arbitrary completion of it into a Kraus representation $(C_i)_i$, the corresponding controlled version of $\cc$ is given by the Kraus operators
\be \begin{cases}
     \widehat{C}_1 = \ketbra{0}{0} \otimes I + \ketbra{1}{1} \otimes  C_1  \\
     \widehat{C}_i = \ketbra{1}{1} \otimes  C_i \quad\quad \forall i \geq 2  \, .
\end{cases}\ee
From now on, we will use the notation  ${\tt ctrl}_{C_1} \mhyphen \cc$ to denote the controlled channel with the above Kraus operators.  The action of the controlled channel ${\tt ctrl}_{C_1} \mhyphen \cc$  on a generic product state of the target system and of the control is 
\begin{align}
\nonumber  {\tt ctrl}_{C_1} \mhyphen \cc  (\rho_C \otimes \rho_T)   &=  \sum_i    \,   \widehat{C}_i   \,  (\rho_C \otimes \rho_T)\,   \widehat{C}_i^\dag  \\
 \nonumber &   = \langle 0  |  \rho_C  |0\rangle ~  \ketbra{0}{0}_C \otimes  ~    \rho_T     \\
   \nonumber & \quad    +      \langle 1  |  \rho_C  |1\rangle ~  \ketbra{1}{1}_C \otimes  ~    \cc(\rho_T)     \\
 &  \quad  +   \langle 1  |  \rho_C  |0\rangle ~  \ketbra{1}{0}_C \otimes  ~   C_1 \, \rho_T   +  {\rm h.c.}\, , 
  \end{align}
  where h.c. denotes the Hermitian conjugate.    In the above formula,  the first two terms in the sum represent the classical control on the channel,  while  the second two terms represent the `coherent part' of the controlled operation.  

This pinned Kraus operator $C_1$ coincides with  the `transformation matrix' of Ref.\,\cite{Abbott_2020},  the `vacuum interference operator' of Ref.\,\cite{Chiribella_2019}, and the `$K$ operator' of Ref.\,\cite{dong2020controlled}. Ref.\,\cite{Abbott_2020} derived  the `transformation matrix' from a Stinespring dilation of the channel $\cc$, and interpreted it as the additional information that has to be provided about the physical implementation of  channel $\cc$ in order to build a controlled channel.  In contrast,  Ref.\,\cite{Chiribella_2019} derived the `vacuum interference operator' from an extension of channel $\cc$ to a larger channel that can act also on the vacuum. In this paper, we will make connection with the latter approach, showing that the controlled channel ${\tt ctrl}_{C_1} \mhyphen \cc$ is in one-to-one correspondence, both mathematically and physically, with a particular extension of the original channel $\cc$, corresponding to the vacuum extension of Ref.\,\cite{Chiribella_2019}.

Compared to Refs.\,\cite{Abbott_2020,Chiribella_2019,dong2020controlled}, our presentation  makes it evident that  the operator characterising  a controlled version of channel $\cc$ can be simply understood as a Kraus operator of this channel, a fact that has not been pointed out  before\footnote{A  proof in Ref.\,\cite{dong2020controlled} mentioned  that any possible `$K$ operator' is a Kraus operator of $\cc$, without however discussing the reverse implication. }.  In addition, the explicit relation between control and pinned Kraus operators suggests further extensions of the notion of quantum control, as discussed in the next  subsection.

\subsection{Control between multiple noisy channels} \label{sec:2ControlDef}
We now consider a generalisation of the notion of coherent control: the case in which each of the two values of the control is associated to the execution of a different channel on the target system. In other words, we now consider the coherent control between the execution of two channels $\ca$ and $\cb$, rather than between one channel and the identity channel. We will now take the input and output target systems, $T_\mr{in}$ and $T_\mr{out}$, to be of possibly different dimensions.

\correction{Before entering into the technical details, it may be helpful to note that different authors have used different names for what is essentially the same notion: Refs. \cite{aharonov1990superpositions}, \cite{oi2003interference}, \cite{aaberg2004subspace}, \cite{Abbott_2020}, \cite{Chiribella_2019} use the expressions `superposition of time evolutions', `interference of CP maps', `gluing of CP maps', `coherent control of quantum channels', and `superposition of quantum channels', respectively. We review the existing terminologies in Appendix \ref{app:terminology}.} 

If we start with the basic case of two isometric gates, represented by two isometries $U, V : \ch_{T_\mr{in}} \to \ch_{T_\mr{out}}$, the standard notion of a `controlled-$(U,V)$' channel is given by the isometry 

\begin{align}\label{eq:ctrlUV}
{\tt ctrl} \mhyphen   (U,V)   :=   |0\rangle \langle 0|  \otimes  U  +  |1\rangle \langle 1|    \otimes  V\, .
\end{align}

Extending this definition to the case of the control between two noisy evolutions, represented by CPTP maps $\ca, \cb: \cl(\ch_{T_\mr{in}}) \to \cl(\ch_{T_\mr{out}})$, requires more work. Once again, there are a variety of ways of defining the controlled version of $\ca$ and $\cb$. These different versions can be obtained by picking Kraus representations of same length\footnote{Note that any two Kraus representations can be taken to be of the same length by adjoining 0's to the shortest one.} $(A_i)_{i=1}^n$ and $(B_i)_{i=1}^n$ for $\ca$ and $\cb$ and defining the Kraus operators:

\begin{align}\label{eq:ctrl2Kraus}
{\tt ctrl} \mhyphen   (A_i,B_i)   :=  |0\rangle \langle 0| \otimes A_i  +  |1\rangle \langle 1|  \otimes B_i \, .
\end{align}

A one-to-one parametrisation of the possible choices is provided in the  following theorem, proven in Appendix \ref{app:2ControlParam}:
\medskip
\begin{theorem}\label{th:2ControlParam}
Given a Kraus representation $(A_i)_{i=1}^n$ of minimal length of $\ca$, the choice of a control between $\ca$ and $\cb$ is in one-to-one correspondence with the choice of $n$ Kraus operators of $\cb$.
\end{theorem}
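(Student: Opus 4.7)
My plan generalises the strategy of Lemmas \ref{lem:characterisation1} and \ref{lem:characterisation2}: I would first reduce any Kraus representation of a controlled channel to a canonical form that exposes the fixed minimal representation $(A_i)_{i=1}^n$, and then exploit the rigidity of minimal Kraus representations to pin down the $\cb$-tuple uniquely. The bijection to be proved sends a controlled channel to the $n$-tuple $(B_i)_{i=1}^n$ appearing alongside $(A_i)_{i=1}^n$ in this canonical form.

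Step 1 (canonical form). Given an arbitrary Kraus representation $\bigl(\ketbra{0}{0}\otimes A'_j + \ketbra{1}{1}\otimes B'_j\bigr)_{j=1}^m$ of a controlled channel, the minimality of $(A_i)_{i=1}^n$ guarantees an isometry $V\in\mathbb{C}^{m\times n}$ with $V^\dagger V = I_n$ such that $A'_j = \sum_i V_{ji} A_i$. Completing $V$ to a unitary $U=[V\mid V_\perp]$ and applying $U^\dagger$ to the Kraus representation yields an equivalent representation whose first $n$ Kraus operators take the form $\ketbra{0}{0}\otimes A_i + \ketbra{1}{1}\otimes B_i$, with $B_i := \sum_j V^*_{ji} B'_j$, and whose remaining $m-n$ Kraus operators have vanishing $\ca$-component (because $V_\perp^\dagger V = 0$), hence are of the form $\ketbra{1}{1}\otimes \tilde B_k$. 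Moreover, the collection $(B_i)\cup(\tilde B_k)$ is obtained from $(B'_j)$ by the same unitary $U^\dagger$, so it is itself a Kraus representation of $\cb$; in particular each $B_i$ is a Kraus operator of $\cb$ in the sense of Theorem \ref{th:characterisation}.

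Step 2 (injectivity, the main technical point). Suppose that two $n$-tuples $(B_i)$ and $(B^\sharp_i)$ yield the same controlled channel when completed as in Step 1. Padding both Kraus representations with zero operators to a common length and expressing one in terms of the other through a unitary $W$, I would match the $\ketbra{0}{0}$-components to obtain $A_i = \sum_j W_{ji} A_j$ for $i\le n$, with $A_j = 0$ for $j>n$. Here the minimality hypothesis enters decisively: a linear dependence $\sum_i c_i A_i = 0$ would, by a change of basis on $\mathbb{C}^n$ to an orthonormal basis containing $c/\lVert c\rVert$, produce a strictly shorter Kraus representation of $\ca$, contradicting minimality. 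Thus the $A_i$ are linearly independent, which forces $W_{ji}=\delta_{ji}$ on the coherent $n\times n$ block; unitarity of $W$ then makes it block-diagonal with $I_n$ on the top-left block. Matching the $\ketbra{1}{1}$-components now gives $B^\sharp_i = B_i$ for all $i\le n$.

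Step 3 (surjectivity and well-definedness). Conversely, given any $n$-tuple $(B_i)_{i=1}^n$ that extends to a Kraus representation of $\cb$, I would define the Kraus operators of a candidate controlled channel by combining $(\ketbra{0}{0}\otimes A_i+\ketbra{1}{1}\otimes B_i)_{i=1}^n$ with $(\ketbra{1}{1}\otimes B_{n+k})_{k}$ for any completion; the normalisation $\sum_i A_i^\dagger A_i = I$ and $\sum_i B_i^\dagger B_i + \sum_k B_{n+k}^\dagger B_{n+k} = I$ exactly produce the identity, confirming validity. A short computation shows that any two completions differ by a unitary on the incoherent sub-block, which acts trivially on the resulting channel, so the assignment $(B_i)\mapsto$ controlled channel is well-defined.

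The principal obstacle is Step 2: minimality of $(A_i)$ must be converted into linear independence, and this rigidity must be propagated through the mixed Kraus representation (whose coherent and incoherent blocks in general have different ranks) to eliminate the gauge freedom in the $\cb$-slot. Once this step is secured, Steps 1 and 3 are essentially bookkeeping using the standard unitary freedom of Kraus representations.
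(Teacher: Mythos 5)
Your proposal is correct and follows essentially the same route as the paper's proof in Appendix B: reduction to a canonical form via the unitary freedom of Kraus representations and the minimality of $(A_i)_{i=1}^n$, injectivity via the linear independence of a minimal Kraus family forcing the connecting unitary to be the identity on the coherent block, and well-definedness of the completion via a unitary acting only on the incoherent block. The only difference is cosmetic: you spell out why minimality implies linear independence and treat surjectivity as a separate step, both of which the paper leaves implicit.
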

\medskip

By `$n$ Kraus operators of $\cb$', we mean $n$ operators that appear together in at least one Kraus representation of $\cb$. Calling these operators $B_i$'s, and arbitrarily completing them into a Kraus representation $(B_i)_{i=1}^{n'}$ of $\cb$, Kraus operators for the corresponding controlled channel are given by the concatenation of the $({\tt ctrl} \mhyphen   (A_i,B_i))_{i=1}^n$ and the $({\tt ctrl} \mhyphen   (0,B_i))_{n < i \leq n'}$. Note that in this parametrisation, only the Kraus operators of $\cb$ vary; those of $\ca$ are fixed from the start.

The previous considerations can be extended to the case of a control system of dimension $N$, controlling between the execution of $N$ channels $\cc^1,\dots,\cc^N$. A strategy would be to proceed via recursion, first picking a control between $\cc^1$ and $\cc^2$, then picking a control between this controlled channel and $\cc^3$, etc.

\section{A new resource for coherent control: sector-preserving channels} \label{sec:sectorpreserv}

Here we discuss the physical resources needed to implement coherent control of general quantum channels.  

\subsection{A no-go theorem for coherent  control of  unitary gates, and a way to evade it}  
It has been proven in various ways that it is impossible to construct a controlled unitary gate starting from a black box that implements the corresponding uncontrolled unitary gate  \cite{Araujo2014, Soeda2013, Thompson2018, Chiribella_2016,  Bisio_2016, gavorova2020topological}.    Mathematically,  the no-go theorem is that it is impossible to find a quantum supermap that transforms  a generic unitary channel  $\map U: \rho \mapsto  U \rho U^\dag$ into the controlled unitary channel ${\tt ctrl} \mhyphen \map U:  \rho \mapsto    {\tt ctrl} \mhyphen U\, \rho  \,  {\tt ctrl} \mhyphen U^\dag$ with the  operator ${\tt ctrl} \mhyphen U$  defined in Eq.~(\ref{eq:ctrlU}).    

The origin of the impossibility is that the uncontrolled unitary channel  $\map U$  is provided as a {\em black box},  without any further information on its action except for the fact that $\map U$ is known to be unitary.  One way to evade the no-go theorem is to start from a device that is not a complete black box, but rather a {\em grey box}, whose action is partially known. For example, one could be given a device that implements a unitary gate  $\widetilde U  = \ketbra{ \phi_0 }{\phi_0}  \oplus U  $,  where $\tilde{U}$ acts on $\ch$ and $U$ is an unknown unitary gate acting on a $d$-dimensional sector (i.e. orthogonal subspace) $\ch^1 \subseteq \ch$, and $\ket{\phi_0}$ is another state, orthogonal to all the states in $\ch^1$. In this case, the action of the device in the sector $\ch^1$ is  unknown, while the action of the device on the vector $\ket{\phi_0}$ is known.      In this setting, the controlled gate   ${\tt ctrl} \mhyphen U$  can be built from the gate $\widetilde U$ using a simple quantum circuit \cite{Araujo2014, Friis2014, Thompson2018}.  

The use of grey boxes that act in a known way on some input states   is central to all existing proposals for experimental implementations of coherent controls of unitary gates.    For example, photonic implementations \cite{zhou2011AddingControl,zhou2013CalculatingEigenvalues} achieve coherent control of certain optical devices, such as polarisation rotators, by exploiting the fact that such devices are passive, and therefore transform the vacuum state into itself.  In these examples, the sector $\ch^1$ is spanned by single-photon polarisation states, and the state $\ket{\phi_0}$ is the zero-photon Fock state.    

In  trapped-ions implementations \cite{Friis2014, Dunjko_2015}, the input device uses  a laser pulse to implement a unitary gate  by stimulating the transition between the two electronic levels.  The pulse is far off resonance with the transition between the other  electronic levels of the ion, and therefore the device acts trivially on such levels. In this case, the state $\ket{\phi_0}$ can be any of the levels that are unaffected by the pulse.    A similar  situation arises in superconducting-qubits implementations \cite{Friis_2015}.

In summary, all the existing proposals of experimental implementations use grey box unitary gates $\widetilde U$ that  act
\begin{enumerate}
\item as unknown gates $U$ on a sector $\ch^1  \simeq  \ch_T$, and  
\item  as the identity gate $I$ on another sector $\ch^0$, orthogonal to $\ch^1$.   
\end{enumerate}   
In the following we will extend this scheme from unitary gates to arbitrary noisy channels, and to the case of gates acting as the identity on several sectors, showing that access to a suitable grey box channel allows one to build a controlled channel that is in one-to-one correspondence with it. 

We will restrict ourselves to the case in which the sectors on which the identity is applied are one-dimensional; however, all our arguments could be extended to the case in which they are multi-dimensional and the grey boxes act as the identity on each of them. Note that when the \correction{extension} sectors have the same dimension as $\ch^0$, the above requirements lead to the usual definition of controlled channels.

\subsection{Modelling noisy grey boxes: sector-preserving channels} \label{sec:experimental}

We now consider how the grey box approach of the previous section can be extended from unitary gates to arbitrary noisy channels. 

To this purpose, we consider a noisy quantum channel $\widetilde \cc$ that  acts on a system $S$ with a Hilbert space $\ch_S$ partitioned into two sectors, $\ch_S = \ch_S^0 \oplus \ch_S^1$, with $\ch_S^0$ one-dimensional and  $\ch_S^1 \simeq \ch_T$. The channel $\widetilde \cc$ will act 
 \begin{enumerate}
\item as a completely unknown channel  $\map C : \cl (\ch_S^1) \to \cl (\ch_S^1) $ on the input states in $\cl(\ch_S^1) $, and  
\item as the identity channel $I$  on the unique input state in $\cl(\ch_S^0)$.    
\end{enumerate}   

Such grey boxes have a simple characterisation: they are the channels that preserve the sectors $\ch_S^m$, thus called sector-preserving channels\footnote{\correction{We note that the notion of sector-preservingness has been independently introduced in the past, under the name `subspace-preservingness'; see Ref.\,\cite{aaberg2004subspace}.}}.

\medskip
\begin{definition} \label{def:sectorpres}
Let $\ch_S = \bigoplus_{k=0}^m \ch_S^k$ be a Hilbert space with a preferred partition into sectors.
A channel $ \widetilde \cc :  \cl(\ch_S)  \to \cl  (\ch_S)$   is  {\em sector-preserving} if it preserves the set of states with support in the subspace $\ch_S^k$, for every $k \in  \{0,\dots, m\}$. In formula, 
\be\label{secpres} 
   \forall k, \, \forall \rho \in \cl(\ch_S^k), \quad \cc(\rho) \in \cl(\ch_S^k)  \, ,
\ee
Note that $\rho \in \cl(\ch_S^k)$ equivalently means that ${\sf Supp}  (\rho) \subseteq \ch_S^k$, where ${\sf Supp}  (\rho)$ denotes the support of $\rho$.    
\end{definition}

\medskip
Sector-preserving channels can be seen as \correction{a} special case of the notion of channels `following a route' (i.e., satisfying given sectorial constraints), introduced in Ref.\,\cite{vanrietvelde2020routed}: namely, they are the channels that follow the identity route $\delta \times \delta$.
The condition (\ref{secpres}) was  called  the `no-leakage condition' in   Ref.\,\cite{Chiribella_2019}. 

When some of the sectors  $\ch_S^k$ are one-dimensional,   the condition of sector preservation  (\ref{secpres}) implies that the channel $\widetilde \cc$ acts as the identity channel on each of them. In the following, we will denote the sector preserving channels with $\dim (\ch_S^k)  =1 \, \forall k < m$   and $\dim (\ch_S^m)  =d $  as {\em sector-preserving channels of type $(\underbrace{1,\dots, 1}_{m~{\rm times}}, d )$}. In particular, the channels we asked for in this Section are the sector-preserving channels of type $(1,d)$.
 
 The approach of considering an extended channel that acts as $\map C$  on a given sector was introduced  in  Ref.\,\cite{Chiribella_2019}. There, there was only one one-dimensional sector, which was called the `vacuum sector', and the  channel $\widetilde \cc$ was called a `vacuum extension', with this terminology  motivated by the photonic implementations.   Here, however, we prefer to use the expressions `\correction{extension} sectors' and `extended channel', which are  neutral with respect to the choice of experimental implementations.   
 
The key point of our paper is that the grey box channel $\widetilde \cc$, and not the black box channel $\map C$, should be regarded as the initial  resource for the implementation of coherent control. In other words, we argue that one should  shift the terms of the problem  away from the question `what can one do with an unknown channel $\cc$?'.  Instead, one should ask  the question `what can one do with a channel $\widetilde{\cc}$ that acts as an unknown channel on a given sector?'.   

A similar shift of perspective was proposed in Refs.\,\cite{Chiribella_2019, Kristjansson_2020, kristjansson2020singleparticle} for the purpose of defining quantum communication protocols where messages can travel in a coherent superposition of multiple trajectories.    In this context, extended channels were used to describe  communication devices that can take as input either one particle (corresponding, in our notations, to the sector $\ch_S^1$) or the vacuum (corresponding to the sector $\ch_S^0$).     This modelling was essential to define  resource theories of quantum communication\,\cite{Kristjansson_2020}, where the initial resources are communication devices that can be connected in a coherent superposition of multiple configurations.  Our paper can be viewed as an application of the same approach to the task of the coherent control of quantum channels: the extended channel represents the initial resource,  and the question is which types of controlled channel can be constructed from such resource.

\subsection{The case of one \correction{extension} sector} \label{sec:OneSuppSector}

The case  where there is only one \correction{extension} sector $\ch_S^1$ (i.e., of sector-preserving channels of type $(1,d)$) is particularly relevant in this paper, because, as we will show later, it provides the fundamental resource for the realisation of the controlled channels defined in Eq. (\ref{eq:ctrlKraus}).  

In terms of Kraus representation, the sector-preserving  channels of  type $(1,d)$ can be characterised  as the channels with Kraus operators of the form 
\be \label{eq:SectorPreservingGeneralForm} \widetilde C_i = \alpha_i \oplus\,  C_i ,\ee
where $(C_i)_i$ is a Kraus representation of some channel acting on  sector $\ch_S^1 \simeq \ch_T$, and the $\alpha_i$'s are amplitudes satisfying the normalisation condition $\sum_i  |\alpha_i|^2 = 1$.  
For a proof of the above equation, see  Lemma 1 in Ref.\,\cite{Chiribella_2019} (this can also be seen as a consequence of the more general Theorem 6 in Ref.\,\cite{vanrietvelde2020routed}). 

A one-to-one parametrisation of the sector-preserving  channels of type $(1,d)$  can be obtained with the same approach as in Section \ref{sec:CoherentControlCharacterisation}.  
 
 \medskip
\begin{lemma} \label{lem:characterisationSectorPreserving1}
Every sector-preserving channel of type $(1,d)$ has a  Kraus representation of the form
\be \label{eq:characterisationSectorPreserving1}
\begin{cases}
  \widetilde C_1 = 1 \oplus C_1     \\
  \widetilde C_i = 0 \oplus C_i   \quad\quad \forall i \geq 2 \, ,
\end{cases}
\ee
where $(C_i)_i$ is a Kraus representation of some channel on the $d$-dimensional sector. 
\end{lemma}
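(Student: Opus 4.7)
The plan is to mimic the proof strategy used in Lemma \ref{lem:characterisation1}, applying a unitary rotation to the Kraus operators. I will start from the general form of sector-preserving channels of type $(1,d)$ given in Eq.~(\ref{eq:SectorPreservingGeneralForm}), namely from an arbitrary Kraus representation $(\widetilde C_i)_{i=1}^n$ with $\widetilde C_i = \alpha_i \oplus C_i$ and $\sum_i |\alpha_i|^2 = 1$, and use the freedom in the choice of Kraus representation to rotate the amplitude vector $\boldsymbol\alpha = (\alpha_1,\dots,\alpha_n)$ onto the first basis vector.

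First I would observe that, since $\boldsymbol\alpha$ is a unit vector in $\mathbb{C}^n$, there exists a unitary matrix $V$ such that $V\boldsymbol\alpha = \mathbf{u}_n$, where $\mathbf{u}_n$ denotes the column vector with a $1$ in the first entry and $0$ elsewhere. Defining the new Kraus operators $\widetilde C'_j := \sum_i V_{ji}\, \widetilde C_i$, which form an equivalent Kraus representation of the same sector-preserving channel, I would compute directly
\begin{equation}
\widetilde C'_j \;=\; \sum_i V_{ji}\, (\alpha_i \oplus C_i) \;=\; \Bigl(\sum_i V_{ji}\, \alpha_i\Bigr) \,\oplus\, \Bigl(\sum_i V_{ji}\, C_i\Bigr) \;=\; (V\boldsymbol\alpha)_j \,\oplus\, C'_j,
\end{equation}
where $C'_j := \sum_i V_{ji}\, C_i$.

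Since $(V\boldsymbol\alpha)_1 = 1$ and $(V\boldsymbol\alpha)_j = 0$ for $j \geq 2$, this gives exactly the representation in Eq.~(\ref{eq:characterisationSectorPreserving1}). It remains only to note that the operators $(C'_j)_j$ form a valid Kraus representation of some channel on the $d$-dimensional sector: this follows because the unitary rotation $V$ connecting two Kraus representations of the extended channel also connects two Kraus representations of its restriction to the $d$-dimensional sector, or equivalently because $\sum_j C_j^{\prime\,\dagger} C'_j = \sum_i C_i^\dagger C_i$ by the unitarity of $V$, which preserves the trace-preservation condition on the restricted operators.

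I do not anticipate any serious obstacle: the argument is a direct adaptation of Lemma \ref{lem:characterisation1} to the extended setting, where the key observation is that the block-diagonal structure $\alpha_i \oplus C_i$ is preserved under arbitrary linear combinations of Kraus operators, so a single unitary rotation simultaneously simplifies both the amplitudes and the operators on the $d$-dimensional sector.
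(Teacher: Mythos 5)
Your proof is correct and follows exactly the same route as the paper's: starting from the general form $\widetilde C_i = \alpha_i \oplus C_i$ of Eq.~(\ref{eq:SectorPreservingGeneralForm}) and applying a unitary $V$ with $V\boldsymbol\alpha = \mathbf{u}_n$, just as in Lemma~\ref{lem:characterisation1}. The explicit computation showing that the direct-sum block structure is preserved under the rotation, and the check that the $(C'_j)_j$ remain a valid Kraus representation, are details the paper leaves implicit but are entirely in line with its argument.
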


\begin{proof}
As in the proof of Lemma \ref{lem:characterisation1}, this alternative Kraus representation can be found by using a unitary matrix $(V_{ji})_{ji}$ that sends the normalised vector $(\alpha_i)_i$ to $(1, 0, \dots, 0)$.
\end{proof}

Using the same arguments as in Section \ref{sec:CoherentControlCharacterisation}, it is easy to see that the sector-preserving channels $\widetilde {\cc}$ are in one-to-one correspondence with pairs $(  \map C,   C_1)$, consisting of a channel acting on sector $\cl(\ch_S^1)$, and of a Kraus operator for $\cc$.  In short, we have the following.

\medskip 

\begin{theorem}\label{th:characterisationSectorPreserving}
The sector-preserving channels of type $(1,d)$ are in one-to-one correspondence with channels with a pinned Kraus operator on their $d$-dimensional sector.  
\end{theorem}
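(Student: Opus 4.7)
The plan is to construct the bijection explicitly and check it is well-defined in both directions. For the forward map, given a sector-preserving channel $\widetilde{\cc}$ of type $(1,d)$, I would apply Lemma~\ref{lem:characterisationSectorPreserving1} to extract a canonical Kraus representation of the form~(\ref{eq:characterisationSectorPreserving1}), and associate to $\widetilde{\cc}$ the pair $(\cc, C_1)$, where $\cc$ is the channel on the $d$-dimensional sector defined by the Kraus operators $(C_i)_i$ and $C_1$ is the distinguished operator sitting in the first slot, which is manifestly a Kraus operator of $\cc$. For the inverse map, given a pair $(\cc, C_1)$ with $C_1$ a pinned Kraus operator of $\cc$, I would pick any completion of $C_1$ into a Kraus representation $(C_i)_{i=1}^n$ of $\cc$ (such a completion exists by the very definition of a pinned Kraus operator), and assemble $\widetilde{\cc}$ from the Kraus operators in~(\ref{eq:characterisationSectorPreserving1}).

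Well-definedness of the inverse map is the easier part. Writing an arbitrary input state in the block form
\[ \rho = \begin{pmatrix} a & \langle b| \\ |c\rangle & D \end{pmatrix} \]
induced by $\ch_S = \ch_S^0 \oplus \ch_S^1$, a direct computation of $\widetilde{\cc}(\rho) = \sum_i \widetilde C_i \rho \widetilde C_i^\dag$ using~(\ref{eq:characterisationSectorPreserving1}) yields
\[ \widetilde{\cc}(\rho) = \begin{pmatrix} a & \langle b| C_1^\dag \\ C_1 |c\rangle & \cc(D) \end{pmatrix}, \]
an expression depending only on $\cc$ and $C_1$, not on the particular completion chosen. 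This formula also makes the sector-preservation of type $(1,d)$ of $\widetilde{\cc}$ manifest.

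The main obstacle is the well-definedness of the forward map: the extracted pair $(\cc, C_1)$ should not depend on the particular canonical Kraus representation chosen. The strategy mirrors the proof of Lemma~\ref{lem:characterisation2}. Padding with zero operators if necessary so that two canonical representations $(\widetilde C_i)_{i=1}^n$ and $(\widetilde C'_j)_{j=1}^n$ of $\widetilde{\cc}$ have equal length, they must be connected by a unitary $W$ with $\widetilde C'_j = \sum_i W_{ji} \widetilde C_i$. Evaluating the $j=1$ equation on the 1-dimensional sector $\ch_S^0$, where $\widetilde C_1$ contributes $1$ and all the other $\widetilde C_i$ contribute $0$, forces $1 = W_{11}$, and by unitarity $W_{1j} = W_{j1} = 0$ for $j > 1$. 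This immediately yields $C'_1 = C_1$, while the restriction of $W$ to the block indexed by $i,j \geq 2$ is a unitary connecting $(C_i)_{i \geq 2}$ with $(C'_j)_{j \geq 2}$; combined with $C_1 = C'_1$, this shows both representations define the same channel $\cc$. The two maps are then inverse to each other by direct inspection of the constructions, so the theorem follows, with no ingredient beyond the block-decomposition techniques already developed in Lemmas~\ref{lem:characterisation1} and~\ref{lem:characterisation2}.
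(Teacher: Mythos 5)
Your proof is correct and follows essentially the same route as the paper, which establishes this theorem by appealing to ``the same arguments as in Section~\ref{sec:CoherentControlCharacterisation}'': you use Lemma~\ref{lem:characterisationSectorPreserving1} for the canonical form and then transplant the unitary-connection argument of Lemma~\ref{lem:characterisation2} (evaluating on the one-dimensional sector to force $W_{11}=1$) to get uniqueness of the pinned operator, exactly as intended. The only difference is that you spell out explicitly, via the block computation of $\widetilde{\cc}(\rho)$, what the paper leaves implicit; this is a welcome completion rather than a deviation.
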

\medskip

The sector-preserving channel of type $(1,d)$ that corresponds to the channel $\cc$ with the pinned Kraus operator $C_1$ on its $d$-dimensional sector shall be called $\widetilde{\cc}[C_1]$. In the case of unitary channels, the characterisation is particularly simple.

\medskip 

\begin{corollary}\label{cor:UnitaryCorrespondence}
Sector-preserving unitary channels of type $(1,d)$  are in one-to-one correspondence with unitary operators  in dimension $d$. 
Explicitly, the correspondence between  sector-preserving unitary channels $\widetilde {\cu}$ and unitary operators $U$ is given by the relation
\be  \quad \widetilde{\cu}  (\rho) = (1 \oplus U) \, \, \rho \, \, (1  \oplus U)^\dag  \qquad \forall \rho  \in \cl (\ch_S) \, . \ee
\end{corollary}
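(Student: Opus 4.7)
My plan is to derive the corollary as a direct specialization of Theorem \ref{th:characterisationSectorPreserving} to the unitary case. Theorem \ref{th:characterisationSectorPreserving} already gives a bijection between sector-preserving channels of type $(1,d)$ and pairs $(\cc, C_1)$ where $\cc$ is a channel on the $d$-dimensional sector and $C_1$ is one of its pinned Kraus operators. To prove the corollary, it suffices to show that, under this bijection, sector-preserving \emph{unitary} channels correspond exactly to pairs $(\cu, U)$ where $\cu$ is a unitary channel on the $d$-dimensional sector and $U$ is its (essentially unique) Kraus operator; and then to verify that the explicit formula in the statement matches.

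First I would prove the forward direction. Suppose $\widetilde \cu$ is a sector-preserving unitary channel of type $(1,d)$. By Lemma \ref{lem:characterisationSectorPreserving1}, it admits a Kraus representation of the form $\widetilde C_1 = 1 \oplus C_1$ and $\widetilde C_i = 0 \oplus C_i$ for $i \ge 2$. Since $\widetilde \cu$ is unitary, any Kraus representation of it must reduce, up to a global phase, to a single unitary operator; this means that the only Kraus operator in this family that can be nonzero is $\widetilde C_1$, and it must itself be unitary. Since $\widetilde C_1 = 1 \oplus C_1$, unitarity of $\widetilde C_1$ on $\ch_S$ forces $C_1$ to be unitary on the $d$-dimensional sector. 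Setting $U := C_1$ then yields $\widetilde \cu(\rho) = (1 \oplus U) \, \rho \,(1 \oplus U)^\dag$ for all $\rho \in \cl(\ch_S)$.

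For the converse direction, given any unitary $U$ acting on the $d$-dimensional sector $\ch_S^1 \simeq \ch_T$, the operator $1 \oplus U$ is manifestly unitary on $\ch_S$, so the channel defined by $\widetilde \cu(\rho) = (1 \oplus U)\,\rho\,(1 \oplus U)^\dag$ is a unitary channel. It is also sector-preserving: any $\rho$ with support in $\ch_S^0$ is mapped to itself (since $1\oplus U$ acts as the identity there), and any $\rho$ with support in $\ch_S^1$ is mapped to $U\rho U^\dag$, which still has support in $\ch_S^1$. Hence $\widetilde \cu$ is a sector-preserving unitary channel of type $(1,d)$.

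Finally, I would note that the correspondence $U \mapsto \widetilde \cu$ is injective: if two unitaries $U$ and $U'$ give rise to the same channel $\widetilde \cu$, then by the pinned-Kraus-operator parametrisation of Theorem \ref{th:characterisationSectorPreserving} (or equivalently by the uniqueness of the Kraus representation of a unitary channel up to global phase) we must have $1 \oplus U = e^{i\varphi} (1 \oplus U')$; restricting to the 1-dimensional sector forces $e^{i\varphi} = 1$, and so $U = U'$. The combination of injectivity, surjectivity, and the explicit formula establishes the claim. I do not anticipate a real obstacle here — the only point deserving care is ensuring that the phase ambiguity in Kraus representations of unitary channels is eliminated by the fixed $1$ appearing in the $\ch_S^0$ block, which is precisely what makes the bijection canonical.
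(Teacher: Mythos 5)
Your proof is correct and follows essentially the route the paper intends: the corollary is stated there without an explicit proof, being a direct specialization of Lemma~\ref{lem:characterisationSectorPreserving1} and Theorem~\ref{th:characterisationSectorPreserving} to the unitary case, with the phase ambiguity fixed by the $1$ in the one-dimensional sector --- exactly the point you emphasize. The only step worth tightening is the claim that only $\widetilde{C}_1$ can be nonzero: since all Kraus operators of a unitary channel are proportional to a single unitary $W$, and $\widetilde{C}_1 = 1 \oplus C_1$ has entry $1$ in the one-dimensional block, $W$ is nonzero on that block, so the vanishing of that block in $\widetilde{C}_i = 0 \oplus C_i$ for $i \ge 2$ forces the proportionality constant, and hence the whole operator, to vanish.
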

\medskip 

This is in contrast with the general situation for unitary channels, which correspond to unitary operators only up to an arbitrary global phase. The crucial fact here is that the one-dimensional \correction{extension} sector can be used to fix this phase gauge in the $d$-dimensional sector.

Going back to the  case of general channels,   Theorem \ref{th:characterisationSectorPreserving} establishes a one-to-one correspondence between sector-preserving channels of type $(1,d)$  and controlled channels:  

\begin{corollary}\label{cor:correspondence(1,d)}
For any $d$, the following sets are in one-to-one correspondence:  
\begin{enumerate}
\item  controlled channels as defined in (\ref{eq:ctrlKraus}), with a $d$-dimensional target system; 
\item  sector-preserving channels of type $(1,d)$;
\item channels with a pinned Kraus operator in dimension $d$.
\end{enumerate}
\end{corollary}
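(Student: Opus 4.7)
The plan is to observe that the corollary is essentially an immediate consequence of two results already established in the paper, combined by transitivity of one-to-one correspondences. There is no fundamentally new content: the hard work has been done in Theorem \ref{th:characterisation} and Theorem \ref{th:characterisationSectorPreserving}, and the corollary simply chains them together.

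First, I would recall the bijection between items (1) and (3). This is precisely the content of Theorem \ref{th:characterisation}: a controlled version of a channel $\mathcal{C}$ on a $d$-dimensional target, in the sense of Eq.~(\ref{eq:ctrlKraus}), is uniquely specified by the underlying channel $\mathcal{C}$ together with a distinguished (``pinned'') Kraus operator $C_1$, and conversely every pair $(\mathcal{C}, C_1)$ with $C_1$ a Kraus operator of $\mathcal{C}$ gives rise to a controlled channel ${\tt ctrl}_{C_1}\mhyphen \mathcal{C}$. Thus the assignment $(\mathcal{C}, C_1) \mapsto {\tt ctrl}_{C_1}\mhyphen \mathcal{C}$ is the bijection between (3) and (1).

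Next, I would recall the bijection between items (2) and (3). This is the content of Theorem \ref{th:characterisationSectorPreserving}, which establishes that a sector-preserving channel of type $(1,d)$ is uniquely specified by a pair $(\mathcal{C}, C_1)$ consisting of a channel $\mathcal{C}$ acting on the $d$-dimensional sector together with a pinned Kraus operator $C_1$ of $\mathcal{C}$. Explicitly, the bijection is $(\mathcal{C}, C_1) \mapsto \widetilde{\mathcal{C}}[C_1]$, where $\widetilde{\mathcal{C}}[C_1]$ has the Kraus representation given by Eq.~(\ref{eq:characterisationSectorPreserving1}).

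Composing these two bijections gives a one-to-one correspondence between (1) and (2): namely, to the sector-preserving channel $\widetilde{\mathcal{C}}[C_1]$ one associates the controlled channel ${\tt ctrl}_{C_1}\mhyphen \mathcal{C}$, and this map is a bijection since it factors through the set of channels with a pinned Kraus operator. Since the same object (3) mediates both correspondences, the three sets are pairwise in one-to-one correspondence, establishing the corollary. There is no real obstacle here; the only subtlety worth mentioning in the write-up is that one should make explicit that the bijections go through the common parametrising object $(\mathcal{C}, C_1)$, so that the diagram of bijections commutes in the obvious way.
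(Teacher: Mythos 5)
Your proposal is correct and matches the paper's own (implicit) argument exactly: the corollary is obtained by chaining the bijection of Theorem~\ref{th:characterisation} between controlled channels and channels with a pinned Kraus operator with the bijection of Theorem~\ref{th:characterisationSectorPreserving} between sector-preserving channels of type $(1,d)$ and the same parametrising pairs $(\cc, C_1)$. Nothing further is needed.
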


\correction{Let us comment on the respective roles, for our purposes, of the three notions which Corollary \ref{cor:correspondence(1,d)} shows to be mathematically equivalent. The first (controlled channels) is essentially an informational notion, with practical use in quantum protocols: this is typically what one wants to eventually obtain. The second (sector-preserving channels of type $(1,d)$) can be understood as the physical resource (with the sector-preserving property often corresponding to physical features of an interaction, such as conservation laws) allowing to implement the first one. Finally, the third (channels with a pinned Kraus operator) is a purely mathematical notion, with no direct practical interpretation, which serves to provide a simple one-to-one mathematical parametrisation to the first two.}

In fact, a more careful inspection also reveals that the one-to-one correspondence between the above sets can be implemented by linear maps.   For the sets of controlled channels and sector-preserving channels, the correspondence can be implemented physically by quantum circuits that convert sector-preserving channels into controlled channels, and vice-versa.  This physical correspondence is the object of the next section.

\section{The control supermap and the equivalence between sector-preserving and controlled channels} \label{sec:ControlSupermap+Equivalence}
\subsection{The control supermap} \label{sec:ControlSupermapOne}

In the previous section, we showed that the controlled channels on target systems of dimension $d$ (the ${\tt ctrl}_{C_1} \mhyphen \cc$) are in one-to-one correspondence with sector-preserving channels of type $(1,d)$ (the $\widetilde{\cc}[C_1]$).

Our point is now to show that for any given $d$, there is a universal circuit architecture in which an agent who possesses the sector-preserving channel $\widetilde{\cc}[C_1]$ can insert this channel in order to implement the controlled channel ${\tt ctrl}_{C_1} \mhyphen \cc$.

\begin{figure*}
    \centering
    $%
\begin{tikzpicture}
	\begin{pgfonlayer}{nodelayer}
		\node [style=none] (45) at (-1.75, 3) {};
		\node [style=none] (46) at (-1.75, -3) {};
		\node [style=none] (47) at (3, -3) {};
		\node [style=none] (48) at (3, 3) {};
		\node [style=none] (49) at (0.75, 1.75) {};
		\node [style=none] (50) at (0.75, -1.75) {};
		\node [style=none] (51) at (2, 1.75) {};
		\node [style=none] (52) at (2, -0.75) {};
		\node [style=none] (53) at (2, -1.75) {};
		\node [style=none] (54) at (3, 1.75) {};
		\node [style=none] (55) at (3, -1.75) {};
		\node [style=none] (56) at (2, 0.75) {};
		\node [style=none] (57) at (-0.5, -3) {};
		\node [style=none] (58) at (-0.5, -4) {};
		\node [style=right label] (59) at (-0.5, -3.625) {$C$};
		\node [style=none] (60) at (2, 3) {};
		\node [style=none] (61) at (2, 4) {};
		\node [style=right label] (62) at (2, 3.625) {$T$};
		\node [style=right label] (63) at (2, 1) {$S$};
		\node [style=right label] (64) at (2, -1.25) {$S$};
		\node [style=none] (65) at (-0.5, 3) {};
		\node [style=none] (66) at (-0.5, 4) {};
		\node [style=right label] (67) at (-0.5, 3.625) {$C$};
		\node [style=none] (68) at (2, -3) {};
		\node [style=none] (69) at (2, -4) {};
		\node [style=right label] (70) at (2, -3.625) {$T$};
		\node [style=none] (71) at (-0.5, 0) {$\tt{CTRL}$};
	\end{pgfonlayer}
	\begin{pgfonlayer}{edgelayer}
		\draw (51.center) to (56.center);
		\draw (52.center) to (53.center);
		\draw (57.center) to (58.center);
		\draw (60.center) to (61.center);
		\draw (65.center) to (66.center);
		\draw (68.center) to (69.center);
		\draw (45.center)
			 to (46.center)
			 to (47.center)
			 to (55.center)
			 to (50.center)
			 to (49.center)
			 to (54.center)
			 to (48.center)
			 to cycle;
	\end{pgfonlayer}
\end{tikzpicture}} \quad\quad := \quad %
\begin{tikzpicture}
	\begin{pgfonlayer}{nodelayer}
		\node [style=none] (2) at (-2, 6.5) {};
		\node [style=none] (3) at (-2, -5.5) {};
		\node [style=right label] (4) at (-2, -5.375) {$C$};
		\node [style=right label] (5) at (0, 6.125) {$T$};
		\node [style=right label] (6) at (2, 1) {$S$};
		\node [style=right label] (7) at (2, -1) {$S$};
		\node [style=right label] (8) at (-2, 6.125) {$C$};
		\node [style=none] (9) at (0, 6.5) {};
		\node [style=none] (10) at (0, -5.5) {};
		\node [style=right label] (11) at (0, -5.375) {$T$};
		\node [style=very very large map] (12) at (0.5, 4.75) {$\cd$};
		\node [style=none] (13) at (2, 4.75) {};
		\node [style=none] (14) at (2, -4) {};
		\node [style=right label] (15) at (-2, 0) {$C$};
		\node [style=none] (17) at (0, 6.5) {};
		\node [style=right label] (18) at (0, 0) {$S$};
		\node [style=none] (20) at (0, 6.5) {};
		\node [style=none] (21) at (0, -5.5) {};
		\node [style=white dot] (22) at (-2, -2.5) {};
		\node [style=large map] (23) at (1, -2.5) {$\mr{SWAP}$};
		\node [style=none] (24) at (2.075, -4.35) {$s^0$};
		\node [style=none] (25) at (2, -5) {};
		\node [style=none] (26) at (1.25, -4) {};
		\node [style=none] (27) at (2.75, -4) {};
		\node [style=none] (28) at (2, -4) {};
		\node [style=map] (29) at (0, -4.25) {$\cv$};
		\node [style=right label] (30) at (2, -3.625) {$S$};
		\node [style=white dot] (31) at (-2, 2.5) {};
		\node [style=large map] (32) at (1, 2.5) {$\mr{SWAP}$};
		\node [style=right label] (33) at (0, -3.625) {$S$};
		\node [style=right label] (34) at (2, 3.5) {$S$};
		\node [style=right label] (35) at (0, 3.5) {$S$};
		\node [style=none] (36) at (2, 0.75) {};
		\node [style=none] (37) at (2, -0.75) {};
	\end{pgfonlayer}
	\begin{pgfonlayer}{edgelayer}
		\draw (2.center) to (3.center);
		\draw (25.center) to (27.center);
		\draw (27.center) to (26.center);
		\draw (26.center) to (25.center);
		\draw (22) to (23);
		\draw (31) to (32);
		\draw (37.center) to (28.center);
		\draw (36.center) to (13.center);
		\draw (20.center) to (21.center);
	\end{pgfonlayer}
\end{tikzpicture}}$
    \fcaption{Quantum circuit for ${\tt CTRL}$ supermap. The supermap transforms sector-preserving  channels acting on a system $S$ with Hilbert space  $\spc H_S   =  \spc  H_S^0 \oplus \spc H_S^1$ into controlled channels acting on the composite system $C\otimes T$, consisting of  a control system $C$ and of a target system $T$ with Hilbert space $\spc H_T  \simeq  \spc  H_S^1$. The sector-preserving channel in input is inserted between two controlled {\tt SWAP} operations, which in turn are placed between two quantum channels $\cv$ and $\cd$, which serve as `adaptors', between the systems $T$ and $S$, and between the systems    $C\otimes S\otimes S$ and $C\otimes T$, respectively.   }
     \label{fig:CoherentControl}
\end{figure*}

We thus introduce the \textit{control supermap}, a supermap which takes as input any sector-preserving channel $\widetilde{\cc}[C_1]$ of type $(1,d)$, and yields the controlled channel ${\tt ctrl}_{C_1} \mhyphen \cc$ acting on a target system of dimension $d$.

\medskip 

\begin{theorem} \label{th:ControlSupermap}
Let $\ch_S = \ch_S^0 \oplus \ch_S^1$ be a Hilbert space, with $\dim(\ch_S^0)=1$ and $\dim(\ch_S^1)=d$, let $\ch_C$ be a control space of dimension 2, and $\ch_T$ be a target space, with $\ch_T  \simeq \ch_S^1$.

There exists a supermap ${\tt CTRL}$ of type $(S \to S) \to (C \otimes T \to C \otimes T)$ such that for any sector-preserving channel $\widetilde{\cc}[C_1]$,

\be
    {\tt CTRL}[\widetilde{\cc}[C_1]] = {\tt ctrl}_{C_1} \mhyphen \cc \, .
 \ee
 
Furthermore, this supermap is unitary-preserving on the sector-preserving channels on $S$.

\end{theorem}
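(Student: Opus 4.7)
My plan is to give an explicit circuit realisation of ${\tt CTRL}$, matching the architecture in Figure~\ref{fig:CoherentControl}, and then to verify its action at the level of Kraus operators using the canonical form of Lemma~\ref{lem:characterisationSectorPreserving1}. Fix a unit vector $|{\rm vac}\rangle\in \spc H_S^0$ and an isometry $W\colon \ch_T \to \ch_S^1 \subset \ch_S$ identifying the target with the $d$-dimensional sector. I define the adaptors in Figure~\ref{fig:CoherentControl} by: $\cv$ embeds $T$ into $S$ via $W$ while preparing a second $S$-ancilla in $|{\rm vac}\rangle\langle{\rm vac}|$; the two controlled-${\tt SWAP}$ gates (controlled on $C=|1\rangle$) interchange these two $S$-systems; and $\cd$ discards one $S$-copy and traces the other back to $T$ via $W^\dag$. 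In words, depending on $C$ the circuit routes either the target state $|\phi\rangle$ (written as $W|\phi\rangle \in \ch_S^1$) or the vacuum $|{\rm vac}\rangle \in \ch_S^0$ into the slot where $\widetilde{\cc}$ is applied, and then undoes the routing.

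Next, I would check the equation ${\tt CTRL}[\widetilde{\cc}[C_1]] = {\tt ctrl}_{C_1}\mhyphen \cc$ at the level of Kraus operators. By Lemma~\ref{lem:characterisationSectorPreserving1}, it suffices to take Kraus operators $\widetilde C_1 = 1\oplus C_1$ and $\widetilde C_i = 0\oplus C_i$ for $i\geq 2$. Propagating a generic pure input $(a|0\rangle+b|1\rangle)_C\otimes|\phi\rangle_T$ through the circuit with $\widetilde C_1$ inserted produces, after the second controlled-${\tt SWAP}$ and trace-out of $S$, the vector $a|0\rangle|\phi\rangle + b|1\rangle\,C_1|\phi\rangle$, which is exactly $({\tt ctrl}_1\mhyphen C_1)$ applied to the input; similarly, inserting $\widetilde C_i$ for $i\geq 2$ produces $b|1\rangle\,C_i|\phi\rangle = ({\tt ctrl}_0\mhyphen C_i)$ applied to the input, because the $|0\rangle_C$-branch carries the vacuum into $S$ and is annihilated by $\widetilde C_i$. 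Since these Kraus operators match those of ${\tt ctrl}_{C_1}\mhyphen\cc$, the equality of channels follows by linearity and density of product pure states.

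For the unitary-preservingness statement, I would invoke Corollary~\ref{cor:UnitaryCorrespondence}: any sector-preserving unitary channel of type $(1,d)$ corresponds to a unitary $U$ on $\ch_S^1$ via $\widetilde{\cu}(\rho)=(1\oplus U)\rho(1\oplus U)^\dag$. Applying the already-verified Kraus computation with the single Kraus operator $1\oplus U$ yields a channel with the single Kraus operator ${\tt ctrl}\mhyphen U = |0\rangle\langle 0|\otimes I + |1\rangle\langle 1|\otimes U$, which is unitary as in Eq.~(\ref{eq:ctrlU}). Thus ${\tt CTRL}[\widetilde{\cu}]$ is unitary on $C\otimes T$.

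The main delicate point, and the step I would write out with most care, is checking that the circuit is well-defined as a supermap on \emph{all} channels on $S$ (not just sector-preserving ones), so that ${\tt CTRL}$ is a bona fide quantum supermap whose restriction realises the claimed transformation. Once the adaptor channels $\cv$ and $\cd$ and the controlled-${\tt SWAP}$ gates are specified concretely as above, this is automatic from the standard fact that composing channels and ancillary preparations/discardings yields a supermap; the only work is bookkeeping the Hilbert-space identifications between $\ch_T$, $\ch_S^1$, and the ancillary $S$-copy.
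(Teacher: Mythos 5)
Your proposal is correct and follows essentially the same route as the paper's proof: the same circuit (embed $T$ into $\ch_S^1$, prepare a vacuum ancilla in $\ch_S^0$, sandwich the sector-preserving channel between two controlled-{\tt SWAP}s, then map back), the same verification via the canonical Kraus form $\widetilde C_1 = 1\oplus C_1$, $\widetilde C_i = 0\oplus C_i$ propagated through pure product inputs, and the same appeal to Corollary~\ref{cor:UnitaryCorrespondence} for unitarity preservation. The only cosmetic difference is that your adaptor $\cd$ discards the second $S$-copy rather than applying $\langle s^0|$ as the paper does, which is equivalent here since that system exits in the unentangled pure state $|s^0\rangle$, and both versions require the same routine CPTP completion that you correctly flag.
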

\medskip 
\correction{
\begin{proof}
Let $ V   :  \spc H_T    \to \spc H_S$ be the isometry that maps $\spc H_T$ into the subspace $\spc H_S^1 \simeq \spc H_T$, let $|s^0\rangle$ be a unit vector in $\spc H_S^0$,    let $W: \spc H_C \otimes \spc H_S \otimes \spc H_S  \to  \spc H_C \otimes \spc H_T$ be the coisometry defined by  $W: = I  \otimes V^\dag  \otimes \langle s^0|$, 
and let $\map D$ be the quantum channel defined by   $  \map D (\rho)   :=  W  \rho W^\dag  +    \rho_0  \,  \Tr [P \,\rho]$, where $\rho_0$ is a fixed density matrix on  $\spc  H_C\otimes H_T  $ and $P   :=    I-   W^\dag W$\footnote{Note that the only thing that matters is how $\cd$ acts on the sector $\ch_C \otimes \ch_S^1 \otimes \ch_S^0$ of its input; its action on other sectors is irrelevant and can be defined in an arbitrary way, as long as it gives a CPTP map.} 
.  We then define the supermap ${\tt CTRL}$ through its  action on a generic linear map $ {\map M}  :  \cl (\spc H_C \otimes \spc  H_T)  \to \cl (\spc H_C \otimes \spc  H_T)$:

\begin{align} 
{\tt CTRL}  ({\map M}) :  =  \map D \circ {\tt ctrl}\mhyphen\map{SWAP} \circ   (\map  I_C \otimes \map I_S \otimes {\map M})  \circ {\tt ctrl}\mhyphen\map{SWAP} \circ (\map I_C \otimes \map V \otimes |s^0\rangle\langle s^0|   )\,
\end{align}

where $\map V$ is the quantum   channel corresponding to the isometry $V$, and ${\tt  ctrl}\mhyphen \map{SWAP}$ is the unitary channel corresponding to the controlled SWAP operator (see Figure \ref{fig:CoherentControl} for an illustration).

With this definition, one can verify that the condition $  {\tt CTRL}[\widetilde{\cc}[C_1]] = {\tt ctrl}_{C_1} \mhyphen \cc$ holds.  Let us prove it by showing that they act in the same way on pure states, using a Kraus representation  for the channel  $\widetilde{\cc}[C_1]$  with Kraus operators $\widetilde C_i  = \delta_{i1}\oplus C_i$.  From there, it can then be deduced by linearity that the two channels act in the same way on  any density matrix, and therefore that they are equal. We take a strict equality $T = S^1$ to avoid unnecessary clutter. 

Taking an arbitrary state $\ket{\psi}_{CT}$, we obtain

$$ {\tt ctrl}\mhyphen\mr{SWAP} (V \otimes \ket{s^0}) \ket{\psi}_{CT} = \ket{0}_C \otimes {}_{C}\!\braket{0|\psi}_{CS} \otimes \ket{s^0}_S + \ket{1}_C \otimes \ket{s^0}_S \otimes {}_{C}\!\braket{1|\psi}_{CS} $$

and thus

\begin{align}
\nonumber |\psi_i\rangle & :   ={\tt ctrl}\mhyphen\mr{SWAP} \,   (I_C \otimes I_S   \otimes \widetilde C_i) \, {\tt ctrl}\mhyphen\mr{SWAP} (V \otimes \ket{s^0}) \ket{\psi}_{CT} \\
 \nonumber  &= {\tt ctrl}\mhyphen\mr{SWAP} \,  \big(  \delta_{i1}  \,   \ket{0}_C \otimes {}_{C}\!\braket{0|\psi}_{CS} \otimes \ket{s^0}_S + \ket{1}_C \otimes \ket{s^0}_S \otimes   C_i  ~  {}_{C}\!     \braket{1|\psi}_{CS}   \big)    \\  &  = \big( \delta_{i1} \ket{0}_C \otimes \braket{0|\psi}_{CS} +   C_i ~  \ket{1}_C \otimes \braket{1|\psi}_{CS} \big) \otimes \ket{s^0} \, .
  \end{align}

Now, one has $P  \ket{\psi_i} = 0$, and $W \ket{\psi_i} = \delta_{i1}\,   \ket{0} \otimes {}_C\!\braket{0|\psi}_{CS} + \ket{1}_C \otimes C_i ~ {}_{C}\!\braket{1|\psi}_{CS} \equiv {\tt ctrl}_{\delta_{i1}}\mhyphen C_i~|\psi\rangle_{CT}$.   Summarising, if the control and target start off in the state $|\psi\rangle_{CT}$, and if the subprocess corresponding to the Kraus operator $\widetilde C_i$ takes place, then the final (subnormalized) state is $W|\psi_i\rangle   =  {\tt ctrl}_{\delta_{i1}}\mhyphen C_i~|\psi\rangle_{CT}  $.    On average over  all possible values of $i$, we obtain the evolution
\begin{align*}
    {\tt CTRL}(\widetilde{\cc}[C_1]) ( \ket{\psi}\bra{\psi}) &= \sum_i    W|\psi_i\rangle \langle \psi_i| W^\dag \\
    &= \sum_i ({\tt ctrl}_{\delta_{i1}}\mhyphen C_i) \ket{\psi}\bra{\psi} ({\tt ctrl}_{\delta_{i1}}\mhyphen C_i)^\dagger \\
    &= {\tt ctrl}_{C_1} \mhyphen \cc (\ket{\psi}\bra{\psi}) \,.
\end{align*}

As for the preservation of unitarity on sector-preserving channels, it is sufficient to recall Corollary \ref{cor:UnitaryCorrespondence}: unitary sector-preserving channels of type $(1,d)$ are of the form $\widetilde{\cu}: \rho \mapsto (1 \oplus U) \rho (1 \oplus U)^\dagger$. By the previous calculation, one then has ${ \tt CTRL}(\widetilde{\cu}) : \rho \mapsto (\ketbra{0}{0} \otimes I + \ketbra{1}{1} \otimes U) \rho (\ketbra{0}{0} \otimes I + \ketbra{1}{1} \otimes U)^\dagger$, which is a unitary channel.

\end{proof}}

The supermap ${\tt CTRL}$   constitutes a rigorous  theoretical formalisation of the existing experimental schemes for the implementation of coherent control. It is the universal protocol through which sector-preserving channels of type $(1,d)$ can be turned into their corresponding controlled channel.

We note that even though we defined this supermap as accepting as input \textit{any} possible channel $S \to S$, the only thing we are interested in is in fact its action on \textit{sector-preserving} channels. An alternative way of defining it would be to formally restrict its inputs to be only sector-preserving channels (or extensions of those); this would make clearer the fact that this protocol is only useful when sector-preserving channels are used, and would also allow to get rid of superfluous information in the specification of the supermap -- namely, information that only modifies the action of the supermap on non-sector-preserving channels. We will do this in Section \ref{sec:SupermapsRouted}, coining the notion of \textit{supermaps on routed channels}.

Let us also comment on the specific case of unitary channels. Per Corollary \ref{cor:UnitaryCorrespondence}, we know that  sector-preserving unitary \textit{channels} of type $(1,d)$ are in one-to-one correspondence with unitary \textit{operators} on their $d$-dimensional sector. Noting as ${U}$ the unitary operator corresponding to the unitary sector-preserving channel $\widetilde{\cu}$, the control supermap will then precisely map any  sector-preserving unitary channel $\widetilde{\cu}$ to the gate applying the controlled-unitary ${\tt ctrl} \mhyphen U$ defined in equation (\ref{eq:ctrlU}):

\be \forall \, \widetilde{\cu} \, \mr{unitary},\,\, {\tt CTRL}[\widetilde{\cu}] = {\tt ctrl} \mhyphen U \,. \ee

The control supermap thus also realises, in particular, the coherent control of unitary gates.

\subsection{Sector-preserving and controlled channels are equivalent  resources} \label{sec:SecPreservEquivControlled}

The previous section showed that there is a universal circuit structure which turns sector-preserving channels of type $(1,d)$ into their corresponding controlled channel. As resources, sector-preserving channels of type $(1,d)$ thus allow one to obtain controlled channels. We now show the opposite: from a controlled channel, one can obtain its corresponding sector-preserving channel of type $(1,d)$, once again using a universal circuit structure.
\medskip 

\begin{theorem} \label{th:InverseControlSupermap}
Let $\ch_T  \simeq \ch_S^1$ be a target space, and let  $\ch_C$ be a control space of dimension 2.    
Taking $\ch_S^0 \cong \mathbb{C}$, $\spc H_S^1\simeq  \spc  H_T $ and $\ch_S := \ch_S^0 \oplus \ch_S^1$, there exists a supermap ${\tt CTRL}^{-1}$ of type $(C \otimes T \to C \otimes T) \to (S \to S)$ such that for any controlled channel ${\tt ctrl}_{C_1} \mhyphen \cc$,

\be \label{eq:InverseControlSupermap}
    {\tt CTRL}^{-1}[{\tt ctrl}_{C_1} \mhyphen \cc] = \widetilde{\cc}[C_1]  \, .
 \ee
 
Furthermore, this supermap is unitary-preserving on the controlled channels on $C \otimes T$.

\end{theorem}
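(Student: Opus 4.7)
The plan is to build ${\tt CTRL}^{-1}$ as a universal circuit that mirrors the structure of ${\tt CTRL}$: I will encode input states on $S$ into $C \otimes T$, apply the given channel on $C \otimes T$, and then decode back to $S$. The encoding will use the natural isomorphism $\spc H_S \simeq \mathbb{C} \oplus \spc H_T$ to identify the one-dimensional sector with the $|0\rangle_C$-branch of the control and the $d$-dimensional sector with the $|1\rangle_C$-branch.

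Concretely, let $V: \spc H_T \to \spc H_S$ be the isometry onto $\spc H_S^1$ and $|s^0\rangle$ a unit vector in $\spc H_S^0$. I fix an arbitrary unit vector $|t_0\rangle \in \spc H_T$ and define an isometry $W: \spc H_S \to \spc H_C \otimes \spc H_T$ by $W |s^0\rangle = |0\rangle_C |t_0\rangle_T$ and $W V |\phi\rangle = |1\rangle_C |\phi\rangle_T$ for all $|\phi\rangle \in \spc H_T$. Call $\map W$ the encoding channel corresponding to $W$, and define a decoding channel $\map E(\sigma) := W^\dag \sigma W + \sigma_0 \, \Tr\!\left[(I - WW^\dag)\sigma\right]$, where $\sigma_0$ is an arbitrary fixed density matrix on $S$; this takes care of any component that the input channel might push outside the range of $W$. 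The candidate supermap is then ${\tt CTRL}^{-1}(\map M) := \map E \circ \map M \circ \map W$, which is automatically a valid supermap as a composition of channels.

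To verify $(\ref{eq:InverseControlSupermap})$, I plan to compute the operators $W^\dag \widehat C_i W$ directly on the Kraus representation $\widehat C_1 = |0\rangle\langle 0| \otimes I + |1\rangle\langle 1| \otimes C_1$, $\widehat C_i = |1\rangle\langle 1| \otimes C_i$ (for $i\geq 2$) of ${\tt ctrl}_{C_1} \mhyphen \cc$. Evaluating on $|s^0\rangle$ and on $V|\phi\rangle$ gives $W^\dag \widehat C_1 W = 1 \oplus C_1 = \widetilde C_1$ and $W^\dag \widehat C_i W = 0 \oplus C_i = \widetilde C_i$. I then check that the range of $W$, namely $\mathrm{span}\{|0\rangle|t_0\rangle\} \oplus \big(|1\rangle \otimes \spc H_T\big)$, is preserved by ${\tt ctrl}_{C_1} \mhyphen \cc$, so that the completion term in $\map E$ contributes nothing. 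Thus ${\tt CTRL}^{-1}[{\tt ctrl}_{C_1}\mhyphen\cc]$ admits Kraus operators $(\widetilde C_i)_i$ and coincides with $\widetilde \cc[C_1]$, as required.

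For the unitary-preservation claim, I invoke Corollary \ref{cor:UnitaryCorrespondence} together with Theorem \ref{th:characterisation}: if ${\tt ctrl}_{C_1} \mhyphen \cc$ is a unitary channel, then its pinned Kraus operator $C_1$ must itself be unitary and $\cc$ must be the corresponding unitary channel, whence $\widetilde \cc[C_1]$ has the single Kraus operator $1 \oplus C_1$ and is therefore unitary. The main subtlety I anticipate is conceptual rather than computational, namely ensuring that the construction is a genuine supermap with arbitrary channels on $C\otimes T$ as valid inputs: this is precisely what the trace-completion term in $\map E$ handles, and without it the expression $W^\dag \map M(\cdot) W$ would fail to be trace-preserving in general.
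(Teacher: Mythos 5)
Your proposal is correct and takes essentially the same route as the paper: the paper also defines ${\tt CTRL}^{-1}[\ck]=\cw\circ\ck\circ\cv$ with an encoding isometry $\ch_S\to\ch_C\otimes\ch_T$ sending $\ch_S^1$ to the $\ket{1}_C$ branch and $\ch_S^0$ to $\ket{0}_C\otimes\ket{\phi_0}$, followed by a decoding channel acting as the adjoint on the isometry's range and arbitrarily elsewhere. The only difference is that the paper leaves the verification of Eq.~(\ref{eq:InverseControlSupermap}) as ``a simple computation,'' which you carry out explicitly (including the check that the range of the encoding is preserved, so the completion term vanishes), and your unitarity argument is likewise sound.
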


\begin{proof}
One can define ${\tt CTRL}^{-1}$'s action on a given map  $\ck$ of type $C \otimes T \to C \otimes T$ as ${\tt CTRL}^{-1}[\ck] = \cw \circ \ck \circ \cv$, where $\cv$ is the channel corresponding to  the isometry $V: \ch_S \to \ch_C \otimes \ch_T$ that acts as  $V \ket{\psi} = \ket{1} \otimes \ket{\psi}$ for $\ket{\psi} \in \ch_S^1$, and $V \ket{\psi} = \ket{0} \otimes \ket{\phi_0}$ for $\ket{\psi} \in \ch_S^0$ where $\ket{\phi^0}$ is a fixed arbitrary state in $\ch_T$, and channel   $\cw$ acts as $\cv^\dagger$ on $\cv$'s range and in an arbitrary way elsewhere.

From this definition, a simple computation shows that (\ref{eq:InverseControlSupermap}) holds.
\end{proof}

The existence of this inverse control supermap shows that sector-preserving channels of type $(1,d)$ and controlled channels are fully equivalent resources: one can go from a sector-preserving channel to its corresponding controlled channel and back again, using a universal circuit architecture in both cases. This concludes our demonstration of the main claim of this paper.

Note that ${\tt CTRL}^{-1} \circ {\tt CTRL}$ acts as the identity supermap only on input channels that are sector-preserving. A way of formally restricting the ${\tt CTRL}$ supermap to only act on sector-preserving channels will be described in Section \ref{sec:SupermapsRouted}. Once viewed in this way, the ${\tt CTRL}$ supermap can be said to be unitary-preserving and invertible.

\section{Implementing coherent control of multiple  channels} \label{sec:CoherentControl2}

\subsection{The case of isometric channels}

We now show how the previous methods apply to the coherent control of $N\ge 2$ channels, as defined in Section \ref{sec:2ControlDef}. For simplicity, we restrict ourselves to the case of isometric channels, and to $N=2$.  The methods we present are readily extendable to the $N>2$.  Note that the coherent control of isometric gates includes that of unitary gates and of pure states, as both are specific examples of isometric gates.

If we define the task of coherent control between two isometric gates as that of implementing controlled-$(U,V)$ (as defined in equation (\ref{eq:ctrlUV})) from uses of the isometric gates $U$ and $V$, then it is a direct consequence of the aforementioned no-go theorems that such a task cannot be achieved via a universal circuit architecture.

To circumvent this, we will instead keep our perspective of  considering coherent control as a task performed on sector-preserving channels. Here, as in Section \ref{sec:2ControlDef}, we take the input and output target systems to be of possibly different dimensions. Accordingly, we will slightly extend the relevant definitions. For instance, Definition \ref{def:sectorpres} can be extended in a straightforward way to encompass sector-preserving channels from $\ch_{S_\mr{in}} := \bigoplus_k \ch_{S_\mr{in}}^k$ to $\ch_{S_\mr{out}} := \bigoplus_k \ch_{S_\mr{out}}^k$. In the case in which the Hilbert spaces are both partitioned between a multi-dimensional sector and several one-dimensional ones, we will refer to these channels as being sector-preserving of type $(1 \to 1, \dots, 1 \to 1, d \to d')$. Structural theorems about these channels can be seen to extend from those of Section \ref{sec:OneSuppSector} (Lemma \ref{lem:characterisationSectorPreserving1}, Theorem \ref{th:characterisationSectorPreserving} and Corollaries \ref{cor:UnitaryCorrespondence} and \ref{cor:correspondence(1,d)}) in a natural way.

In particular, Corollary \ref{cor:UnitaryCorrespondence} can be extended to a statement about isometric sector-preserving channels $\cc$ of type $(1 \to 1, d \to d')$: they are in one-to-one correspondence with isometric operators $U_\cc$ in dimension $d \to d'$. Our point is to implement this correspondence physically in order to create a control between two isometric gates. We single out a  version of the control supermap that allows one to build the coherent control between two isometric gates from the two sector-preserving isometric channels of type $(1 \to 1, d \to d')$ corresponding to these isometries. This supermap was originally  introduced in Ref.\,\cite{Chiribella_2019} (in the case $d = d'$), in a slightly different framework. 

\medskip 

\begin{theorem} \label{th:2ControlSupermap}
Let $\ch_{S_\mr{in}} = \ch_{S_\mr{in}}^0 \oplus \ch_{S_\mr{in}}^1$  and $\ch_{S_\mr{out}} = \ch_{S_\mr{out}}^0 \oplus \ch_{S_\mr{out}}^1$ be partitioned spaces, with $\ch_{S_\mr{in}}^0$ and $\ch_{S_\mr{out}}^0$ one-dimensional, let $\ch_C$ be a control space of dimension 2, and let  $\ch_{T_\mr{in}} $ and  $\ch_{T_\mr{out}}$ be target spaces, with $\ch_{T_\mr{in}} \simeq  \ch_{S_\mr{in}}^1$ and  $\ch_{T_\mr{out}}\simeq \ch_{S_\mr{out}}^1$.  

There exists a supermap ${\tt 2 \mhyphen CTRL}$ of type $(S_\mr{in} \to S_\mr{out}) \otimes (S_\mr{in} \to S_\mr{out}) \to (C \otimes T_\mr{in} \to C \otimes T_\mr{out})$ such that for any pair of isometric sector-preserving channels $\cc$ and $\cd$,

\be
    {\tt 2 \mhyphen CTRL}[\cc \otimes \cd] = {\tt ctrl} \mhyphen (U_\cc,U_\cd) \, .
 \ee

\end{theorem}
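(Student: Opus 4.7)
The plan is to extend the construction of the ${\tt CTRL}$ supermap from Theorem~\ref{th:ControlSupermap} to handle two channels in parallel. The core idea is to use the control system to coherently route the target through one of the two channels, while feeding the other channel its one-dimensional ``vacuum'' state. The isometric sector-preserving structure of $\cc$ and $\cd$ will ensure that the vacuum input is deterministically mapped to a vacuum output, which can then be reliably decoded away, leaving only the interference between the two branches.

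First I would fix the adaptors. Let $V_\mr{in}: \ch_{T_\mr{in}} \to \ch_{S_\mr{in}}$ and $V_\mr{out}: \ch_{T_\mr{out}} \to \ch_{S_\mr{out}}$ be the isometries embedding the target spaces into the $d_\mr{in}$- and $d_\mr{out}$-dimensional sectors, and let $|s^0_\mr{in}\rangle$ and $|s^0_\mr{out}\rangle$ be unit vectors in the one-dimensional sectors. Given an input state $|\psi\rangle_{CT_\mr{in}}$, the circuit would: (i) prepare two ancillas in $|s^0_\mr{in}\rangle$, one for each channel's input; (ii) apply a controlled-SWAP conditioned on the control that routes the embedded target $V_\mr{in}|\phi\rangle$ into $\cc$'s input when $C = |0\rangle$ and into $\cd$'s input when $C = |1\rangle$, with the vacuum occupying the other branch; (iii) apply $\cc \otimes \cd$ in parallel; (iv) perform a symmetric controlled-SWAP followed by a $V_\mr{out}^\dag$-style coisometric adaptor to extract the non-vacuum output into $T_\mr{out}$, with any residual off-subspace components mapped by an arbitrary CPTP completion, exactly as in the single-channel construction.

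Verification would proceed by computing the action on a pure product state. By (the straightforward extension of) Corollary~\ref{cor:UnitaryCorrespondence} to the isometric case with possibly different input and output dimensions, the isometric sector-preserving channels $\cc$ and $\cd$ are each described by a single Kraus operator of the form $\widetilde U_\cc = 1 \oplus U_\cc$ and $\widetilde U_\cd = 1 \oplus U_\cd$. Tracing through the circuit, the control-$|0\rangle$ component of $|\psi\rangle_{CT_\mr{in}}$ produces, after $\cc \otimes \cd$, the state $V_\mr{out} U_\cc |\phi\rangle \otimes |s^0_\mr{out}\rangle$, while the control-$|1\rangle$ component produces $|s^0_\mr{out}\rangle \otimes V_\mr{out} U_\cd |\phi\rangle$. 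After the decoding adaptor these combine into $\bigl(|0\rangle\langle 0| \otimes U_\cc + |1\rangle\langle 1| \otimes U_\cd\bigr)|\psi\rangle_{CT_\mr{in}}$, which is exactly the action of ${\tt ctrl}\mhyphen(U_\cc, U_\cd)$. Linearity then extends the identity to arbitrary density matrices, and, because the construction never inspects the specific identity of the channels beyond composing them into the fixed circuit, it defines a genuine supermap with the required signature.

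The main technical subtlety I anticipate is the handling of the controlled routing and decoding on the full Hilbert space, rather than just on the ``intended'' sectors. The encoding step produces states that lie in a specific subspace of $\ch_C \otimes \ch_{S_\mr{in}} \otimes \ch_{S_\mr{in}}$ (target embedded in one branch, vacuum in the other), and the decoding step need only act correctly there; its action elsewhere can be chosen arbitrarily, as was done for the channel $\cd$ in the proof of Theorem~\ref{th:ControlSupermap}. The second delicate point is the preservation of off-diagonal coherences between the two control branches, which is exactly what the isometric sector-preserving hypothesis guarantees: both branches produce the same deterministic output $|s^0_\mr{out}\rangle$ on the unused channel, with no relative phase or amplitude loss, so the cross terms survive the decoding intact and reproduce the coherent superposition characteristic of ${\tt ctrl}\mhyphen(U_\cc, U_\cd)$.
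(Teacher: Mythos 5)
Your proposal is correct and follows essentially the same route as the paper: the paper's proof simply points to the circuit of Figure~\ref{fig:2CoherentControl} (adaptor $\cv$ plus vacuum preparation, controlled-SWAP, the two sector-preserving channels in parallel, controlled-SWAP, decoder $\cd$) and notes that the verification is fully analogous to that of Theorem~\ref{th:ControlSupermap}, which is exactly the computation you carry out. Your explicit tracking of the two control branches and your appeal to the isometric extension of Corollary~\ref{cor:UnitaryCorrespondence} (so that each resource channel acts as $1 \oplus U$ and returns the vacuum deterministically with unit amplitude) supplies the detail the paper leaves implicit.
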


\begin{proof}
This can be easily computed from the formulation of the ${\tt 2 \mhyphen CTRL}$ supermap shown in Figure \ref{fig:2CoherentControl}, in full analogy to the computation in the proof of Theorem \ref{th:ControlSupermap}.
\end{proof}

Theorem \ref{th:2ControlSupermap} can serve as a   formalisation of the existing experimental schemes for coherently controlling two unitaries, such as the superposition of paths \cite{Chiribella_2019}. It is easy to see that it could be readily generalised to the coherent control between $N$ isometries by a control system of dimension $N$.

In particular, one can see in this formulation that the coherent control of two isometries can be implemented with a simple parallel combination of the two resource sector-preserving channels.

\begin{figure*} 
    \centering
    $%
\begin{tikzpicture}
	\begin{pgfonlayer}{nodelayer}
		\node [style=none] (0) at (3.25, 3) {};
		\node [style=none] (1) at (3.25, 1.75) {};
		\node [style=none] (2) at (1.25, 1.75) {};
		\node [style=none] (3) at (1.25, -1.75) {};
		\node [style=none] (4) at (3.25, -1.75) {};
		\node [style=none] (5) at (3.25, -3) {};
		\node [style=none] (6) at (-3.25, -3) {};
		\node [style=none] (7) at (-3.25, 3) {};
		\node [style=none] (8) at (2.75, 1.75) {};
		\node [style=none] (9) at (2.75, 0.75) {};
		\node [style=none] (10) at (2.75, -0.75) {};
		\node [style=none] (11) at (2.75, -1.75) {};
		\node [style=none] (13) at (-1.25, 1.75) {};
		\node [style=none] (14) at (-1.25, -1.75) {};
		\node [style=none] (15) at (-2.75, 1.75) {};
		\node [style=none] (16) at (-2.75, -0.75) {};
		\node [style=none] (17) at (-2.75, -1.75) {};
		\node [style=none] (18) at (-3.25, 1.75) {};
		\node [style=none] (19) at (-3.25, -1.75) {};
		\node [style=none] (20) at (-2.75, 0.75) {};
		\node [style=none] (21) at (-1, -3) {};
		\node [style=none] (22) at (-1, -4) {};
		\node [style=right label] (23) at (-1, -3.625) {$C$};
		\node [style=none] (24) at (1, 3) {};
		\node [style=none] (25) at (1, 4) {};
		\node [style=right label] (26) at (1, 3.625) {$T_\mr{out}$};
		\node [style=right label] (27) at (2.75, 1) {$S_\mr{out}$};
		\node [style=right label] (28) at (2.75, -1.25) {$S_\mr{in}$};
		\node [style=right label] (29) at (-2.75, 1) {$S_\mr{out}$};
		\node [style=right label] (30) at (-2.75, -1.25) {$S_\mr{in}$};
		\node [style=none] (36) at (0, 0) {$\tt{2 \mhyphen CTRL}$};
		\node [style=none] (39) at (-1, 3) {};
		\node [style=none] (40) at (-1, 4) {};
		\node [style=right label] (41) at (-1, 3.625) {$C$};
		\node [style=none] (42) at (1, -3) {};
		\node [style=none] (43) at (1, -4) {};
		\node [style=right label] (44) at (1, -3.625) {$T_\mr{in}$};
	\end{pgfonlayer}
	\begin{pgfonlayer}{edgelayer}
		\draw (0.center) to (1.center);
		\draw (1.center) to (2.center);
		\draw (2.center) to (3.center);
		\draw (3.center) to (4.center);
		\draw (4.center) to (5.center);
		\draw (5.center) to (6.center);
		\draw (7.center) to (0.center);
		\draw (8.center) to (9.center);
		\draw (10.center) to (11.center);
		\draw (18.center) to (13.center);
		\draw (13.center) to (14.center);
		\draw (14.center) to (19.center);
		\draw (15.center) to (20.center);
		\draw (16.center) to (17.center);
		\draw (18.center) to (7.center);
		\draw (19.center) to (6.center);
		\draw (21.center) to (22.center);
		\draw (24.center) to (25.center);
		\draw (39.center) to (40.center);
		\draw (42.center) to (43.center);
	\end{pgfonlayer}
\end{tikzpicture}} \quad\quad  = \quad %
\begin{tikzpicture}
	\begin{pgfonlayer}{nodelayer}
		\node [style=none] (0) at (2, 0.75) {};
		\node [style=none] (1) at (2, -0.75) {};
		\node [style=none] (2) at (-2, 6.5) {};
		\node [style=none] (3) at (-2, -5.5) {};
		\node [style=right label] (4) at (-2, -5.375) {$C$};
		\node [style=right label] (5) at (0, 6.125) {$T_\mr{out}$};
		\node [style=right label] (6) at (2, 1) {$S_\mr{out}$};
		\node [style=right label] (7) at (2, -1) {$S_\mr{in}$};
		\node [style=right label] (9) at (-2, 6.125) {$C$};
		\node [style=none] (10) at (0, 6.5) {};
		\node [style=none] (11) at (0, -5.5) {};
		\node [style=right label] (12) at (0, -5.375) {$T_\mr{in}$};
		\node [style=very very large map] (14) at (0.5, 4.75) {$\cd$};
		\node [style=none] (15) at (2, 4.75) {};
		\node [style=none] (16) at (2, -4) {};
		\node [style=right label] (17) at (-2, 0) {$C$};
		\node [style=none] (18) at (0, 0.75) {};
		\node [style=none] (19) at (0, -0.75) {};
		\node [style=right label] (20) at (0, 1) {$S_\mr{out}$};
		\node [style=right label] (21) at (0, -1) {$S_\mr{in}$};
		\node [style=none] (22) at (0, 6.5) {};
		\node [style=none] (23) at (0, -5.5) {};
		\node [style=white dot] (24) at (-2, -2.5) {};
		\node [style=large map] (25) at (1, -2.5) {$\mr{SWAP}$};
		\node [style=none] (26) at (2.075, -4.35) {$s^0$};
		\node [style=none] (27) at (2, -5) {};
		\node [style=none] (28) at (1.25, -4) {};
		\node [style=none] (29) at (2.75, -4) {};
		\node [style=none] (30) at (2, -4) {};
		\node [style=map] (33) at (0, -4.5) {$\cv$};
		\node [style=right label] (34) at (0, -3.625) {$S_\mr{in}$};
		\node [style=white dot] (35) at (-2, 2.5) {};
		\node [style=large map] (36) at (1, 2.5) {$\mr{SWAP}$};
		\node [style=right label] (37) at (2, -3.625) {$S_\mr{in}$};
		\node [style=right label] (38) at (0, 3.5) {$S_\mr{out}$};
		\node [style=right label] (39) at (2, 3.5) {$S_\mr{out}$};
	\end{pgfonlayer}
	\begin{pgfonlayer}{edgelayer}
		\draw (2.center) to (3.center);
		\draw (15.center) to (0.center);
		\draw (16.center) to (1.center);
		\draw (22.center) to (18.center);
		\draw [in=270, out=90] (23.center) to (19.center);
		\draw (27.center) to (29.center);
		\draw (29.center) to (28.center);
		\draw (28.center) to (27.center);
		\draw (24) to (25);
		\draw (35) to (36);
	\end{pgfonlayer}
\end{tikzpicture}}$ 
    \fcaption{Quantum circuit for the ${\tt 2 \mhyphen CTRL}$ supermap.  The input of the supermap are  two sector-preserving  channels transforming  a system $S_{\rm in}$ with Hilbert space  $\spc H_{S_{\rm in}}   =  \spc  H_{S_{\rm in}}^0 \oplus \spc H_{S_{\rm in}}^1$ into a system $S_{\rm out}$ with Hilbert space  $\spc H_{S_{\rm out}}   =  \spc  H_{S_{\rm out}}^0 \oplus \spc H_{S_{\rm out}}^1$. 
    The output of the supermap  is  a controlled channel transforming the composite system $C\otimes T_{\rm in}$ with $\spc  H_{T_{\rm in}}  \simeq  \spc H_{S_{\rm in}^1}$  into the composite system $C\otimes T_{\rm out}$ with $\spc  H_{T_{\rm out}}  \simeq  \spc H_{S_{\rm out}^1}$.  The channels  $\cv$ and $\cd$ and the state $|s^0\rangle$ are defined as in Theorem \ref{th:ControlSupermap}.      A very similar supermap  was  defined  in Ref.\,\cite{Chiribella_2019} for the case $T_\mr{in} = T_\mr{out}$.}
    \label{fig:2CoherentControl}
\end{figure*}

\subsection{What about general channels?} \label{sec:AboutGeneralChannels}

A natural question to ask would be whether the previous result can be extended to the case of controls between two general noisy channels, as defined in equation (\ref{eq:ctrl2Kraus}) and classified in Theorem \ref{th:2ControlParam}: i.e., whether a given version of a control between two channels $\ca$ and $\cb$ can be obtained from the application of the ${\tt 2 \mhyphen CTRL}$ supermap on suitably chosen sector-preserving channels of type $(1 \to 1, d \to d')$. The answer to this question, however, is negative.

To see this, take $\ca = \cb = \cd$, where $\cd$ is the depolarising channel on a qubit, i.e. $\cd: \rho \mapsto \frac{1}{2} ( \rho + Z \rho Z)$. One natural version of a control between $\ca$ and $\cb$ is then given by the channel $ \ci_C \otimes \cd_{S^1}$: i.e., $\cd$ is always applied to $S^1$ and the control doesn't play any part. However, no use of the ${\tt 2 \mhyphen CTRL}$ supermap on $\ca$ and $\cb$ can yield this channel. This is essentially because, in channels obtained from the use of the ${\tt 2 \mhyphen CTRL}$ supermap, there can only be full coherence between one Kraus operator of $\ca$ and one Kraus operator of $\cb$.

Implementing the control between two noisy channels in general will therefore require the use of a more elaborate scheme, using more involved resources. In Appendix \ref{app:2ControlGeneral}, we propose such a scheme. Rather than sector-preserving channels of the form $\mathbb{C} \oplus \ch^1_{S_\mr{in}} \to \mathbb{C} \oplus \ch^1_{S_\mr{out}}$, this scheme will require the use of sector-preserving channels of the form $\mathbb{C} \oplus \ch^1_{S_\mr{in}} \to \mathbb{C} \oplus \ch^1_{S_\mr{out}} \otimes \ch^1_E$, where $\ch^1_E$ is an auxiliary Hilbert space, representing the environment.  In such a scheme, the number of Kraus operators of $\ca$ and $\cb$ that can be coherent with each other in the produced controlled channel is capped by the dimension of $\ch^1_E$.

\section{Compositely-controlled channels} \label{sec:compositelycontrolled}
In this section, we consider another generalisation of the notion of controlled quantum channels, corresponding to higher-dimensional control systems, and we show how to implement it, via a universal architecture, using as resources sector-preserving channels of type $(d,1, \dots, 1)$.

\subsection{Compositely-controlled channels and multiple pinned operators}  

We introduce a generalisation that can be useful in the description of quantum programs, which may contain instructions of the form `if $f(x) =1$, then execute channel $\cc$, otherwise do nothing', where  $f$ is a Boolean function taking as input  a parameter $x$ labelling the different branches of the computational process. 

To get started, consider a three-dimensional control system $C$, with basis states $\{|0\rangle, |1\rangle,  |2\rangle\}$. We associate state $|0\rangle$ to the execution of the given channel $\cc$, and states $|1\rangle$ and $|2\rangle$ to the `do nothing' option. This corresponds to choosing the Boolean function $f$ to be $f(0)=1$ and $f(1)=  f(2)  =  0$. A controlled channel can then be defined in terms of the Kraus operators 

\begin{align}\label{eq:ctrlKraus2}
{\tt ctrl}_{\alpha_i, \beta_i} \mhyphen   C_i   :=  |0\rangle \langle 0|   \otimes C_i   +   \alpha_i \,  |1\rangle \langle 1|    \otimes I  +  \beta_i \,   |2\rangle \langle 2|  \otimes I   \, ,
\end{align}
where ${\bf C}   :  =(C_i)_{i=1}^n$ is a Kraus representation of channel $\cc$, and $\boldsymbol{\alpha} : =(\alpha_i)_{i=1}^n$ and $\boldsymbol{\beta}:=(\beta_i)_{i=1}^n$ are complex amplitudes satisfying the normalisation conditions $\sum_{i=1}^n  \, |\alpha_i|^2  =1$ and $\sum_{i=1}^n  \, |\beta_i|^2  =1$, respectively. We shall call a controlled channel as defined in (\ref{eq:ctrlKraus2}) a \textit{$2$-compositely-controlled} channel.  In the following, this controlled channel will be denoted by ${\tt ctrl}_{ \boldsymbol{\alpha},  \boldsymbol{\beta}}^{\bf C} \mhyphen \map C$.

As with standard controlled channels, different choices of Kraus representations and of amplitudes generally lead to different kinds of controlled channels, and again, one may ask for a one-to-one parametrisation.  The  generalisation of Theorem \ref{th:characterisation} is the following.
\medskip 

\begin{theorem}\label{th:characterisation2}
The  $2$-compositely-controlled versions of  channel $\cc$, as defined by  Eq.~(\ref{eq:ctrlKraus2})  are in one-to-one\footnote{Except in the case where $\abs{\gamma_{12}} =1$; the choice of $C_2'$ is then irrelevant. Given that this is a set of measure $0$, we will neglect the existence of this case in the rest of this paper.} correspondence with  triples of the form $  (C_1', C_2',  \gamma_{12})$, where $C_1'$ and $C_2'$ are two Kraus operators for channel $\cc$,  and of a complex amplitude $\gamma_{12} \in \C$ satisfying $|\gamma_{12} |  \le 1$. Explicitly, the Kraus operators for the controlled channel can be written as 
\be \label{newkraus} \begin{cases}
     \widehat{C}_1' = \ketbra{0}{0} \otimes  C_1'   + \ketbra{1}{1} \otimes   I  +   \gamma_{12}\,   \ketbra{2}{2} \otimes  I    \\
     \widehat{C}_2' = \ketbra{0}{0}  \otimes C_2'  +   \sqrt{  1-  |\gamma_{12}|^2}\,   \ketbra{2}{2} \otimes  I \\
     \widehat{C}_i'  =    \ketbra{0}{0}  \otimes  C_i' \quad\quad \forall i \geq 3  \, ,
\end{cases}
\ee
for some suitable Kraus representation $(C_i')_{i}$ of channel $\cc$. 
\end{theorem}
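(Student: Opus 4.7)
My plan is to generalise the two-step strategy used for Theorem \ref{th:characterisation} (Lemmas \ref{lem:characterisation1} and \ref{lem:characterisation2}). The key new feature is that we now have two amplitude vectors $\boldsymbol{\alpha}$ and $\boldsymbol{\beta}$ in $\C^n$ instead of one, so a single unitary reshuffling of the Kraus representation cannot in general reduce both of them to the first basis vector; the best we can do is exploit the two-dimensional subspace they span, and this is precisely where $\gamma_{12}$ will appear.

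First I would prove the analogue of Lemma \ref{lem:characterisation1}. Given ${\bf C} = (C_i)_{i=1}^n$, $\boldsymbol{\alpha}$ and $\boldsymbol{\beta}$, decompose $\boldsymbol{\beta} = \gamma_{12}\,\boldsymbol{\alpha} + \boldsymbol{\beta}^\perp$ with $\gamma_{12} := \langle \boldsymbol{\alpha},\boldsymbol{\beta}\rangle$ and $\boldsymbol{\beta}^\perp \perp \boldsymbol{\alpha}$, noting $\|\boldsymbol{\beta}^\perp\|=\sqrt{1-|\gamma_{12}|^2}$. Then pick a unitary $V$ on $\C^n$ such that $V\boldsymbol{\alpha} = {\bf u}_n^{(1)}$ and $V\boldsymbol{\beta}^\perp = \sqrt{1-|\gamma_{12}|^2}\,{\bf u}_n^{(2)}$, so that $V\boldsymbol{\beta} = \gamma_{12}\,{\bf u}_n^{(1)} + \sqrt{1-|\gamma_{12}|^2}\,{\bf u}_n^{(2)}$. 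Applying $V$ to the Kraus representation of $\cc$ and to the induced Kraus representation of ${\tt ctrl}_{\boldsymbol{\alpha},\boldsymbol{\beta}}^{\bf C}\mhyphen \map C$ produces exactly the form displayed in equation (\ref{newkraus}), with $C_j' := \sum_i V_{ji} C_i$. One then checks the normalisation $\sum_i \widehat{C}_i'^\dagger \widehat{C}_i' = I_{CT}$ by a direct block computation: the $|0\rangle\langle 0|$ block yields $\sum_i C_i'^\dagger C_i' = I_T$ from trace-preservation of $\cc$, while the $|1\rangle\langle 1|$ and $|2\rangle\langle 2|$ blocks yield $1$ and $|\gamma_{12}|^2 + (1-|\gamma_{12}|^2) = 1$ respectively.

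Next I would prove the uniqueness (analogue of Lemma \ref{lem:characterisation2}). Suppose two triples $(C_1', C_2', \gamma_{12})$ and $(D_1, D_2, \delta_{12})$ give the same compositely-controlled channel via equation (\ref{newkraus}); then the corresponding Kraus representations are related by a unitary matrix $W$. Reading off the coefficient of $|1\rangle\langle 1|\otimes I$ in the image of $\widehat{D}_1$ forces $W_{11}=1$, whence unitarity gives $W_{1j}=W_{j1}=0$ for $j\ne 1$; reading off $|2\rangle\langle 2|\otimes I$ in the images of $\widehat{D}_1$ and $\widehat{D}_2$ then forces $\delta_{12}=\gamma_{12}$ and (using $|\gamma_{12}|<1$) $W_{22}=1$, hence $W_{2j}=W_{j2}=0$ for $j\ne 2$. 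Finally, reading the $|0\rangle\langle 0|$ blocks yields $D_1=C_1'$ and $D_2=C_2'$. This is the step I expect to require the most care, since one has to track several matrix-element constraints simultaneously and verify that the degenerate case $|\gamma_{12}|=1$ is genuinely the only exception (consistent with the footnote of the theorem).

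Finally I would establish that every admissible triple actually defines a valid 2-compositely-controlled channel, and that this channel is independent of the chosen completion $(C_i')_{i\ge 3}$. The first point follows from the normalisation check above together with the observation that, by hypothesis, $C_1'$ and $C_2'$ appear jointly in some Kraus representation of $\cc$, so a completion always exists. The independence of the completion is verified by computing the action of $\sum_i \widehat{C}_i'\,(\cdot)\,\widehat{C}_i'^\dagger$ block by block on the $3\times 3$ decomposition of the control system: the diagonal blocks reproduce $\cc$, the identity, and the identity respectively, while the off-diagonal blocks depend only on $C_1'$, $C_2'$, and $\gamma_{12}$ (the $(1,2)$ and $(0,1)$ blocks even depending on only one of them). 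Combining these three steps yields the claimed one-to-one correspondence.
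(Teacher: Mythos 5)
Your proposal is correct, and its first half coincides with the paper's argument: the paper also performs Gram--Schmidt on the two amplitude vectors, taking $\ket{v_1}=\ket{\alpha}$ and $\ket{v_2}=(\ket{\beta}-\braket{\alpha|\beta}\ket{\alpha})/\|\cdot\|$, builds the unitary $U=\sum_j\ketbra{j}{v_j}$, and reads off Eq.~(\ref{newkraus}) with $\gamma_{12}=\braket{\alpha|\beta}$ --- exactly your decomposition $\boldsymbol\beta=\gamma_{12}\boldsymbol\alpha+\boldsymbol\beta^\perp$. Where you diverge is in establishing that the correspondence is one-to-one. You argue injectivity by relating two canonical-form Kraus representations through a unitary $W$ and forcing $W_{11}=1$, then $W_{22}=1$, in the spirit of Lemma~\ref{lem:characterisation2}; and you separately check that the channel is independent of the completion $(C_i')_{i\ge 3}$. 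The paper instead writes out the action of ${\tt ctrl}_{\boldsymbol\alpha,\boldsymbol\beta}^{\bf C}\mhyphen\map C$ on a generic product state $\rho_C\otimes\rho_T$ and observes that the resulting expression depends only on $(\cc, C_1', C_2',\gamma_{12})$ while each of these parameters can be extracted from specific matrix elements of the output; this single computation delivers both well-definedness and injectivity at once. Your route is somewhat longer but makes the role of the degenerate case $|\gamma_{12}|=1$ (where $W_{22}$ is unconstrained and $C_2'$ undetermined) more transparent, consistently with the theorem's footnote; the paper's route is more economical and also produces the explicit output-state formula that is reused later in the section. Both arguments are sound.
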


\Proof The proof is a generalisation of the proof of Theorem \ref{th:characterisation}.    Starting from the Kraus operators in  Eq.~(\ref{eq:ctrlKraus2}), one can generate a new Kraus representation of the controlled channel using  a unitary matrix.    To choose the appropriate unitary matrix, we apply the Gram-Schmidt construction to the column vectors $\ket{\alpha}  = (\alpha_i)_{i=1}^n$ and $\ket{\beta} =  (\beta_i)_{i=1}^n$.  In other words, we construct an orthonormal basis   $(|v_i\rangle)_{i=1}^n$ where the first vector is  $\ket{v_1}  =  \ket{\alpha}$ and the second vector is $\ket{v_2}  =  \ket{\beta}  -   \braket{  \alpha| \beta}  \,  \ket{\alpha}/\|    \ket{\beta}  -   \braket{  \alpha|  \beta}  \,  \ket{\alpha} \|$.   One can then define the unitary operator $U  =  \sum_{j=1}^n \ketbra{j}{ v_j}$, and use its matrix elements to define a new Kraus representation   $\widehat{C}_j    =\sum_i     U_{ji} \,   {\tt ctrl}_{\alpha_i, \beta_i} \mhyphen   C_i   $.   Explicit calculation of the Kraus operators yields  Eq.~(\ref{newkraus}), with $\gamma_{12}  =  \braket{\alpha|\beta}$,  $ C_1'   =    \sum_i  \overline \alpha_i  \,    C_i $ and   $ C_2'   =    \sum_i   (\overline \beta_i-  \braket{\beta |\alpha} \, \overline \alpha_i  )/\sqrt{1-|\gamma_{12}|^2}\,    C_i$.  

For every given controlled channel ${\tt ctrl}_{\boldsymbol{\alpha},  \boldsymbol{\beta}}^{\bf C} \mhyphen \map C$, the pinned Kraus operators $C_1$ and $C_2$, and the amplitude $\gamma_{12}$ can be uniquely determined from the action of the channel on a generic product state of the target and the control. Explicitly, one has
\begin{align}
\nonumber 
&{\tt ctrl}_{\boldsymbol{\alpha},  \boldsymbol{\beta}}^{\bf C} \mhyphen \map C  (\rho_C\otimes \rho_T)   \\
\nonumber & \quad     =       \langle 0  |  \rho_C  |0\rangle    ~ \ketbra{0}{0}_C  \otimes  \cc ( \rho_T)     
   \\
 \nonumber  &\qquad         +   \langle 1  |  \rho_C  |1\rangle  ~   \ketbra{1}{1}_C  \otimes \rho_T  \\ 
 \nonumber  &\qquad         +   \langle 2 |  \rho_C  |2\rangle  ~   \ketbra{2}{2}_C   \otimes \rho_T \\ 
\nonumber   &  \qquad  +    \langle 0  |  \rho_C  |1\rangle    ~   C_1'\,  \ketbra{0}{1}_C  \otimes \rho_T   +  {\rm h.c.}  \\
\nonumber   &  \qquad  +    \langle 0  |  \rho_C  |2\rangle    ~  \sqrt{  1-  |\gamma_{12}|^2}\,   \,   \ketbra{0}{2}_C   \otimes C_2'\,  \rho_T   +  {\rm h.c.}  \\
  &  \qquad  +    \langle 1  |  \rho_C  |2\rangle    ~ \gamma_{12} \, \ketbra{1}{2}_C   \otimes     \rho_T +  {\rm h.c.}    \, ,
  \end{align} 
  
 from which the  operators $C_1$ and $C_2$, and the amplitude $\gamma_{12}$ can be extracted by taking the appropriate matrix elements of the output state.   

In summary, every $2$-compositely-controlled channel as defined by  Eq.~(\ref{eq:ctrlKraus2})   can be parameterised by two pinned Kraus operators   $(C_1',C_2')$ and one amplitude  $\gamma_{12}$, and the triple  $(C_1',C_2',\gamma_{12})$ is uniquely determined by the channel.    \qed 

\smallskip  
The above notion of controlled channel can be easily extended  to higher dimensional systems, introducing controlled Kraus operators of the form    
\begin{align}\label{eq:ctrlKrausm}  
{\tt ctrl} \mhyphen C_i  =  \ketbra{0}{0} \otimes   C_i   +  \sum_{k=1}^m  \,  \alpha_i^{k}  \,  \ketbra{k}{k}  \otimes   I \,,
\end{align}  where, for each $k  \in  \{1,\dots, m\}$,  the amplitudes $(  \alpha_i^k)_i$ satisfy the normalisation condition $\sum_i    \,  \left|\alpha_i^{k}\right|^2  =1$. Controlled channels of  the form (\ref{eq:ctrlKrausm}) will be called $m$-compositely controlled channels. In this case, the controlled channel is in one-to-one correspondence with $m$ pinned Kraus operators,  and $m-1$ complex amplitudes:  using the same argument as in the proof of Theorem  \ref{th:characterisation2}, one can show that the controlled channel has a Kraus representation of the form 

 \be   \begin{cases}
     \widehat{C}_1 = \ketbra{0}{0} \otimes C_1'   + \ketbra{1}{1} \otimes  I +    \sum_{k  >1} \gamma_{1k}\,  \ketbra{k}{k}  \otimes   I   \\
     \widehat{C}_2 = \ketbra{0}{0} \otimes C_2'   +   \sqrt{  1-  |\gamma_{12}|^2}\,    \ketbra{2}{2}  \otimes I +  \sum_{k>2}    \gamma_{2k}  \, \ketbra{k}{k} \otimes I  \\
     \widehat{C}_3 = \ketbra{0}{0}  \otimes C_3'  +   \sqrt{  1-  |\gamma_{13}|^2  -  |\gamma_{23}|^2}\, \ketbra{3}{3}   \otimes  I  +  \sum_{k>3}    \gamma_{3k}  \,  \ketbra{k}{k} \otimes I \\
    \vdots  \\ 
   \widehat{C}_{j} = \ketbra{0}{0} \otimes  C_{j}'   +   \sqrt{  1- \sum_{i<j} |\gamma_{ij}|^2  }\, \ketbra{j}{j} \otimes  I   +  \sum_{k>j}    \gamma_{jk}  \, \ketbra{k}{k}  \otimes    I  \qquad  \forall j  \le  m\\
     \widehat{C}_j  =   \ketbra{0}{0}   \otimes  C_j' \qquad \forall j  > m  \, ,
\end{cases}
\ee

where $(C_i')$ is a Kraus representation of channel $\cc$, and $(\gamma_{ij})_{i<k\le m}$ are suitable amplitudes. 
 
In summary, controlled channels can represent if-then clauses in the execution of a quantum program, and every branch of the program corresponding to the `do nothing' instruction corresponds to a pinned Kraus operator.  

\subsection{A resource: sector-preserving channels of type $(d, 1 \dots, 1)$} 

\begin{figure*}
    \centering
    $%
\begin{tikzpicture}
	\begin{pgfonlayer}{nodelayer}
		\node [style=none] (0) at (-1.75, 3) {};
		\node [style=none] (1) at (-1.75, -3) {};
		\node [style=none] (2) at (3.5, -3) {};
		\node [style=none] (3) at (3.5, 3) {};
		\node [style=none] (4) at (1.25, 1.75) {};
		\node [style=none] (5) at (1.25, -1.75) {};
		\node [style=none] (6) at (2.5, 1.75) {};
		\node [style=none] (7) at (2.5, -0.75) {};
		\node [style=none] (8) at (2.5, -1.75) {};
		\node [style=none] (9) at (3.5, 1.75) {};
		\node [style=none] (10) at (3.5, -1.75) {};
		\node [style=none] (11) at (2.5, 0.75) {};
		\node [style=none] (12) at (0, -3) {};
		\node [style=none] (13) at (0, -4) {};
		\node [style=right label] (14) at (0, -3.625) {$C$};
		\node [style=none] (15) at (2.5, 3) {};
		\node [style=none] (16) at (2.5, 4) {};
		\node [style=right label] (17) at (2.5, 3.625) {$T \simeq S^0$};
		\node [style=right label] (18) at (2.5, 1) {$S$};
		\node [style=right label] (19) at (2.5, -1.25) {$S$};
		\node [style=none] (20) at (0, 3) {};
		\node [style=none] (21) at (0, 4) {};
		\node [style=right label] (22) at (0, 3.625) {$C$};
		\node [style=none] (23) at (2.5, -3) {};
		\node [style=none] (24) at (2.5, -4) {};
		\node [style=right label] (25) at (2.5, -3.625) {$T \simeq S^0$};
		\node [style=none] (26) at (-0.25, 0) {$\tt{CTRL}_{(2)}$};
	\end{pgfonlayer}
	\begin{pgfonlayer}{edgelayer}
		\draw (6.center) to (11.center);
		\draw (7.center) to (8.center);
		\draw (12.center) to (13.center);
		\draw (15.center) to (16.center);
		\draw (20.center) to (21.center);
		\draw (23.center) to (24.center);
		\draw (0.center)
			 to (1.center)
			 to (2.center)
			 to (10.center)
			 to (5.center)
			 to (4.center)
			 to (9.center)
			 to (3.center)
			 to cycle;
	\end{pgfonlayer}
\end{tikzpicture}} \quad\quad := \quad %
\begin{tikzpicture}
	\begin{pgfonlayer}{nodelayer}
		\node [style=none] (2) at (-2, 6.5) {};
		\node [style=none] (3) at (-2, -5.5) {};
		\node [style=right label] (4) at (-2, -5.375) {$C$};
		\node [style=right label] (5) at (4, 6.125) {$T$};
		\node [style=right label] (6) at (2, 0) {$S$};
		\node [style=right label] (7) at (4, -1) {$S$};
		\node [style=right label] (8) at (-2, 6.125) {$C$};
		\node [style=none] (10) at (4, -5.5) {};
		\node [style=right label] (11) at (4, -5.375) {$T$};
		\node [style=very very large map] (12) at (1, 4.75) {$\cd$};
		\node [style=none] (13) at (4, 6.5) {};
		\node [style=none] (14) at (2, -4) {};
		\node [style=right label] (15) at (-2, 0) {$C$};
		\node [style=right label] (18) at (4, 1) {$S$};
		\node [style=none] (20) at (2, 4.5) {};
		\node [style=none] (21) at (4, -5.5) {};
		\node [style=none] (24) at (2.075, -4.35) {$s^1$};
		\node [style=none] (25) at (2, -5) {};
		\node [style=none] (26) at (1.25, -4) {};
		\node [style=none] (27) at (2.75, -4) {};
		\node [style=none] (28) at (2, -4) {};
		\node [style=map] (29) at (4, -4.25) {$\ci$};
		\node [style=right label] (30) at (2, -3.625) {$S$};
		\node [style=right label] (33) at (4, -3.625) {$S$};
		\node [style=right label] (34) at (2, 3.5) {$S$};
		\node [style=right label] (35) at (4, 3.5) {$S$};
		\node [style=none] (39) at (0, -4) {};
		\node [style=none] (40) at (0.075, -4.35) {$s^2$};
		\node [style=none] (41) at (0, -5) {};
		\node [style=none] (42) at (-0.75, -4) {};
		\node [style=none] (43) at (0.75, -4) {};
		\node [style=none] (44) at (0, -4) {};
		\node [style=right label] (45) at (0, -3.625) {$S$};
		\node [style=right label] (47) at (0, 0) {$S$};
		\node [style=none] (48) at (0, 4.5) {};
		\node [style=right label] (49) at (0, 3.5) {$S$};
		\node [style=none] (51) at (4, 0.75) {};
		\node [style=none] (52) at (4, -0.75) {};
		\node [style=very very large map] (53) at (1, 2.5) {$\mr{c-perm}^\dagger$};
		\node [style=very very large map] (54) at (1, -2.5) {$\mr{c-perm}$};
	\end{pgfonlayer}
	\begin{pgfonlayer}{edgelayer}
		\draw (2.center) to (3.center);
		\draw (25.center) to (27.center);
		\draw (27.center) to (26.center);
		\draw (26.center) to (25.center);
		\draw (41.center) to (43.center);
		\draw (43.center) to (42.center);
		\draw (42.center) to (41.center);
		\draw (21.center) to (52.center);
		\draw (51.center) to (13.center);
		\draw (44.center) to (48.center);
		\draw (20.center) to (28.center);
	\end{pgfonlayer}
\end{tikzpicture}}$
    \fcaption{The ${\tt CTRL}_{(2)}$ supermap. $s^1$ and $s^2$ are the only normalised states in $\cl(\ch_S^1)$ and $\cl(\ch_S^2)$, respectively. $\ci: \cl(\ch_T) \to \cl(\ch_S)$ is an isometric channel such that $\ci(\rho) = \rho$ (i.e. it just embeds $\cl(\ch_T)$ within $\cl(\ch_S)$). $\mr{c-perm}$ is the unitary gate which, depending on the state of $C$, performs a cyclic permutation of the three $S$ wires: it performs the identity if $C$ is in state $|0\rangle$, connects each wire to its right neighbour if $C$ is in the state $|1\rangle$, and connects each wire to its left neighbour if $C$ is in the state $|2\rangle$. $\cd : \cl(\ch_C \otimes \ch_S \otimes \ch_S \otimes \ch_S) \to \cl(\ch_C \otimes \ch_T)$ reduces to the identity on the sector $\cl(\ch_C \otimes \ch_S^2 \otimes \ch_S^1 \otimes \ch_S^0) \simeq \cl(\ch_C \otimes \ch_S^0) \simeq \cl(\ch_C \otimes \ch_T)$ of its input space; its action on other sectors is irrelevant and can be arbitrarily defined, as long as it makes $\cd$ into a CPTP map.}
     \label{fig:CoherentControl2}
\end{figure*}

We now consider the types of channels that can be used as resources for the implementation of $m$-composite control. To do this, we extend the approach of section \ref{sec:experimental} to consider sector-preserving channels with one $d$-dimensional sector and $m$ $1$-dimensional sectors, i.e. those of type $(d,  \underbrace{1,\dots, 1}_{m~{\rm times}} )$.   

The Kraus operators of such channels have the form  
\begin{align}
\widetilde C_i   =   C_i  \oplus   \alpha^{1}_i  \,   \oplus   \alpha^{2}_i  \oplus \dots \oplus  \alpha^{m}_i  \, ,  
\end{align} 
where $(C_i)_i$ is a Kraus representation of a channel $\map C$ in dimension $d$, and, for every $k  \in \{1,\dots, m\}$,  $(\alpha_i^{k})_i$ are amplitudes satisfying the condition $\sum_i  |\alpha_i^{k}|^2 =1$. The existence of this form is a consequence of Theorem 6 in Ref.\,\cite{vanrietvelde2020routed}.   

A one-to-one parametrisation can be obtained using the same argument as in the proof of Theorem  \ref{th:characterisation2}, which allows us to show that every sector-preserving channel of this type is in one-to-one
 correspondence with  $m$ pinned Kraus operators  of a channel $\cc$ in dimension $d$, and with a set of complex amplitudes $(  \gamma_{ij})_{1\le i  <j \le m}$.       
To illustrate the situation, we consider the $m=2$  case.   In this case, it is possible to show that every sector-preserving channel $\widetilde \cc$ admits a Kraus representation of the form        
\be  \begin{cases}
     \widetilde{C}_1 = C_1  \oplus 1   \oplus    \gamma_{12}     \\
     \widetilde{C}_2 = C_2  \oplus 0  \oplus   \sqrt{  1-  |\gamma_{12}|^2}   \\
     \widetilde{C}_i  =     C_i  \oplus 0  \oplus 0  \quad\quad \forall i \geq 3  \, ,
\end{cases}
\ee
where $(C_i)_i$ are Kraus operators of a suitable quantum channel $\cc$ in dimension $d$.  
 Note that this expression is completely analogous to Eq.\,(\ref{newkraus}).   In this case, it is possible to show that the quadruple $(\cc, C_1,  C_2,  \gamma_{12})$ provides a one-to-one parametrisation: 
 \medskip 

 \begin{theorem}\label{th:sectorpreservingd11}
The sector-preserving channels of type $(d,1,1)$ are in one-to-one correspondence with quadruples of the form  $(\cc,  C_1,  C_2,  \gamma_{12})$, where  $\cc :  \cl (\ch_S^0) \to \cl (\ch_S^0)$ is a channel in dimension $d$,  $C_1$ and $C_2$ are two Kraus operators for $\cc$,   and $\gamma$ is a complex amplitude satisfying $|\gamma_{12}|  \le 1$.   
 \end{theorem}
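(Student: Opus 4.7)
The approach closely mirrors the proofs of Theorems~\ref{th:characterisationSectorPreserving} and~\ref{th:characterisation2}. I would proceed by establishing the correspondence in two stages: first exhibiting a canonical Kraus representation parametrised by the quadruple $(\mathcal{C}, C_1, C_2, \gamma_{12})$, then showing that the quadruple is uniquely recoverable from the channel.

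\textbf{Stage 1 (canonical form).} Starting from the general Kraus form $\widetilde{C}_i = C_i \oplus \alpha_i^1 \oplus \alpha_i^2$ established before the theorem statement (a consequence of Theorem~6 of Ref.\,\cite{vanrietvelde2020routed}), I would apply Gram--Schmidt to the two amplitude vectors $\ket{\alpha^1} := (\alpha_i^1)_i$ and $\ket{\alpha^2} := (\alpha_i^2)_i$ in $\mathbb{C}^n$. This produces a unitary matrix $U$ whose first two rows are the orthonormalised versions of $\ket{\alpha^1}$ and $\ket{\alpha^2}$, so that $U\ket{\alpha^1} = (1,0,\ldots,0)^\top$ and $U\ket{\alpha^2} = (\gamma_{12}, \sqrt{1-|\gamma_{12}|^2}, 0, \ldots, 0)^\top$ with $\gamma_{12} := \braket{\alpha^1 | \alpha^2}$. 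Defining $\widetilde{C}'_j := \sum_i U_{ji} \widetilde{C}_i$ gives an equivalent Kraus representation of the same channel, in precisely the canonical form displayed above the theorem. The corresponding $C_i' := \sum_i U_{ji} C_i$ form a Kraus representation of some channel $\mathcal{C}$ on the $d$-dimensional sector, and by construction $C_1'$ and $C_2'$ are both Kraus operators of $\mathcal{C}$. Conversely, given any quadruple $(\mathcal{C}, C_1, C_2, \gamma_{12})$ with $|\gamma_{12}| \le 1$ and $C_1, C_2$ both appearing as Kraus operators of $\mathcal{C}$, completing $\{C_1, C_2\}$ into a full Kraus representation of $\mathcal{C}$ and plugging into the canonical form yields a sector-preserving channel of type $(d,1,1)$.

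\textbf{Stage 2 (uniqueness).} To show that different quadruples yield different channels, I would extract each datum directly from the action of $\widetilde{\mathcal{C}}$ on specific inputs, making the extraction manifestly representation-independent:
\begin{align*}
\mathcal{C}(\rho) &= \widetilde{\mathcal{C}}(\rho) \quad \text{for } \rho \in \cl(\ch_S^0), \\
\gamma_{12} &\text{ read off as the coefficient of } |s^1\rangle\langle s^2| \text{ in } \widetilde{\mathcal{C}}(|s^1\rangle\langle s^2|),\\
C_1 |\psi\rangle &\text{ read off as the sector-0 component of } \widetilde{\mathcal{C}}(|\psi\rangle\langle s^1|) \cdot |s^1\rangle, \quad \psi \in \ch_S^0,
\end{align*}
and then $C_2$ is recovered from the sector-0 component of $\widetilde{\mathcal{C}}(|\psi\rangle\langle s^2|)\cdot|s^2\rangle$, which equals $(\overline{\gamma_{12}} C_1 + \sqrt{1-|\gamma_{12}|^2}\, C_2)|\psi\rangle$, after subtracting the already-known $\overline{\gamma_{12}} C_1 |\psi\rangle$ contribution and dividing by $\sqrt{1-|\gamma_{12}|^2}$.

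\textbf{Main obstacle.} The principal subtlety is the degenerate case $|\gamma_{12}| = 1$, where the Gram--Schmidt step collapses and $C_2$ becomes undetermined; as in Theorem~\ref{th:characterisation2}, this is a measure-zero set that I would exclude explicitly (matching the footnote convention already adopted in the paper). A secondary check is verifying that the extracted operators $C_1, C_2$ are indeed Kraus operators of the extracted channel $\mathcal{C}$, which follows by applying the criterion of the Lemma after Theorem~\ref{th:characterisation} to the canonical-form expressions, together with the fact that the map $\mathcal{C}_- : \rho \mapsto \mathcal{C}(\rho) - C_1 \rho C_1^\dagger - C_2 \rho C_2^\dagger$ is completely positive (manifestly so from the remaining Kraus operators in the canonical form).
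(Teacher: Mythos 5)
Your proposal is correct and follows essentially the same route as the paper, which proves this theorem by deferring to the argument of Theorem~\ref{th:characterisation2}: Gram--Schmidt on the two amplitude vectors to reach the canonical Kraus form $\widetilde C_1 = C_1\oplus 1\oplus\gamma_{12}$, $\widetilde C_2 = C_2\oplus 0\oplus\sqrt{1-|\gamma_{12}|^2}$, $\widetilde C_i = C_i\oplus 0\oplus 0$, followed by recovering the quadruple from matrix elements of the channel's action (your extraction formulas check out, up to the harmless fact that the coefficient of $\ketbra{s^1}{s^2}$ in $\widetilde{\cc}(\ketbra{s^1}{s^2})$ is $\overline{\gamma_{12}}$ rather than $\gamma_{12}$). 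Your handling of the degenerate case $|\gamma_{12}|=1$ also matches the measure-zero exclusion adopted in the footnote to Theorem~\ref{th:characterisation2}.
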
 
 \medskip 

  Combining Theorem \ref{th:sectorpreservingd11}  with Theorem \ref{th:characterisation2}, we obtain the following.
  \medskip 

 \begin{corollary}\label{cor:correspondence(d,1,1)}
The following sets are in one-to-one correspondence:  
\begin{enumerate}
\item  $2$-compositely-controlled channels, as defined in (\ref{eq:ctrlKraus2}), with a $d$-dimensional target system;
\item  quadruples of the form  $(\cc,  C_1,  C_2,  \gamma_{12})$, where  $(\map C,  C_1, C_2)$ is a channel  with two pinned Kraus operators in dimension $d$, and $\gamma_{12}$ is a complex amplitude satisfying $|\gamma_{12}|\le 1$;
\item  sector-preserving channels of type $(d,1,1)$.
\end{enumerate}
\end{corollary}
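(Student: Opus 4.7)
The plan is to obtain the three-way correspondence by composing the two bijections already established in Theorems \ref{th:characterisation2} and \ref{th:sectorpreservingd11}. Theorem \ref{th:characterisation2} provides a bijection between item (1) and the set of quadruples of the form $(\cc, C_1', C_2', \gamma_{12})$ (with $\cc$ the underlying channel being controlled, $C_1',C_2'$ pinned Kraus operators of $\cc$, and $|\gamma_{12}|\le 1$), i.e.\ item (2). Theorem \ref{th:sectorpreservingd11} provides a bijection between item (2) and item (3). Thus the composition of these two bijections yields the desired bijection between (1) and (3), and transitivity of the ``is in one-to-one correspondence with'' relation closes the triangle.

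Concretely, I would begin by writing down the two bijections explicitly: $\Phi : \text{(1)} \to \text{(2)}$, $\mathtt{ctrl}_{\boldsymbol\alpha,\boldsymbol\beta}^{\mathbf C}\mhyphen\cc \mapsto (\cc, C_1', C_2', \gamma_{12})$ from Theorem~\ref{th:characterisation2}, and $\Psi : \text{(2)} \to \text{(3)}$, $(\cc, C_1, C_2, \gamma_{12}) \mapsto \widetilde{\cc}[C_1, C_2, \gamma_{12}]$ from Theorem~\ref{th:sectorpreservingd11}. Both are already shown to be bijective, so $\Psi \circ \Phi$ is the bijection from (1) to (3), with its inverse $\Phi^{-1} \circ \Psi^{-1}$.

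The only potential subtlety is a bookkeeping check, namely that the ``channel $\cc$'' appearing in the parametrisation of item (2) coming from Theorem~\ref{th:characterisation2} is identified with the same object as the ``channel $\cc$ on the $d$-dimensional sector'' appearing in Theorem~\ref{th:sectorpreservingd11}. Both theorems parametrise their objects by channels acting on the same $d$-dimensional Hilbert space together with a pair of pinned Kraus operators and one inter-branch amplitude, so the identification is canonical. I would briefly note that the degenerate case $|\gamma_{12}|=1$, already excluded from the parametrisation in Theorem~\ref{th:characterisation2}, is likewise excluded here (forming a measure-zero set), so that the two bijections compose cleanly on the same domain.

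I do not expect any real obstacle in this proof, since the heavy lifting has already been done in Theorems~\ref{th:characterisation2} and \ref{th:sectorpreservingd11}; the remaining task is essentially to observe that the intermediate parametrising set (item (2)) is the same in both theorems and that bijections compose to bijections.
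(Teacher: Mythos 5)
Your proposal is correct and matches the paper's own (implicit) argument exactly: the paper derives the corollary simply by ``combining Theorem~\ref{th:sectorpreservingd11} with Theorem~\ref{th:characterisation2},'' i.e.\ by composing the two already-established bijections through the common parametrising set of quadruples, just as you do. Your additional remarks on the canonical identification of the intermediate set and the measure-zero degenerate case $|\gamma_{12}|=1$ are consistent with the paper's footnoted caveat and add nothing problematic.
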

\medskip 

The case of arbitrary $m\ge2$ can be treated similarly, and also in this case, one can show that there exists a one-to-one correspondence between the set of $m$-compositely-controlled quantum channels of type (\ref{eq:ctrlKrausm}) and the set of sector-preserving channels of type $(d,  \underbrace{1,\dots, 1}_{m~{\rm times}} )$.

\subsection{Implementing compositely-controlled channels via a universal circuit architecture}

We now turn to the generalisation of the result of Section \ref{sec:ControlSupermap+Equivalence} to the implementation of $m$-compositely-controlled channels. For illustration, we once again focus on the case $m=2$.

As stated in Corollary \ref{cor:correspondence(d,1,1)}, for any given $d$, there is indeed a one-to-one correspondence between the $2$-compositely-controlled channels on target systems of dimension $d$ (which can be written as the $\rm{ctrl}_{C_1,C_2, \gamma_{12}} \mhyphen \cc$), and the sector-preserving channels of type $(d,1,1)$ (which can be written as the $\widetilde{\cc}[C_1,C_2, \gamma_{12}]$). This correspondence can also be implemented via a universal circuit architecture.
\medskip 

\begin{theorem} \label{th:Control(2)Supermap}
Let $\ch_S = \ch_S^0 \oplus \ch_S^1 \oplus \ch_S^2$ be a Hilbert space, with $\dim(\ch_S^0)=d$ and $\dim(\ch_S^1)=\dim(\ch_S^2)=1$, and let $\ch_C$ be a control space of dimension 3.

There exists a supermap ${\tt CTRL}_{(2)}$ of type $(S \to S) \to (C \otimes S^0 \to C \otimes S^0)$ such that for any sector-preserving channel $\widetilde{\cc}[C_1,C_2, \gamma_{12}]$,

\be
    {\tt CTRL}_{(2)}[\widetilde{\cc}[C_1,C_2, \gamma_{12}]] = \rm{ctrl}_{C_1,C_2, \gamma_{12}} \mhyphen \cc \, .
 \ee
 
Furthermore, this supermap is unitary-preserving on the sector-preserving channels on $\ch_S$.
\end{theorem}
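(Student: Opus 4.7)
The plan is to mirror the proof of Theorem \ref{th:ControlSupermap}: define ${\tt CTRL}_{(2)}$ by the explicit quantum circuit of Figure \ref{fig:CoherentControl2}, then verify, branch by branch on the control, that this circuit acts on $\widetilde{\cc}[C_1,C_2,\gamma_{12}]$ to produce $\rm{ctrl}_{C_1,C_2,\gamma_{12}}\mhyphen\cc$. In the circuit, the target is embedded via $\ci$ into the $S^0$ sector, two fixed ancillas are prepared in the unique normalised states $|s^1\rangle\in\ch_S^1$ and $|s^2\rangle\in\ch_S^2$, the controlled cyclic permutation routes one of the three states into the middle $S$-wire where $\widetilde{\cc}$ is applied, and finally the (inverse) c-perm together with $\cd$ extract the target wire while discarding the two ancilla sectors.

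To verify correctness, I would use the canonical Kraus representation of $\widetilde{\cc}[C_1, C_2, \gamma_{12}]$ given by Theorem \ref{th:sectorpreservingd11}, namely $\widetilde{C}_1 = C_1\oplus 1\oplus \gamma_{12}$, $\widetilde{C}_2 = C_2\oplus 0 \oplus \sqrt{1-|\gamma_{12}|^2}$ and $\widetilde{C}_i = C_i\oplus 0 \oplus 0$ for $i\ge3$, and compute the action of the circuit on a generic pure input $(c_0|0\rangle + c_1|1\rangle + c_2|2\rangle)_C\otimes|\psi\rangle_T$. In the $|0\rangle$ control branch the embedded target (lying in $S^0$) is routed through $\widetilde{\cc}$, so $\widetilde{C}_i$ produces $C_i|\psi\rangle$ (still in $S^0$); in the $|1\rangle$ branch $|s^1\rangle$ is sent through, yielding $\delta_{i1}|s^1\rangle$; and in the $|2\rangle$ branch $|s^2\rangle$ is sent through, yielding $\alpha_i^{(2)}|s^2\rangle$ with $\alpha_1^{(2)}=\gamma_{12}$, $\alpha_2^{(2)}=\sqrt{1-|\gamma_{12}|^2}$, and $\alpha_i^{(2)}=0$ otherwise. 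After the second c-perm unfurls each branch and $\cd$ collapses the ancilla sectors, the Kraus operators of the output channel assemble into $\widehat{C}_i = |0\rangle\langle 0|\otimes C_i + \alpha_i^{(1)}|1\rangle\langle 1|\otimes I + \alpha_i^{(2)}|2\rangle\langle 2|\otimes I$ (with $\alpha_1^{(1)}=1$ and $\alpha_i^{(1)}=0$ for $i\ge 2$), which match the parametrisation of $\rm{ctrl}_{C_1,C_2,\gamma_{12}}\mhyphen\cc$ in Eq.~(\ref{newkraus}) exactly.

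For unitarity preservation on sector-preserving channels, a sector-preserving unitary channel of type $(d,1,1)$ has a single Kraus operator of the form $U\oplus e^{i\theta_1}\oplus e^{i\theta_2}$; in the parametrisation this corresponds to the boundary case $|\gamma_{12}|=1$, in which only $\widehat{C}_1 = |0\rangle\langle 0|\otimes C_1 + |1\rangle\langle 1|\otimes I + \gamma_{12}|2\rangle\langle 2|\otimes I$ survives, and $\widehat{C}_1^\dagger \widehat{C}_1 = I_C\otimes I_{S^0}$ is immediate. The main obstacle is the careful bookkeeping of the three control branches through the two controlled permutations --- in particular, ensuring that the amplitude $\gamma_{12}$ propagates correctly into the $|2\rangle\langle 2|$ component of $\widehat{C}_1$, thereby encoding the coherence between the two ``do nothing'' branches of the compositely-controlled channel. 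Beyond this, no genuinely new ingredient is needed compared to the proof of Theorem \ref{th:ControlSupermap}.
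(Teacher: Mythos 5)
Your proposal is correct and follows essentially the same route as the paper, whose entire proof is the one-line remark that the theorem ``can be proven in a straightforward way using the formulation of the ${\tt CTRL}_{(2)}$ supermap presented in Figure \ref{fig:CoherentControl2}.'' You simply fill in the branch-by-branch Kraus computation (and the $|\gamma_{12}|=1$ unitary case) that the paper leaves implicit, in direct analogy with the proof of Theorem \ref{th:ControlSupermap}, exactly as intended.
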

\medskip 

This Theorem can be proven in a straightforward way using the formulation of the ${\tt CTRL}_{(2)}$ supermap presented in Figure \ref{fig:CoherentControl2}.

Similarly, an inverse ${\tt CTRL}_{(2)}^{-1}$ of this control map can easily be defined, showing that sector-preserving channels of type $(d,1,1)$ and $2$-compositely-controlled channels are fully equivalent resources.

These results can be generalised in a straightforward way to the case of general $m$: for any given $m$, $d$, there exists a universal circuit architecture (represented by a supermap $\rm{CTRL}_{(m)}$) turning a sector-preserving channel of type $(d, \underbrace{1,\dots, 1}_{m~{\rm times}} )$ into its corresponding $m$-compositely-controlled channel, and a universal circuit architecture realising the converse task. 

\section{Supermaps on routed channels} \label{sec:SupermapsRouted}

We now turn to a formal construction allowing to describe the ${\tt CTRL}$  and ${\tt 2 \mhyphen CTRL}$ supermaps as acting solely on sector-preserving channels. We achieve this through the introduction of the notion of supermaps on routed channels.

Supermaps, first introduced in \cite{chiribella2008supermaps}, can be conceptually defined as `operations on operations': they are linear transformations taking quantum channels as input and mapping them to output quantum channels. Their main use is to model the different ways of using and connecting together `black-box' operations \cite{chiribella2008supermaps}, for example in a quantum comb \cite{chiribella2009combs} or in more exotic setups, such as the quantum switch \cite{chiribella2009switch}; they provide a rigorous  framework for  studying the features and relative advantages of these manipulations of the black boxes.

Here, we define (deterministic) `supermaps on routed channels' as supermaps which only accept a subset of all channels as input; namely, in the language of Ref.\,\cite{vanrietvelde2020routed}, those that follow a certain route -- i.e. satisfy a given set of sectorial constraints. These restrictions will make the possible supermaps more diverse, as they are no longer required to be well-defined on all possible input channels. Fortunately, a good deal of the formal work necessary in order to define such supermaps on routed channels has been undertaken already: in \cite{chiribella2009switch}, deterministic supermaps on a restricted subset of quantum channels were defined in general. We will recall the main parts of this definition, then apply it to the definition of supermaps on routed channels.

We denote a system $X$ as corresponding to a finite-dimensional Hilbert space $\ch_X$. For two systems $A_\mr{in}$ and $A_\mr{out}$, we denote $\mr{Herm}(A_\mr{in} \to A_\mr{out})$ to be the real vector space of Hermitian-preserving linear maps from $\cl(\ch_{A_\mr{in}})$ to $\cl(\ch_{A_\mr{out}})$, and $\mr{QChan}(A_\mr{in} \to A_\mr{out})$ to be its subset containing quantum channels of type $A_\mr{in} \to A_\mr{out}$. We also note $\mr{St}(X) \subseteq \mr{Herm}(\ch_X)$ to be the set of states for system $X$. The first notion we need is that of an extension of a set of channels, which allows us to take into consideration channels which also act on an auxiliary system. Given a subset of channels $S \subseteq \mr{QChan}(A_\mr{in} \to A_\mr{out})$ and two systems $X_\mr{in}$, $X_\mr{out}$, the extension of $S$ in $\mr{QChan}(A_\mr{in} X_\mr{in} \to A_\mr{out} X_\mr{out})$ is the set $\mr{Ext}_{X_\mr{in} \to X_\mr{out}}(S):= \{\cc \in \mr{QChan}(A_\mr{in} X_\mr{in} \to A_\mr{out} X_\mr{out}) \,  | \, \forall \sigma \in \mr{St}(X_\mr{in}), \Tr_{X_\mr{out}}\left( \cc \circ (\mathbb{1}_{A_\mr{in}} \otimes \sigma_{X_\mr{in}}) \right) \in S  \}$. With this notion, one can  define deterministic supermaps on a restricted subset of channels \cite{chiribella2009switch}.

\medskip 

\begin{definition}[Deterministic supermaps on a restricted subset of quantum channels] \label{def:SupermapRestricted}
Let $S \subseteq \mr{QChan}(A_\mr{in} \to A_\mr{out})$ and $T \subseteq \mr{QChan}(P \to F)$ be subsets of channels. A deterministic supermap of type $S \to T$ is a linear map $\cs$ from $\mr{Herm}(A_\mr{in} \to A_\mr{out})$ to $\mr{Herm}(P \to F)$ such that, for any auxiliary systems $X_\mr{in}$, $X_\mr{out}$ and for any channel $\cc \in \mr{Ext}_{X_\mr{in} \to X_\mr{out}}(S)$, one has

\be (\cs \otimes \ci_{X_\mr{in} \to X_\mr{out}}) [\cc] \in \mr{Ext}_{X_\mr{in} \to X_\mr{out}}(T) \, , \ee

where $\ci_{X_\mr{in} \to X_\mr{out}}$ is the identity supermap on $\mr{Herm}(X_\mr{in} \to X_\mr{out})$.
\end{definition}
\medskip 

\begin{figure*}
    \centering
    \scalebox{.8}{$%
\begin{tikzpicture}
	\begin{pgfonlayer}{nodelayer}
		\node [style=none] (0) at (1.5, 2.75) {};
		\node [style=none] (5) at (1.5, -2.75) {};
		\node [style=none] (6) at (-1.5, -2.75) {};
		\node [style=none] (7) at (-1.5, 2.75) {};
		\node [style=none] (12) at (1, 0) {$\cs$};
		\node [style=none] (13) at (0.5, 1.75) {};
		\node [style=none] (14) at (0.5, -1.75) {};
		\node [style=none] (15) at (-1, 1.75) {};
		\node [style=none] (16) at (-1, -0.75) {};
		\node [style=none] (17) at (-1, -1.75) {};
		\node [style=none] (18) at (-1.5, 1.75) {};
		\node [style=none] (19) at (-1.5, -1.75) {};
		\node [style=none] (20) at (-1, 0.75) {};
		\node [style=none] (21) at (0, -2.75) {};
		\node [style=none] (22) at (0, -3.75) {};
		\node [style=right label] (23) at (0, -3.375) {$P^m$};
		\node [style=none] (24) at (0, 2.75) {};
		\node [style=none] (25) at (0, 3.75) {};
		\node [style=right label] (26) at (0, 3.375) {$F^n$};
		\node [style=right label] (29) at (-1, 1) {$A^l_\mr{out}$};
		\node [style=right label] (30) at (-1, -1.25) {$A^k_\mr{in}$};
		\node [style=none] (31) at (-2.5, 0) {$\lambda^l_k$};
		\node [style=none] (33) at (2.5, 0) {$\mu_m^n$};
	\end{pgfonlayer}
	\begin{pgfonlayer}{edgelayer}
		\draw (5.center) to (6.center);
		\draw (7.center) to (0.center);
		\draw (18.center) to (13.center);
		\draw (13.center) to (14.center);
		\draw (14.center) to (19.center);
		\draw (15.center) to (20.center);
		\draw (16.center) to (17.center);
		\draw (18.center) to (7.center);
		\draw (19.center) to (6.center);
		\draw (21.center) to (22.center);
		\draw (24.center) to (25.center);
		\draw (0.center) to (5.center);
	\end{pgfonlayer}
\end{tikzpicture}} \quad : \quad %
\begin{tikzpicture}
	\begin{pgfonlayer}{nodelayer}
		\node [style=right label] (1) at (-1, -1.25) {$X_\mr{in}$};
		\node [style=right label] (3) at (1, -1.25) {$A^k_\mr{in}$};
		\node [style=large map] (4) at (0, 0) {$\ca$};
		\node [style=none] (5) at (-1, 1.5) {};
		\node [style=right label] (6) at (-1, 1.25) {$X_\mr{out}$};
		\node [style=none] (7) at (1, 1.5) {};
		\node [style=right label] (8) at (1, 1.25) {$A^l_\mr{out}$};
		\node [style=none] (9) at (-1, -1.5) {};
		\node [style=none] (10) at (1, -1.5) {};
		\node [style=none] (11) at (-2.25, 0) {$\lambda^l_k$};
	\end{pgfonlayer}
	\begin{pgfonlayer}{edgelayer}
		\draw (9.center) to (5.center);
		\draw (10.center) to (7.center);
	\end{pgfonlayer}
\end{tikzpicture}}  \quad \mapsto \quad %
\begin{tikzpicture}
	\begin{pgfonlayer}{nodelayer}
		\node [style=right label] (1) at (-1, -1.25) {$X_\mr{in}$};
		\node [style=semilarge map] (4) at (0, 0) {$(\ci \otimes \cs)[\ca]$};
		\node [style=none] (5) at (-1, 1.5) {};
		\node [style=right label] (6) at (-1, 1.25) {$X_\mr{out}$};
		\node [style=none] (9) at (-1, -1.5) {};
		\node [style=right label] (11) at (1, -1.25) {$P^m$};
		\node [style=none] (12) at (1, 1.5) {};
		\node [style=right label] (13) at (1, 1.25) {$F^n$};
		\node [style=none] (14) at (1, -1.5) {};
		\node [style=none] (15) at (-3, 0) {$\mu_m^n$};
	\end{pgfonlayer}
	\begin{pgfonlayer}{edgelayer}
		\draw (9.center) to (5.center);
		\draw (14.center) to (12.center);
	\end{pgfonlayer}
\end{tikzpicture}}  := \,\,\, %
\begin{tikzpicture}
	\begin{pgfonlayer}{nodelayer}
		\node [style=none] (0) at (1.75, 2.75) {};
		\node [style=none] (5) at (1.75, -2.75) {};
		\node [style=none] (6) at (-1.25, -2.75) {};
		\node [style=none] (7) at (-1.25, 2.75) {};
		\node [style=none] (12) at (1.25, 0) {$\cs$};
		\node [style=none] (13) at (0.75, 1.75) {};
		\node [style=none] (14) at (0.75, -1.75) {};
		\node [style=none] (15) at (-0.5, 1.75) {};
		\node [style=none] (17) at (-0.5, -1.75) {};
		\node [style=none] (18) at (-1.25, 1.75) {};
		\node [style=none] (19) at (-1.25, -1.75) {};
		\node [style=none] (21) at (0.5, -2.75) {};
		\node [style=none] (22) at (0.5, -3.75) {};
		\node [style=right label] (23) at (0.5, -3.375) {$P^m$};
		\node [style=none] (24) at (0.5, 2.75) {};
		\node [style=none] (25) at (0.5, 3.75) {};
		\node [style=right label] (26) at (0.5, 3.375) {$F^n$};
		\node [style=none] (33) at (-4, 0) {$\mu_m^n$};
		\node [style=right label] (44) at (-2.5, -3.25) {$X_\mr{in}$};
		\node [style=large map] (47) at (-1.5, 0) {$\ca$};
		\node [style=none] (48) at (-2.5, 3.75) {};
		\node [style=right label] (49) at (-2.5, 3.5) {$X_\mr{out}$};
		\node [style=none] (52) at (-2.5, -3.75) {};
	\end{pgfonlayer}
	\begin{pgfonlayer}{edgelayer}
		\draw (5.center) to (6.center);
		\draw (7.center) to (0.center);
		\draw (18.center) to (13.center);
		\draw (13.center) to (14.center);
		\draw (14.center) to (19.center);
		\draw (18.center) to (7.center);
		\draw (19.center) to (6.center);
		\draw (21.center) to (22.center);
		\draw (24.center) to (25.center);
		\draw (52.center) to (48.center);
		\draw (15.center) to (17.center);
		\draw (0.center) to (5.center);
	\end{pgfonlayer}
\end{tikzpicture}}$}
    \fcaption{Diagrammatic representation of a supermap $\cs$ on routed channels, and of its action on a routed channel $(\lambda, \ca)$ (also acting on an auxiliary system), yielding a routed channel $(\mu, (\ci \otimes \cs)[\ca])$.}
    \label{fig:RoutedSupermap}
\end{figure*}

We can first apply this notion to the definition of supermaps acting on a single routed channel. First, we briefly recall the basic notions introduced in Ref.\,\cite{vanrietvelde2020routed}. Here, we will restrict ourselves to routes with \textit{full coherence}, i.e., only encoding sectorial constraints and not coherence constraints\footnote{This leads us to adopting notations that are somewhat different from those of Ref.\,\cite{vanrietvelde2020routed}. There, routes for general channels were taken to be completely positive relations $\Lambda_{kk'}^{ll'}$. As fully coherent routes are those $\Lambda$'s which can be written as $\Lambda_{kk'}^{ll'} = \lambda_k^l \lambda_{k'}^{l'}$, we simplify our notations in the present article by just referring to them as $\lambda_k^l$.}. A partitioned Hilbert space $X^k$ is a Hilbert space with a preferred orthogonal partition, labelled by a finite set $\cz_X$; i.e., $\ch_X := \oplus_{k \in \cz_X} \ch_X^k$. Given two such partitioned spaces $A_\mr{in}^k$ and $A_\mr{out}^l$, and a relation $\lambda: \cz_{A_\mr{in}} \to \cz_{A_\mr{out}}$ (or, in other terms, a Boolean matrix $(\lambda_k^l)_{k \in \cz_{A_\mr{in}}}^{l \in \cz_{A_\mr{out}}}$), we say that a channel $\ca \in \mr{QChan}(A_\mr{in} \to A_\mr{out})$ \textit{follows the route} $\lambda$ if

\be 
\forall k, \, \forall \rho \in \cl(\ch_{A_\mr{in}}^k), \,\, \ca(\rho) \in \cl \left( \bigoplus \,_{l | \lambda_k^l=1} \ch_{A_\mr{out}}^l \right) \,.
\ee 

Equivalently (see Theorem 6 in Ref.\,\cite{vanrietvelde2020routed}), given any Kraus representation $(K_i)_i$ of $\ca$, $\ca$ follows $\lambda$ if and only if

\be \forall i, \forall k, \, \forall |\psi\rangle \in \ch_{A_\mr{in}}^k, \,\, K_i \,|\psi\rangle \in \bigoplus \,_{l | \lambda_k^l=1} \ch_{A_\mr{out}}^l \,. \ee

We denote the set of channels of type $A_\mr{in} \to A_\mr{out}$ that follow the route $\lambda$ as $\mr{QChan}^\lambda(A_\mr{in}^k \to A_\mr{out}^l) \subseteq \mr{QChan}(A_\mr{in} \to A_\mr{out})$. We will also say that these channels have type $A_\mr{in}^k \overset{\lambda}{\to} A_\mr{out}^l$; it is this type of channels on which we want to define supermaps. It is easy to prove that the condition defining the extension of $A_\mr{in}^k \overset{\lambda}{\to} A_\mr{out}^l$ to auxiliary systems can be simplified.

\medskip 

\begin{lemma} \label{lem:RoutedExtensions}
For a type $A_\mr{in}^k \overset{\lambda}{\to} A_\mr{out}^l$ and auxiliary systems $X_\mr{in}$, $X_\mr{out}$, one has:

\be \begin{split}
    &\mr{Ext}_{X_\mr{in} \to X_\mr{out}}(\mr{QChan}^\lambda(A_\mr{in}^k \to A_\mr{out}^l)) \\ &= \mr{QChan}^\lambda (A_\mr{in}^k \otimes X_\mr{in} \to A_\mr{out}^l \otimes X_\mr{out}) \, .
\end{split}\ee
\end{lemma}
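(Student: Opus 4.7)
The plan is to establish the two set inclusions separately, reducing both to the Kraus-operator characterisation of routed channels recalled just before the lemma (Theorem 6 of Ref.\,\cite{vanrietvelde2020routed}). The inclusion $\supseteq$ is straightforward: if $\cc$ follows $\lambda$ on the partitioned spaces $(\ch_{A_\mr{in}}^k \otimes \ch_{X_\mr{in}})_k$ and $(\ch_{A_\mr{out}}^l \otimes \ch_{X_\mr{out}})_l$, then for any product input $\rho \otimes \sigma$ with $\rho \in \cl(\ch_{A_\mr{in}}^k)$ and any $\sigma \in \mr{St}(X_\mr{in})$, the output $\cc(\rho \otimes \sigma)$ is supported in $\bigoplus_{l \mid \lambda_k^l = 1} \ch_{A_\mr{out}}^l \otimes \ch_{X_\mr{out}}$, and partial-tracing over $X_\mr{out}$ preserves this sectorial support, so $\cc$ lies in the extension.

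The reverse inclusion $\subseteq$ is where the real work sits, and the idea will be to use all pure product inputs to extract a Kraus-level statement about $\cc$ itself. Fix a Kraus representation $(K_i)_i$ of $\cc$; for any unit vector $|\phi\rangle \in \ch_{X_\mr{in}}$, any orthonormal basis $\{|\beta\rangle\}$ of $\ch_{X_\mr{out}}$ and $\sigma := |\phi\rangle\langle\phi|$, the channel $\Tr_{X_\mr{out}} \circ \cc \circ (\mathbb{1}_{A_\mr{in}} \otimes \sigma)$ inherits the Kraus representation $M_{i,\beta} := (I_{A_\mr{out}} \otimes \langle\beta|)\, K_i\, (I_{A_\mr{in}} \otimes |\phi\rangle)$. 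Since by hypothesis this channel follows $\lambda$, the quoted Kraus criterion forces $\Pi^{l'}_{A_\mr{out}}\, M_{i,\beta}\, \Pi^k_{A_\mr{in}} = 0$ whenever $\lambda_k^{l'} = 0$.

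Rewriting that vanishing as $(I_{A_\mr{out}} \otimes \langle\beta|)\, (\Pi^{l'}_{A_\mr{out}} \otimes I_{X_\mr{out}})\, K_i\, (\Pi^k_{A_\mr{in}} \otimes I_{X_\mr{in}})\, (I_{A_\mr{in}} \otimes |\phi\rangle) = 0$ and letting $|\phi\rangle$ and $|\beta\rangle$ range freely over their Hilbert spaces yields the full operator identity $(\Pi^{l'}_{A_\mr{out}} \otimes I_{X_\mr{out}})\, K_i\, (\Pi^k_{A_\mr{in}} \otimes I_{X_\mr{in}}) = 0$, which by Theorem 6 in the reverse direction exactly says that $\cc$ follows $\lambda$ on the extended partitions. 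I do not expect a genuine obstacle: the content of the lemma is essentially the bookkeeping fact that the sectorial constraints detected by arbitrary $\sigma$-slices already encode the full sectorial structure of the extended channel. The one point to be careful about is that Theorem 6 must apply to the ad hoc Kraus representation $M_{i,\beta}$ written above, but since the quoted version is phrased for \emph{every} Kraus representation no further normalisation or repackaging is needed.
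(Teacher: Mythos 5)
Your proof is correct, and both inclusions are argued soundly: the partial trace preserves sectorial support for the easy direction, and the Kraus-level slicing $M_{i,\beta} = (I \otimes \langle\beta|)K_i(I\otimes|\phi\rangle)$ combined with the quantifier ``for every Kraus representation'' in the quoted Theorem 6 correctly upgrades the sector constraint on all $\sigma$-slices to the operator identity $(\Pi^{l'}_{A_\mr{out}}\otimes I)K_i(\Pi^k_{A_\mr{in}}\otimes I)=0$. The paper itself gives no proof of this lemma (it is merely asserted to be ``easy to prove''), so there is nothing to compare against; your argument is the natural one and supplies exactly the missing details.
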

\medskip 

In other terms, the extension of the set of channels $A_\mr{in} \to A_\mr{out}$ following a route $\lambda$ to a type $X_\mr{in} \to X_\mr{out}$ is simply the set of channels $A_\mr{in} \otimes X_\mr{in} \to A_\mr{out} \otimes X_\mr{out}$ following $\lambda$. The definition of supermaps on routed channels then derives naturally from Definition \ref{def:SupermapRestricted}.
\medskip 

\begin{definition}[Supermap on routed channels] \label{def:SupermapRoutedMono}
Let $A_\mr{in}^k$, $A_\mr{out}^l$, $P^m$ and $F^n$ be partitioned Hilbert spaces, and let $\lambda: \cz_{A_\mr{in}} \to \cz_{A_\mr{out}}$ and $\mu: \cz_P \to \cz_F$ be two relations. A deterministic supermap of type $(A_\mr{in}^k \overset{\lambda}{\to} A_\mr{out}^l) \to (P^m \overset{\mu}{\to} F^n)$ is a linear map $\cs$ from $\mr{Herm}(A_\mr{in} \to A_\mr{out})$ to $\mr{Herm}(P \to F)$ such that, for any auxiliary systems $X_\mr{in}$, $X_\mr{out}$ and for any channel $\cc \in \mr{QChan}^\lambda (A_\mr{in}^k \otimes X_\mr{in} \to A_\mr{out}^l \otimes X_\mr{out})$, one has

\be (\cs \otimes \ci_{X_\mr{in} \to X_\mr{out}}) [\cc] \in \mr{QChan}^\mu (P^m \otimes X_\mr{in} \to F^n \otimes X_\mr{out}) \, . \ee
\end{definition}
\medskip 

\begin{figure*}
    \centering
    \scalebox{.7}{$%
\begin{tikzpicture}
	\begin{pgfonlayer}{nodelayer}
		\node [style=none] (0) at (2.5, 2.75) {};
		\node [style=none] (1) at (2.5, 1.75) {};
		\node [style=none] (2) at (0.5, 1.75) {};
		\node [style=none] (3) at (0.5, -1.75) {};
		\node [style=none] (4) at (2.5, -1.75) {};
		\node [style=none] (5) at (2.5, -2.75) {};
		\node [style=none] (6) at (-2.5, -2.75) {};
		\node [style=none] (7) at (-2.5, 2.75) {};
		\node [style=none] (8) at (2, 1.75) {};
		\node [style=none] (9) at (2, 0.75) {};
		\node [style=none] (10) at (2, -0.75) {};
		\node [style=none] (11) at (2, -1.75) {};
		\node [style=none] (12) at (0, 0) {$\cs$};
		\node [style=none] (13) at (-0.5, 1.75) {};
		\node [style=none] (14) at (-0.5, -1.75) {};
		\node [style=none] (15) at (-2, 1.75) {};
		\node [style=none] (16) at (-2, -0.75) {};
		\node [style=none] (17) at (-2, -1.75) {};
		\node [style=none] (18) at (-2.5, 1.75) {};
		\node [style=none] (19) at (-2.5, -1.75) {};
		\node [style=none] (20) at (-2, 0.75) {};
		\node [style=none] (21) at (0, -2.75) {};
		\node [style=none] (22) at (0, -3.75) {};
		\node [style=right label] (23) at (0, -3.375) {$P^q$};
		\node [style=none] (24) at (0, 2.75) {};
		\node [style=none] (25) at (0, 3.75) {};
		\node [style=right label] (26) at (0, 3.375) {$F^r$};
		\node [style=right label] (27) at (2, 1) {$B^n_\mr{out}$};
		\node [style=right label] (28) at (2, -1.25) {$B^m_\mr{in}$};
		\node [style=right label] (29) at (-2, 1) {$A^l_\mr{out}$};
		\node [style=right label] (30) at (-2, -1.25) {$A^k_\mr{in}$};
		\node [style=none] (31) at (-3.5, 0) {$\lambda^l_k$};
		\node [style=none] (32) at (2.75, 0) {$\sigma_m^n$};
		\node [style=none] (33) at (-3.5, 2.5) {$\mu_q^r$};
	\end{pgfonlayer}
	\begin{pgfonlayer}{edgelayer}
		\draw (0.center) to (1.center);
		\draw (1.center) to (2.center);
		\draw (2.center) to (3.center);
		\draw (3.center) to (4.center);
		\draw (4.center) to (5.center);
		\draw (5.center) to (6.center);
		\draw (7.center) to (0.center);
		\draw (8.center) to (9.center);
		\draw (10.center) to (11.center);
		\draw (18.center) to (13.center);
		\draw (13.center) to (14.center);
		\draw (14.center) to (19.center);
		\draw (15.center) to (20.center);
		\draw (16.center) to (17.center);
		\draw (18.center) to (7.center);
		\draw (19.center) to (6.center);
		\draw (21.center) to (22.center);
		\draw (24.center) to (25.center);
	\end{pgfonlayer}
\end{tikzpicture}} \,\, : \, %
\begin{tikzpicture}
	\begin{pgfonlayer}{nodelayer}
		\node [style=right label] (1) at (-1, -1.25) {$X_\mr{in}$};
		\node [style=right label] (3) at (1, -1.25) {$A^k_\mr{in}$};
		\node [style=large map] (4) at (0, 0) {$\ca$};
		\node [style=none] (5) at (-1, 1.5) {};
		\node [style=right label] (6) at (-1, 1.25) {$X_\mr{out}$};
		\node [style=none] (7) at (1, 1.5) {};
		\node [style=right label] (8) at (1, 1.25) {$A^l_\mr{out}$};
		\node [style=none] (9) at (-1, -1.5) {};
		\node [style=none] (10) at (1, -1.5) {};
		\node [style=none] (11) at (-2.25, 0) {$\lambda^l_k$};
	\end{pgfonlayer}
	\begin{pgfonlayer}{edgelayer}
		\draw (9.center) to (5.center);
		\draw (10.center) to (7.center);
	\end{pgfonlayer}
\end{tikzpicture}} \otimes \,\, %
\begin{tikzpicture}
	\begin{pgfonlayer}{nodelayer}
		\node [style=right label] (1) at (-1, -1.25) {$B^m_\mr{in}$};
		\node [style=right label] (3) at (1, -1.25) {$Y_\mr{in}$};
		\node [style=large map] (4) at (0, 0) {$\cb$};
		\node [style=none] (5) at (-1, 1.5) {};
		\node [style=right label] (6) at (-1, 1.25) {$B^n_\mr{out}$};
		\node [style=none] (7) at (1, 1.5) {};
		\node [style=right label] (8) at (1, 1.25) {$Y_\mr{out}$};
		\node [style=none] (9) at (-1, -1.5) {};
		\node [style=none] (10) at (1, -1.5) {};
		\node [style=none] (11) at (-2.5, 0) {$\sigma_m^n$};
	\end{pgfonlayer}
	\begin{pgfonlayer}{edgelayer}
		\draw (9.center) to (5.center);
		\draw (10.center) to (7.center);
	\end{pgfonlayer}
\end{tikzpicture}} \mapsto \,\, %
\begin{tikzpicture}
	\begin{pgfonlayer}{nodelayer}
		\node [style=right label] (1) at (-1.5, -1.25) {$X_\mr{in}$};
		\node [style=right label] (3) at (1.5, -1.25) {$Y_\mr{in}$};
		\node [style=very large map] (4) at (0, 0) {$(\ci \otimes \cs \otimes \ci)[\ca,\cb]$};
		\node [style=none] (5) at (-1.5, 1.5) {};
		\node [style=right label] (6) at (-1.5, 1.25) {$X_\mr{out}$};
		\node [style=none] (7) at (1.5, 1.5) {};
		\node [style=right label] (8) at (1.5, 1.25) {$Y_\mr{out}$};
		\node [style=none] (9) at (-1.5, -1.5) {};
		\node [style=none] (10) at (1.5, -1.5) {};
		\node [style=right label] (11) at (0, -1.25) {$P^q$};
		\node [style=none] (12) at (0, 1.5) {};
		\node [style=right label] (13) at (0, 1.25) {$F^r$};
		\node [style=none] (14) at (0, -1.5) {};
		\node [style=none] (15) at (-4, 0) {$\mu_q^r$};
	\end{pgfonlayer}
	\begin{pgfonlayer}{edgelayer}
		\draw (9.center) to (5.center);
		\draw (10.center) to (7.center);
		\draw (14.center) to (12.center);
	\end{pgfonlayer}
\end{tikzpicture}} \, := \,\,\,\, %
\begin{tikzpicture}
	\begin{pgfonlayer}{nodelayer}
		\node [style=none] (0) at (2.5, 2.75) {};
		\node [style=none] (1) at (2.5, 1.75) {};
		\node [style=none] (2) at (0.5, 1.75) {};
		\node [style=none] (3) at (0.5, -1.75) {};
		\node [style=none] (4) at (2.5, -1.75) {};
		\node [style=none] (5) at (2.5, -2.75) {};
		\node [style=none] (6) at (-2.5, -2.75) {};
		\node [style=none] (7) at (-2.5, 2.75) {};
		\node [style=none] (8) at (1.75, 1.75) {};
		\node [style=none] (9) at (1.75, -1.75) {};
		\node [style=none] (12) at (0, 0) {$\cs$};
		\node [style=none] (13) at (-0.5, 1.75) {};
		\node [style=none] (14) at (-0.5, -1.75) {};
		\node [style=none] (15) at (-1.75, 1.75) {};
		\node [style=none] (17) at (-1.75, -1.75) {};
		\node [style=none] (18) at (-2.5, 1.75) {};
		\node [style=none] (19) at (-2.5, -1.75) {};
		\node [style=none] (21) at (0, -2.75) {};
		\node [style=none] (22) at (0, -3.75) {};
		\node [style=right label] (23) at (0, -3.375) {$P^q$};
		\node [style=none] (24) at (0, 2.75) {};
		\node [style=none] (25) at (0, 3.75) {};
		\node [style=right label] (26) at (0, 3.375) {$F^r$};
		\node [style=none] (33) at (-5.25, 0) {$\mu_q^r$};
		\node [style=right label] (35) at (3.75, -3.25) {$Y_\mr{in}$};
		\node [style=large map] (36) at (2.75, 0) {$\cb$};
		\node [style=none] (39) at (3.75, 3.75) {};
		\node [style=right label] (40) at (3.75, 3.5) {$Y_\mr{out}$};
		\node [style=none] (42) at (3.75, -3.75) {};
		\node [style=right label] (44) at (-3.75, -3.25) {$X_\mr{in}$};
		\node [style=large map] (47) at (-2.75, 0) {$\ca$};
		\node [style=none] (48) at (-3.75, 3.75) {};
		\node [style=right label] (49) at (-3.75, 3.5) {$X_\mr{out}$};
		\node [style=none] (52) at (-3.75, -3.75) {};
	\end{pgfonlayer}
	\begin{pgfonlayer}{edgelayer}
		\draw (0.center) to (1.center);
		\draw (1.center) to (2.center);
		\draw (2.center) to (3.center);
		\draw (3.center) to (4.center);
		\draw (4.center) to (5.center);
		\draw (5.center) to (6.center);
		\draw (7.center) to (0.center);
		\draw (8.center) to (9.center);
		\draw (18.center) to (13.center);
		\draw (13.center) to (14.center);
		\draw (14.center) to (19.center);
		\draw (18.center) to (7.center);
		\draw (19.center) to (6.center);
		\draw (21.center) to (22.center);
		\draw (24.center) to (25.center);
		\draw (42.center) to (39.center);
		\draw (52.center) to (48.center);
		\draw (15.center) to (17.center);
	\end{pgfonlayer}
\end{tikzpicture}}$}
    \fcaption{Diagrammatic representation of a supermap $\cs$ acting on a pair of  routed channels $(\lambda, \ca)$ and $(\sigma, \cb)$ (also acting on auxiliary systems), yielding a routed channel $(\mu, (\ci \otimes \cs \otimes \ci)[\ca,\cb])$.}
    \label{fig:2RoutedSupermap}
\end{figure*}

We show how supermaps on routed channels can be represented graphically in Figure \ref{fig:RoutedSupermap}. The ${\tt CTRL}$ supermap described in Theorem \ref{th:ControlSupermap} can be characterised as a supermap on routed channels, with type $(A^k \overset{\delta}{\to} A^l) \to ( C \otimes S^1 \to C \otimes S^1)$.

Let us now turn to supermaps acting  on multiple routed channels. To avoid clutter, we will present the construction for supermaps acting on a pair of channels, the generalisation to $N\ge 2$ being immediate. Formally, these have to be defined as supermaps whose input channels should be product channels, with each channel in this product following a given route. For some partitioned spaces $A_\mr{in}^k$, $A_\mr{out}^l$, $B_\mr{in}^m$ and $B_\mr{out}^n$, and for two relations $\lambda: \cz_{A_\mr{in}} \to \cz_{A_\mr{out}}$ and $\sigma: \cz_{B_\mr{in}} \to \cz_{B_\mr{out}}$, we thus define $\mr{ProdChan}^{\lambda \times \sigma}(A_\mr{in}^k \otimes B_\mr{in}^m \to A_\mr{out}^l \otimes B_\mr{out}^n)$ to be the intersection of the set of product channels $\mr{ProdChan}(A_\mr{in} \otimes B_\mr{in} \to A_\mr{out} \otimes B_\mr{out})$ with $\mr{QChan}^{\lambda \times \sigma}(A_\mr{in}^k \otimes B_\mr{in}^m \to A_\mr{out}^l \otimes B_\mr{out}^n)$. One can then define supermaps  acting on such a set, once again following Definition \ref{def:SupermapRestricted}.

\begin{figure*}
    \centering
    $%
\begin{tikzpicture}
	\begin{pgfonlayer}{nodelayer}
		\node [style=none] (0) at (3.25, 3) {};
		\node [style=none] (1) at (3.25, 1.75) {};
		\node [style=none] (2) at (1.25, 1.75) {};
		\node [style=none] (3) at (1.25, -1.75) {};
		\node [style=none] (4) at (3.25, -1.75) {};
		\node [style=none] (5) at (3.25, -3) {};
		\node [style=none] (6) at (-3.25, -3) {};
		\node [style=none] (7) at (-3.25, 3) {};
		\node [style=none] (8) at (2.75, 1.75) {};
		\node [style=none] (9) at (2.75, 0.75) {};
		\node [style=none] (10) at (2.75, -0.75) {};
		\node [style=none] (11) at (2.75, -1.75) {};
		\node [style=none] (13) at (-1.25, 1.75) {};
		\node [style=none] (14) at (-1.25, -1.75) {};
		\node [style=none] (15) at (-2.75, 1.75) {};
		\node [style=none] (16) at (-2.75, -0.75) {};
		\node [style=none] (17) at (-2.75, -1.75) {};
		\node [style=none] (18) at (-3.25, 1.75) {};
		\node [style=none] (19) at (-3.25, -1.75) {};
		\node [style=none] (20) at (-2.75, 0.75) {};
		\node [style=none] (21) at (-1, -3) {};
		\node [style=none] (22) at (-1, -4) {};
		\node [style=right label] (23) at (-1, -3.625) {$C$};
		\node [style=none] (24) at (1, 3) {};
		\node [style=none] (25) at (1, 4) {};
		\node [style=right label] (26) at (1, 3.625) {$S^1_\mr{out}$};
		\node [style=right label] (27) at (2.75, 1) {$S^n_\mr{out}$};
		\node [style=right label] (28) at (2.75, -1.25) {$S^m_\mr{in}$};
		\node [style=right label] (29) at (-2.75, 1) {$S^l_\mr{out}$};
		\node [style=right label] (30) at (-2.75, -1.25) {$S^k_\mr{in}$};
		\node [style=none] (31) at (-4.25, 0) {$\delta^l_k$};
		\node [style=none] (32) at (3.5, 0) {$\delta_m^n$};
		\node [style=none] (36) at (0, 0) {$\tt{2 \mhyphen CTRL}$};
		\node [style=none] (39) at (-1, 3) {};
		\node [style=none] (40) at (-1, 4) {};
		\node [style=right label] (41) at (-1, 3.625) {$C$};
		\node [style=none] (42) at (1, -3) {};
		\node [style=none] (43) at (1, -4) {};
		\node [style=right label] (44) at (1, -3.625) {$S^1_\mr{in}$};
	\end{pgfonlayer}
	\begin{pgfonlayer}{edgelayer}
		\draw (0.center) to (1.center);
		\draw (1.center) to (2.center);
		\draw (2.center) to (3.center);
		\draw (3.center) to (4.center);
		\draw (4.center) to (5.center);
		\draw (5.center) to (6.center);
		\draw (7.center) to (0.center);
		\draw (8.center) to (9.center);
		\draw (10.center) to (11.center);
		\draw (18.center) to (13.center);
		\draw (13.center) to (14.center);
		\draw (14.center) to (19.center);
		\draw (15.center) to (20.center);
		\draw (16.center) to (17.center);
		\draw (18.center) to (7.center);
		\draw (19.center) to (6.center);
		\draw (21.center) to (22.center);
		\draw (24.center) to (25.center);
		\draw (39.center) to (40.center);
		\draw (42.center) to (43.center);
	\end{pgfonlayer}
\end{tikzpicture}} \quad\quad = \quad %
\begin{tikzpicture}
	\begin{pgfonlayer}{nodelayer}
		\node [style=none] (9) at (1, 0.75) {};
		\node [style=none] (10) at (1, -0.75) {};
		\node [style=none] (16) at (3, -0.75) {};
		\node [style=none] (20) at (3, 0.75) {};
		\node [style=none] (21) at (-1, 5.5) {};
		\node [style=none] (22) at (-1, -5.5) {};
		\node [style=right label] (23) at (-1, -5.125) {$C$};
		\node [style=right label] (26) at (1, 5.125) {$S^1_\mr{out}$};
		\node [style=right label] (27) at (1, 1) {$S^l_\mr{out}$};
		\node [style=right label] (28) at (1, -1.25) {$S^k_\mr{in}$};
		\node [style=right label] (29) at (3, 1) {$S^n_\mr{out}$};
		\node [style=right label] (30) at (3, -1.25) {$S^m_\mr{in}$};
		\node [style=none] (31) at (3.5, 0) {$\delta^n_m$};
		\node [style=none] (32) at (0, 0) {$\delta_k^l$};
		\node [style=right label] (41) at (-1, 5.125) {$C$};
		\node [style=none] (42) at (1, 5.5) {};
		\node [style=none] (43) at (1, -5.5) {};
		\node [style=right label] (44) at (1, -5.125) {$S^1_\mr{in}$};
		\node [style=white dot] (45) at (-1, -2.5) {};
		\node [style=white dot] (46) at (-1, 2.5) {};
		\node [style=large map] (47) at (2, -2.5) {$\mr{SWAP}$};
		\node [style=large map] (48) at (2, 2.5) {$\mr{SWAP}$};
		\node [style=none] (55) at (3, 4.25) {};
		\node [style=none] (60) at (3, -4.25) {};
		\node [style=left label] (61) at (-1.25, 0) {$C^p$};
		\node [style=left label] (62) at (-1.25, -2.5) {$\lambda^{pkm}$};
		\node [style=left label] (63) at (-1.25, 2.5) {$\lambda_{pln}$};
		\node [style=right label] (64) at (3, -3.75) {$S^0_\mr{in}$};
		\node [style=right label] (65) at (3, 3.75) {$S^0_\mr{out}$};
		\node [style=none] (67) at (3.075, -4.6) {$s^0$};
		\node [style=none] (68) at (3, -5.25) {};
		\node [style=none] (69) at (2.25, -4.25) {};
		\node [style=none] (70) at (3.75, -4.25) {};
		\node [style=none] (72) at (2.925, 4.6) {$s^0$};
		\node [style=none] (73) at (3, 5.25) {};
		\node [style=none] (74) at (2.25, 4.25) {};
		\node [style=none] (75) at (3.75, 4.25) {};
	\end{pgfonlayer}
	\begin{pgfonlayer}{edgelayer}
		\draw (9.center) to (42.center);
		\draw (55.center) to (20.center);
		\draw (10.center) to (43.center);
		\draw [in=90, out=-90] (16.center) to (60.center);
		\draw (21.center) to (22.center);
		\draw (45) to (47);
		\draw (46) to (48);
		\draw (68.center) to (70.center);
		\draw (70.center) to (69.center);
		\draw (69.center) to (68.center);
		\draw (73.center) to (75.center);
		\draw (75.center) to (74.center);
		\draw (74.center) to (73.center);
	\end{pgfonlayer}
\end{tikzpicture}}$
    \fcaption{Fully explicit formulation of the ${\tt 2 \mhyphen CTRL}$ supermap as a supermap on sector-preserving channels, in the framework of routed quantum circuits \cite{vanrietvelde2020routed}. $s^0$ is the only state on the one-dimensional sector $\cl(\ch_{S \mr{in}}^0)$. The Boolean vector $(\lambda^{pkm})_{p,k,m \in \{0,1\}}$ has coefficients $1$ for indices $001$ and $110$, and $0$ elsewhere. $(\lambda_{pln})_{p,l,n \in \{0,1\}}$ is its transpose.
    An advantage of the routed formulation is to allow one to get rid of \correction{ the superfluous   embedding operations that were} present in the standard formulation (Figure \ref{fig:2CoherentControl}).}
     \label{fig:2CoherentControlRouted}
\end{figure*}

\medskip 

\begin{definition}[Supermaps on pairs of routed channels] \label{def:SupermapRoutedBi}
Let $A_\mr{in}^k$, $A_\mr{out}^l$, $B_\mr{in}^m$, $B_\mr{out}^n$, $P^q$ and $F^r$ be partitioned Hilbert spaces, and let $\lambda: \cz_{A_\mr{in}} \to \cz_{A_\mr{out}}$, $\sigma: \cz_{B_\mr{in}} \to \cz_{B_\mr{out}}$ and $\mu: \cz_P \to \cz_F$ be relations. A deterministic supermap of type $(A_\mr{in}^k \overset{\lambda}{\to} A_\mr{out}^l) \otimes (B_\mr{in}^m \overset{\sigma}{\to} B_\mr{out}^n) \to (P^q \overset{\mu}{\to} F^r)$ is a linear map $\cs$ from $\mr{Herm}(A_\mr{in} \otimes B_\mr{in} \to A_\mr{out} \otimes B_\mr{out})$ to $\mr{Herm}(P \to F)$ such that, for any auxiliary systems $X_\mr{in}$, $X_\mr{out}$, $Y_\mr{in}$, $Y_\mr{out}$ and for any pair of channels $\ca \in \mr{QChan}^\lambda (A_\mr{in}^k \otimes X_\mr{in} \to A_\mr{out}^l \otimes X_\mr{out})$, $\cb \in \mr{QChan}^\sigma (B_\mr{in}^m \otimes Y_\mr{in} \to B_\mr{out}^n \otimes Y_\mr{out})$, one has

\be \begin{split}
     &(\ci_{X_\mr{in} \to X_\mr{out}} \otimes \cs \otimes \ci_{Y_\mr{in} \to Y_\mr{out}}) [\ca \otimes \cb] \\ &\in \mr{QChan}^\mu (X_\mr{in} \otimes P^q \otimes Y_\mr{in} \to X_\mr{out} \otimes F^r \otimes Y_\mr{out}) \, . 
\end{split}\ee
\end{definition}
\medskip

We show how  supermaps on pairs of routed channels can be represented graphically in Figure \ref{fig:2RoutedSupermap}. The ${\tt 2 \mhyphen CTRL}$ supermap described in Theorem \ref{th:2ControlSupermap} can be characterised as a supermap on routed channels, with type $(A^k \overset{\delta}{\to} A^l) \otimes (A^k \overset{\delta}{\to} A^l) \to (C \otimes S^1 \to C \otimes S^1)$; we show in Figure \ref{fig:2CoherentControlRouted} how it can then be written in a fully explicit way in the language of routed quantum circuits. Figure \ref{fig:2CoherentControlRouted} can thus be seen as a more compact rewriting of Figure \ref{fig:2CoherentControl}\correction{, which contained the additional operations $\cv$  and $\cd$.  The role of these operations was simply to embed  the target systems into suitable sectors.  While in some specific realisations these embeddings may correspond to  non-trivial physical operations, from the information-theoretic point of view they are irrelevant, and they can be completely absorbed into the graphical language of routed circuits.}

\section{Conclusion}

In this work, we showed that sector-preserving channels of type $(1,d)$ are the necessary resource for implementing controlled channels on a $d$-dimensional system. We  demonstrated that sector-preserving channels and  controlled channels  are into one-to-one correspondence, and can be faithfully parametrised by channels with a pinned Kraus operator.  In addition, we showed that this mathematical one-to-one correspondence can be implemented physically: for any given $d$, there exist two universal circuit architectures  that convert sector-preserving channels into controlled channels, and vice-versa. 

In addition to characterising the resources for the standard notion of control, we defined a generalised type of controlled channels, called  compositely-controlled,  in which several states of the control are associated with the `do-nothing' option.  Also in this case, we showed that the controlled channels are in (both mathematical and physical) correspondence with sector-preserving channels, in this case of type $(d,1,\dots,1)$. We also generalised these results to the implementation of coherent control between $N$ channels, and showed that, when these channels are not isometries, such an implementation requires the use of sector-preserving channels of type $(1,d)$ which reduce, on their $d$-dimensional sectors, to isometric extensions  of the channels to be controlled.

The framework of sector-preserving channels provides an information-theoretic underpinning to the existing experimental schemes for the implementation of universal coherent control \cite{zhou2011AddingControl, zhou2013CalculatingEigenvalues, Friis2014, Dunjko_2015, Friis_2015}\correction{, as well as a pathway to the generalisation of such schemes to more complex architectures. Furthermore, it lays down the conceptual and mathematical framework required to analyse and compare the performance of implementations of coherent control as well as the advantages that they yield, e.g. in computation or in communication. As a byproduct, it also motivates new experiments on the realisation of composite control and theoretical investigations of its uses.}



Finally, in order to properly characterise the supermaps we defined, we introduced the notion of supermaps on routed channels, and gave them a rigorous mathematical definition, building on the framework of Ref.\,\cite{vanrietvelde2020routed}. However, the supermaps presented in the present work can always be extended to act on general channels (even though this will make them lose their unitary-preserving property). An interesting open question  is whether there exist  supermaps on routed channels which cannot be extended to act on general channels. \correction{This might in particular prove relevant to the study of Indefinite Causal Order: it has been shown \cite{barrett2020cyclic} that some indefinite causal structures could be investigated using index-matching circuits, a specific type of routed circuits.}

 \section{Acknowledgments}  

It is a pleasure to thank Hlér Kristjánsson and Matt Wilson for helpful discussions, advice and comments. AV is supported by the EPSRC Centre for Doctoral Training in Controlled Quantum Dynamics. This work is supported by the Hong Research Grant Council  through  grant 17300918 and through the Senior Research Fellowship ``Quantum Causal Discovery and the Foundations of
Quantum Artificial Intelligence,'' by the Croucher Foundation,   by  the HKU Seed Funding for Basic Research, and by the John Templeton Foundation through grant  61466, The Quantum Information Structure of Spacetime  (qiss.fr).  Research at the Perimeter Institute is supported by the Government of Canada through the Department of Innovation, Science and Economic Development Canada and by the Province of Ontario through the Ministry of Research, Innovation and Science. The opinions expressed in this publication are those of the authors and do not necessarily reflect the views of the John Templeton Foundation.

\section*{References}
\bibliographystyle{utphys}
\bibliography{references}

\correction{
\appendix{\quad A review of the terminology on coherent control in previous literature}\label{app:terminology}  

The notion of `coherent control' has been studied under several different names in the  literature, which might lead to some confusion. In this appendix, we  provide a review of the different terms previously used, arguing that they all essentially refer to the same notion. We will then motivate the choice of the term `coherent control' employed in this paper.

Coherent control was first considered for unitary gates in the work of Aharonov and coauthors \cite{aharonov1990superpositions}. In this work, controlled unitary gates were used to build what  was called a `superposition of time evolutions'. More precisely, the authors discussed the possibility of implementing evolutions of the form $\sum_j c_j  \, U_j$, where the $U_j$'s are unitary operators, and the $c_j$'s are complex coefficients. It was proven that such an evolution could be realised, for arbitrary $c_j$'s, using auxiliary systems and postselection. The protocol described in Ref.\,\cite{aharonov1990superpositions} consists in realising the controlled unitary gate $\sum_j  \,  |j\rangle\langle j|  \otimes U_j$,  initialising the control system in a superposition state, measuring the control system in a suitable basis, and then postselecting on a specific measurement outcome. 

Another early instance was in the work of Åberg \cite{aaberg2004subspace, aaberg2004operations}, in which some of what would later come to be seen as the crucial features of coherent control were pointed out and analysed under different names. Indeed, Ref.\,\cite{aaberg2004subspace} introduces the concept of so-called subspace-preserving channels, asking how their mathematical form can be obtained from that of their restrictions to each subspace, a procedure called \textit{gluing of completely positive maps}, which is noted to be non-unique. This procedure is a mathematical avatar of the task of coherent control; and, even though the question of physical implementation is not discussed in detail, the comment on the non-uniqueness can be regarded as an  early observation of the ill-definedness of the control between two quantum channels. This ill-definedness is noted to be due to the incompleteness of the description of the channels one wants to glue.  An application of these methods to single-particle interferometry is described in Ref.\,\cite{aaberg2004operations}.

Around the same time, Oi \cite{oi2003interference} studied the \textit{interference of CP maps}, proposing  that the combination of quantum channels in an interferometic setup could reveal additional properties of their physical implementation that are not included in the mathematical expression of quantum channels.   In the light of our results, the ability to probe  additional properties of the implementation is due to the fact that the channels  inserted in the interferometric  setup are not the original channels,  but rather sector-preserving channels of type $(1,d)$ which coincide with them on their $d$-dimensional sector. It is the properties of these sector-preserving channels, not of the original ones, that become visible through interferometry. 

Finally, Chiribella and Kristjánsson \cite{Chiribella_2019} considered {\em superpositions of quantum channels}, in the context of a communication model where the information carriers move on a  superpositions of trajectories.  Even though this paper focused on applications to communication, its framework also yields an implementation of the task of coherent control, as shown by the present paper. In this perspective, superpositions of trajectories represent one of the possible physical implementations of coherent control. 

The term we adopted here, `coherent control' (or sometimes `quantum control', or simply `control'), is  commonly found in both  experimental \cite{zhou2011AddingControl, zhou2013CalculatingEigenvalues, Friis2014, Dunjko_2015, Friis_2015} and theoretical \cite{Araujo2014, Chiribella_2016, gavorova2020topological, dong2020controlled, Abbott_2020, clement2020, branciard2021} works. Consistency with this relatively large body of works is one of the benefits of choosing the term `control'. Moreover, this choice  has the advantage of referring to a clearly defined operational task, rather than to analogies with properties of quantum states (such as `superpositions of quantum evolutions' or `superpositions of quantum channels'), to mathematical procedures (`gluing of CP maps'), to possible phenomena  (`interference of CP maps'), or to specific types of physical implementations (`superpositions of trajectories').

}

\appendix{\quad Parametrising the coherent control between two channels}\label{app:2ControlParam}  

In this Appendix, we prove Theorem \ref{th:2ControlParam}. We fix a Kraus representation $(A_i)_{i=1}^n$ of minimal length of $\ca$. We first prove that any version of a controlled channel between $\ca$ and $\cb$ admits a Kraus decomposition $(K_j)_{j=1}^{m}$, where $m \geq n$, $K_j = {\tt ctrl} \mhyphen   (A_j,B_j)$ for $j \leq n$ and $K_j = {\tt ctrl} \mhyphen   (0,B_j)$ for $j > n$. Let us take such a channel, given by Kraus operators $({\tt ctrl} \mhyphen   (A_i',B_i')_{i=1}^{m}$. The $A_i'$ form a Kraus representation of $\ca$; therefore, $m \geq n$ and there exists an unitary matrix $(V_{ji})_{i,j=1}^m$ such that $\sum_i V_{ji} A_i' = A_j$ for $j \leq n$ and $0$ for $j > n$. Then, $(\sum_j V_{ji} K_i)_{j=1}^m$ is a Kraus representation of the right form for the controlled channel.

We now prove that, given two choices $(B_i)_{i=1}^m$ and $(B_j')_{j=1}^{m'}$ of Kraus representations for $\cb$, the controlled channels that they define are equal if and only if $\forall i \leq n, B_i = B_i'$. First, suppose that the latter equation holds. Then, taking an isometry matrix $(V_{ji})_{n < i \leq m}^{n < j \leq m'}$ relating the Kraus decompositions $(B_i)_{i=n+1}^m$ and $(B_j)_{j=n+1}^{m'}$, we can complete it into a unitary matrix $(V_{ji})_{1 \leq i \leq m}^{1 \leq j \leq m'}$ by taking $\forall i,j \leq n, V_{ji} = \delta_{ji}$; one then has $\forall i,j, \sum_i V_{ji} {\tt ctrl} \mhyphen   (A_i,B_i) = {\tt ctrl} \mhyphen   (A_j,B_j')$. Reciprocally, suppose that the controlled channels defined by the choices $(B_i)_{i=1}^m$ and $(B_j')_{j=1}^{m'}$ are equal. Taking then $(V_{ji})_{1 \leq i \leq m}^{1 \leq j \leq m'}$ to be an isometry matrix relating the associated Kraus decompositions, one has in particular $\forall i,j \leq n, \sum_i V_{ji} A_i = A_j$. Yet, that $(A_i)_{i=1}^n$ is a Kraus representation of minimal length implies in particular that the $A_i$'s are linearly independent; therefore $\forall i,j \leq n, V_{ji} = \delta_{ji}$, which implies $\forall i \leq n, B_i' = B_i$.

\appendix{\quad Control of two noisy channels} \label{app:2ControlGeneral}

In this Appendix, we propose a universal circuit implementation for all possible versions of the control between two noisy channels $\ca$ and $\cb$ from $\cl(\ch_{T_\mr{in}})$ to $\cl(\ch_{T_\mr{out}})$.  \correction{   To avoid clutter, we will take the isomorphisms $T_{\rm in}  \simeq S^1_{\rm in}$ and $T_{\rm out}  \simeq S^1_{\rm out}$ to be strict, that is, as will assume  $T_{\rm in}  = S^1_{\rm in}$ and $T_{\rm out}  = S^1_{\rm out}$.  }

Recall that, as proven in Section \ref{sec:CoherentControl2}, in the case where $\ca$ and $\cb$ are isometric channels the controlled  version could be implemented using as resources sector-preserving channels from $\cl(\ch^0_{S_\mr{in}} \oplus \ch^1_{S_\mr{in}})$ to $\cl(\ch^0_{S_\mr{out}} \oplus \ch^1_{S_\mr{out}})$, where $\ch^1_{S_\mr{out}} := \ch_{S_\mr{out}}$, $\ch^1_{S_\mr{in}} := \ch_{S_\mr{in}}$, and $\ch^0_{S_\mr{in}} \cong \ch^0_{S_\mr{out}} \cong \mathbb{C}$, with these channels restricting respectively to $\ca$ and $\cb$ on $\cl(\ch^1_{S_\mr{in}})$. However, the controlled channels yielded by this method can feature full coherence only between at most one Kraus operator of $\ca$ and one operator of $\cb$.

Here, we shall therefore make use of more complex resources. These resources will be sector-preserving channels whose multi-dimensional output sector will not be $\ch^1_{S_\mr{out}}$, but $\ch^1_{S_\mr{out}} \otimes \ch^1_E$, where $\ch^1_E$ is an auxiliary Hilbert space. The restrictions of these channels to this sector will have to yield $\ca$ and $\cb$ when $E^1$ is traced out. In other words, to get the full scope of controls between $\ca$ and $\cb$ we need to use sector-preserving channels that restrict to (possibly partial) purifications of $\ca$ and $\cb$ on their multi-dimensional sectors. Using such resources, the number of Kraus operators of $\ca$ and $\cb$ between which there can be full coherence in the controlled channel is capped by the dimension of $\ch_E^1$. In particular a sufficiently large $\ch^1_E$ will ensure that all possible controlled channels can be generated.

More formally, we define the supermap ${\tt 2 \mhyphen CTRL(E)}$ from the supermap ${\tt 2 \mhyphen CTRL}$ in the following way:\footnote{Here, we defined this supermap as a routed one (also using the convention of contracting Kronecker deltas) for clarity, but this could also be arbitrarily expanded into a supermap acting on all channels from $\cl(\ch_{S_\mr{in}})$ to $\cl(\oplus_{k\in \{0,1\}} \ch_{S_\mr{out}}^k \otimes \ch_E^k)$. Note that when writing such a non-routed supermap, one would have to write the combination of $S_\mr{out}$ and $E$ as a single wire, as the way in which they combine to form $S_\mr{out}^k E^k$ is not a tensor product and cannot be expressed using standard quantum circuits.}

\be \label{eq:2ControlESupermap} %
\begin{tikzpicture}
	\begin{pgfonlayer}{nodelayer}
		\node [style=none] (0) at (4.75, 3) {};
		\node [style=none] (1) at (4.75, 1.75) {};
		\node [style=none] (2) at (1.75, 1.75) {};
		\node [style=none] (3) at (1.75, -1.75) {};
		\node [style=none] (4) at (4.75, -1.75) {};
		\node [style=none] (5) at (4.75, -3) {};
		\node [style=none] (6) at (-4.75, -3) {};
		\node [style=none] (7) at (-4.75, 3) {};
		\node [style=none] (8) at (2.5, 1.75) {};
		\node [style=none] (9) at (2.5, 0.75) {};
		\node [style=none] (10) at (3.25, -0.75) {};
		\node [style=none] (11) at (3.25, -1.75) {};
		\node [style=none] (13) at (-1.75, 1.75) {};
		\node [style=none] (14) at (-1.75, -1.75) {};
		\node [style=none] (15) at (-4.25, 1.75) {};
		\node [style=none] (16) at (-3.25, -0.75) {};
		\node [style=none] (17) at (-3.25, -1.75) {};
		\node [style=none] (18) at (-4.75, 1.75) {};
		\node [style=none] (19) at (-4.75, -1.75) {};
		\node [style=none] (20) at (-4.25, 0.75) {};
		\node [style=none] (21) at (-1, -3) {};
		\node [style=none] (22) at (-1, -4) {};
		\node [style=right label] (23) at (-1, -3.625) {$C$};
		\node [style=right label] (27) at (2.5, 1) {$S^m_\mr{out}$};
		\node [style=right label] (28) at (3.25, -1.25) {$S^m_\mr{in}$};
		\node [style=right label] (29) at (-4.25, 1) {$S^k_\mr{out}$};
		\node [style=right label] (30) at (-3.25, -1.25) {$S^k_\mr{in}$};
		\node [style=none] (36) at (0, 0) {$2 \mhyphen \tt{CTRL} \mr{(E)}$};
		\node [style=none] (39) at (-1, 3) {};
		\node [style=none] (40) at (-1, 5.5) {};
		\node [style=right label] (41) at (-1, 4.625) {$C$};
		\node [style=none] (42) at (1, -3) {};
		\node [style=none] (43) at (1, -4) {};
		\node [style=right label] (44) at (1, -3.625) {$S^1_\mr{in}$};
		\node [style=none] (45) at (4, 1.75) {};
		\node [style=none] (46) at (4, 0.75) {};
		\node [style=right label] (47) at (4, 1) {$E^m$};
		\node [style=none] (48) at (-2.75, 1.75) {};
		\node [style=none] (49) at (-2.75, 0.75) {};
		\node [style=right label] (50) at (-2.75, 1) {$E^k$};
		\node [style=none] (54) at (1, 3) {};
		\node [style=none] (55) at (1, 5.5) {};
		\node [style=right label] (56) at (1, 4.625) {$S^1_\mr{out}$};
	\end{pgfonlayer}
	\begin{pgfonlayer}{edgelayer}
		\draw (0.center) to (1.center);
		\draw (1.center) to (2.center);
		\draw (2.center) to (3.center);
		\draw (3.center) to (4.center);
		\draw (4.center) to (5.center);
		\draw (5.center) to (6.center);
		\draw (7.center) to (0.center);
		\draw (8.center) to (9.center);
		\draw (10.center) to (11.center);
		\draw (18.center) to (13.center);
		\draw (13.center) to (14.center);
		\draw (14.center) to (19.center);
		\draw (15.center) to (20.center);
		\draw (16.center) to (17.center);
		\draw (18.center) to (7.center);
		\draw (19.center) to (6.center);
		\draw (21.center) to (22.center);
		\draw (39.center) to (40.center);
		\draw (42.center) to (43.center);
		\draw (45.center) to (46.center);
		\draw (48.center) to (49.center);
		\draw (54.center) to (55.center);
	\end{pgfonlayer}
\end{tikzpicture}} \quad\quad := \quad %
\begin{tikzpicture}
	\begin{pgfonlayer}{nodelayer}
		\node [style=none] (0) at (4.25, 3) {};
		\node [style=none] (1) at (4.25, 1.75) {};
		\node [style=none] (2) at (1.25, 1.75) {};
		\node [style=none] (3) at (1.25, -1.75) {};
		\node [style=none] (4) at (4.25, -1.75) {};
		\node [style=none] (5) at (4.25, -3) {};
		\node [style=none] (6) at (-4.25, -3) {};
		\node [style=none] (7) at (-4.25, 3) {};
		\node [style=none] (8) at (2, 1.75) {};
		\node [style=none] (9) at (2, 0.75) {};
		\node [style=none] (10) at (2.75, -0.75) {};
		\node [style=none] (11) at (2.75, -1.75) {};
		\node [style=none] (13) at (-1.25, 1.75) {};
		\node [style=none] (14) at (-1.25, -1.75) {};
		\node [style=none] (15) at (-3.75, 1.75) {};
		\node [style=none] (16) at (-2.75, -0.75) {};
		\node [style=none] (17) at (-2.75, -1.75) {};
		\node [style=none] (18) at (-4.25, 1.75) {};
		\node [style=none] (19) at (-4.25, -1.75) {};
		\node [style=none] (20) at (-3.75, 0.75) {};
		\node [style=none] (21) at (-1, -3) {};
		\node [style=none] (22) at (-1, -4) {};
		\node [style=right label] (23) at (-1, -3.625) {$C$};
		\node [style=right label] (27) at (2, 1) {$S^m_\mr{out}$};
		\node [style=right label] (28) at (2.75, -1.25) {$S^m_\mr{in}$};
		\node [style=right label] (29) at (-3.75, 1) {$S^k_\mr{out}$};
		\node [style=right label] (30) at (-2.75, -1.25) {$S^k_\mr{in}$};
		\node [style=none] (38) at (0, 0) {$\mr{2} \mhyphen \tt{CTRL}$};
		\node [style=none] (39) at (-2, 3) {};
		\node [style=none] (40) at (-2, 5.5) {};
		\node [style=right label] (41) at (-2, 4.625) {$C$};
		\node [style=none] (42) at (1, -3) {};
		\node [style=none] (43) at (1, -4) {};
		\node [style=right label] (44) at (1, -3.625) {$S^1_\mr{in}$};
		\node [style=none] (45) at (3.5, 1.75) {};
		\node [style=none] (46) at (3.5, 0.75) {};
		\node [style=right label] (47) at (3.5, 1) {$E^m$};
		\node [style=none] (48) at (-2.25, 1.75) {};
		\node [style=none] (49) at (-2.25, 0.75) {};
		\node [style=right label] (50) at (-2.25, 1) {$E^k$};
		\node [style=none] (51) at (2, 3) {};
		\node [style=upground] (52) at (2, 4.75) {};
		\node [style=right label] (53) at (2, 3.625) {$E^1$};
		\node [style=none] (54) at (0, 3) {};
		\node [style=none] (55) at (0, 5.5) {};
		\node [style=right label] (56) at (0, 4.625) {$S^1_\mr{out}$};
	\end{pgfonlayer}
	\begin{pgfonlayer}{edgelayer}
		\draw (0.center) to (1.center);
		\draw (1.center) to (2.center);
		\draw (2.center) to (3.center);
		\draw (3.center) to (4.center);
		\draw (4.center) to (5.center);
		\draw (5.center) to (6.center);
		\draw (7.center) to (0.center);
		\draw (8.center) to (9.center);
		\draw (10.center) to (11.center);
		\draw (18.center) to (13.center);
		\draw (13.center) to (14.center);
		\draw (14.center) to (19.center);
		\draw (15.center) to (20.center);
		\draw (16.center) to (17.center);
		\draw (18.center) to (7.center);
		\draw (19.center) to (6.center);
		\draw (21.center) to (22.center);
		\draw (39.center) to (40.center);
		\draw (42.center) to (43.center);
		\draw (45.center) to (46.center);
		\draw (48.center) to (49.center);
		\draw (51.center) to (52);
		\draw (54.center) to (55.center);
	\end{pgfonlayer}
\end{tikzpicture}} \, . \ee

Let us now prove that, for a given choice of $E$, ${\tt 2 \mhyphen CTRL(E)}$ can produce all controlled channels in which the number of coherent pairs of Kraus operators is less than the dimension of $E$.

\begin{theorem}
We fix an environment $E$ with dimension $D$, and use the one-to-one parametrisation of the control between two channels provided by Theorem \ref{th:2ControlParam}: i.e., given a Kraus representation $(A_i)_{i=1}^n$ of $\ca$ of minimal length, the parametrisation is given by the choice of $n$ Kraus operators $B_i$ of $\cb$.

Then any choice of a control in which only the $D$ first operators $B_i$ are non-zero can be obtained from the use of the ${\tt 2 \mhyphen CTRL(E)}$ supermap.
\end{theorem}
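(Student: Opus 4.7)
My plan is to construct, from the specified control, a pair of isometric sector-preserving channels of type $(1\to 1, d_\mr{in} \to d_\mr{out} D)$ whose restrictions to the multi-dimensional sector are Stinespring dilations of $\ca$ and $\cb$ into $\ch_{T_\mr{out}} \otimes \ch_E$, with the crucial feature that the two dilations share a common Kraus basis indexed by the basis of $\ch_E$. Applying Theorem \ref{th:2ControlSupermap} to this pair yields the controlled channel between the two dilating isometries, and discarding $\ch_E$ at the output recovers the target noisy controlled channel, exactly matching the parametrisation of Theorem \ref{th:2ControlParam}.

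Concretely, I fix an orthonormal basis $\{|1\rangle_E, \ldots, |D\rangle_E\}$ of $\ch_E$. In the natural regime $D \ge n$ (to which this argument is tailored), I extend the minimal Kraus representation $(A_i)_{i=1}^n$ of $\ca$ to length $D$ by setting $A_i := 0$ for $n < i \le D$. The non-zero operators $(B_i)_{i=1}^D$ of $\cb$ form a complete Kraus representation by the parametrisation hypothesis, so $\sum_{i=1}^D B_i^\dag B_i = I$. These ingredients let me define
\[ V_\ca := \sum_{i=1}^D A_i \otimes |i\rangle_E, \quad V_\cb := \sum_{i=1}^D B_i \otimes |i\rangle_E, \]
which are isometries from $\ch_{T_\mr{in}}$ to $\ch_{T_\mr{out}} \otimes \ch_E$ by the two normalisation identities. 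I then associate to $V_\ca$ and $V_\cb$ the unique isometric sector-preserving channels $\widetilde{\ca'}$ and $\widetilde{\cb'}$ via the isometric generalisation of Corollary \ref{cor:UnitaryCorrespondence}.

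Theorem \ref{th:2ControlSupermap} then gives ${\tt 2 \mhyphen CTRL}[\widetilde{\ca'} \otimes \widetilde{\cb'}] = {\tt ctrl}\mhyphen(V_\ca, V_\cb)$, so only the final partial trace over $\ch_E$ remains to verify. A direct block-by-block computation reproduces $\ca$ on the $|0\rangle\langle 0|$ control block, $\cb$ on the $|1\rangle\langle 1|$ block, and $\sum_{i=1}^n A_i \rho B_i^\dag$ on the off-diagonal block; equivalently, expanding $|0\rangle\langle 0| \otimes V_\ca + |1\rangle\langle 1| \otimes V_\cb$ in the basis $\{|i\rangle_E\}$ immediately reads off the Kraus operators $K_i = {\tt ctrl}\mhyphen(A_i, B_i)$ of the target channel, in line with the parametrisation of Theorem \ref{th:2ControlParam}.

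The only step requiring care, and the main conceptual obstacle, is the matching of Kraus bases between $V_\ca$ and $V_\cb$: using mismatched bases $\{|e_i\rangle\}$ and $\{|f_j\rangle\}$ for the two dilations would produce a spurious $\sum_{i,j} \langle f_j|e_i\rangle A_i \rho B_j^\dag$ on the off-diagonal block, destroying the pairing $(A_i, B_i)$ dictated by the parametrisation. Everything else reduces to routine verification of the three control blocks.
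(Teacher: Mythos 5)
Your overall strategy---lift $\ca$ and $\cb$ to dilations into $\ch_{T_\mr{out}}\otimes\ch_E$ with a matched Kraus basis, apply the isometric control supermap, and trace out $E$---is the right general idea and is in the spirit of the paper's proof, but two of your assumptions are not warranted by the hypotheses, and they break the argument in the general case. First, you assert that the chosen operators $(B_i)_{i=1}^D$ ``form a complete Kraus representation by the parametrisation hypothesis,'' so that $\sum_{i=1}^D B_i^\dag B_i = I$. Theorem \ref{th:2ControlParam} only requires the $B_i$ to be $n$ operators appearing together in \emph{some} Kraus representation of $\cb$; in general they must be completed by further nonzero operators $B_{D+1},\dots,B_m$. (Take $\ca=\map I$ on a qubit, $\cb:\rho\mapsto\tfrac12(\rho+Z\rho Z)$ with Kraus operators $I/\sqrt2$ and $Z/\sqrt2$, and $B_1=I/\sqrt2$: then $n=D=1$ but $B_1^\dag B_1=I/2\neq I$.) Consequently your $V_\cb$ is not an isometry, Theorem \ref{th:2ControlSupermap} does not apply to it, and the map you obtain after discarding $E$ is not trace-preserving---it is missing precisely the Kraus operators $\ketbra{1}{1}\otimes B_j$ contributed by the completion. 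Second, you restrict to $D\ge n$; when $D<n$ (which the theorem must cover, since that is the regime where the cap on the number of coherent pairs has content) there is no room in $\ch_E$ for a full Stinespring dilation of $\ca$, so the construction cannot get started.

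The paper avoids both problems by using \emph{partial} purifications rather than full isometric dilations: it groups the $D$ coherent pairs into a single Kraus operator $\sum_{i=1}^D\big(\ketbra{0}{0}\otimes A_i+\ketbra{1}{1}\otimes B_i\big)\otimes\ket{i}_E$ and keeps all remaining operators---the uncoupled $A_i$ with $i>D$ \emph{and} the completion operators of $\cb$---as separate Kraus operators tagged with $\ket{1}_E$. The lifted channels $S^1_\mr{in}\to S^1_\mr{out}E^1$ are then genuinely noisy, and one invokes the $D=1$ case of ${\tt 2\mhyphen CTRL}$ for general (non-isometric) sector-preserving channels with a pinned Kraus operator, after which discarding $E$ recovers the target. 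If you add the hypotheses that $D$ bounds the Kraus ranks of both channels and that $(B_i)_{i\le D}$ is a complete Kraus representation of $\cb$, your argument is correct and amounts to the maximal-$D$ instance of the paper's construction; as written, however, it does not prove the stated theorem.
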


\begin{proof}
In the case $D=1$ (i.e. that of the ${\tt 2 \mhyphen CTRL}$ supermap), it can easily be computed, from the formula of Fig.\,\ref{fig:2CoherentControl}, that any controlled version in which there is coherence between $A_1$ and $B_1$ can be obtained by plugging the channels $\widetilde{\ca}^{A_1}$ and $\widetilde{\cb}^{B_1}$ in ${\tt 2 \mhyphen CTRL}$.

Considering now the case $D > 1$, let us take a version $\cc$ of a control between $\ca$ and $\cb$ for which a Kraus representation is $\Big( \ketbra{0}{0}_C \otimes A_1 + \ketbra{1}{1}_C \otimes B_1, \dots, \ketbra{0}{0}_C \otimes A_D + \ketbra{1}{1}_C \otimes B_D, \ketbra{0}{0}_C \otimes A_{D+1}, \dots, \ketbra{0}{0}_C \otimes A_{n}, \ketbra{1}{1}_C \otimes B_{D+1}, \dots, \ketbra{1}{1}_C \otimes B_{m} \Big)$. Then a (possibly partial) purification of $\cc$ is given by the channel of type $C S^1_\mr{in} \to C S^1_\mr{out} E$ for which a Kraus representation is $\Big( \sum_{i=1}^D \big( \ketbra{0}{0}_C \otimes A_i  + \ketbra{1}{1}_C \otimes B_i \big) \otimes \ket{i}_E, \ketbra{0}{0}_C \otimes A_{D+1} \otimes \ket{1}_E, \dots, \ketbra{0}{0}_C \otimes A_{n} \otimes \ket{1}_E, \ketbra{1}{1}_C \otimes B_{D+1} \otimes \ket{1}_E, \dots, \ketbra{1}{1}_C \otimes B_{m} \otimes \ket{1}_E \Big)$. This latter channel can be seen as being a version of a control between two channels $S_\mr{in}^1 \to S_\mr{out}^1 E^1$ with coherence between one pair of Kraus operators. By the first part of the proof, it can thus be obtained by applying the ${\tt 2 \mhyphen CTRL}$ supermap to suitable sector-preserving channels of type $S_\mr{in}^k \to S_\mr{out}^k E^k$. Discarding $E^1$ then yields $\cc$. The ${\tt 2 \mhyphen CTRL(E)}$ as defined in (\ref{eq:2ControlESupermap}) thus yields $\cc$ when applied to the same sector-preserving channels.
\end{proof}

In particular, as any channel $S_\mr{in}^1 \to S_\mr{out}^1$ admits a Kraus representation of length less than the product of the dimensions of $S_\mr{in}^1$ and $S_\mr{out}^1$, all versions of controlled channels can be obtained from the use of the supermap ${\tt 2 \mhyphen CTRL(E)}$ when $E$ is of that dimension.

\end{document}